\def\l@subsubsection#1#2{}
\providecommand{\U}[1]{\protect\rule{.1in}{.1in}}
\newtheorem{theorem}{Theorem}
\newtheorem{corollary}[theorem]{Corollary}
\newtheorem{definition}[theorem]{Definition}
\newtheorem{example}[theorem]{Example}
\newtheorem{proposition}[theorem]{Proposition}
\newenvironment{proof}[1][Proof]{\noindent\textbf{#1.} }{\ \rule{0.5em}{0.5em}}
\newcommand{\bes} {\begin{subequations}}
\newcommand{\ees} {\end{subequations}}
\newcommand{\bea} {\begin{eqnarray}}
\newcommand{\eea} {\end{eqnarray}}
\newcommand{\beq}{\begin{equation}}
\newcommand{\eeq}{\end{equation}}
\def\>{\rangle}
\def\<{\langle}
\def\Tr{\textrm{Tr}}
\renewcommand{\min}{\textrm{min}}
\newcommand{\ignore}[1]{}
\definecolor{nblue}{rgb}{0.2,0.2,0.7}
\definecolor{ngreen}{rgb}{0.2,0.6,0.2}
\definecolor{nred}{rgb}{0.7,0.2,0.2}
\definecolor{nblack}{rgb}{0,0,0}
\begin{document}

%\title{The resource theory of quantum coherence}
%\title{Resource theories of coherence}\
%\title{How to understand quantum coherence as a resource: a comparative study of three approaches}
%\title{How to understand quantum coherence as a resource : the importance of distinguishing speakable and unspeakable notions}
%\title{Distinguishing different notions of coherence as a resource}
%\title{Distinguishing speakable and unspeakable notions of quantum coherence is critical for understanding it as a resource}
%\title{Dinstinguishing speakable and unspeakable notions of quantum coherence and understanding them as resources}
%\title{Understanding quantum coherence as a resource: distinguishing speakable and unspeakable notions}
%\title{Understanding quantum coherence as a resource: speakable versus unspeakable coherence}
\title{How to quantify coherence: 
Distinguishing speakable and unspeakable notions}

\author{Iman Marvian}
\affiliation{Research Laboratory of Electronics, Massachusetts Institute of Technology, Cambridge, MA 02139}
\affiliation{Department of Physics and Astronomy, Center for Quantum Information Science and Technology, University of Southern California, Los Angeles, CA 90089}
\author{Robert W. Spekkens}
\affiliation{Perimeter Institute for Theoretical Physics, 31 Caroline St. N, Waterloo, \\
Ontario, Canada N2L 2Y5}

\date{\today}

\begin{abstract}
Quantum coherence is a critical resource for many operational tasks. Understanding how to quantify and manipulate it also promises to have applications for a diverse set of problems in theoretical physics.  For certain applications, however, one requires coherence between the eigenspaces of specific physical observables, such as energy, angular momentum, or photon number, and it makes a difference {\em which} eigenspaces appear in the superposition.  For others, there is a preferred set of subspaces relative to which coherence is deemed a resource, but it is irrelevant which of the subspaces appear in the superposition.  We term these two types of coherence unspeakable and speakable respectively. We argue that a useful approach to quantifying and characterizing {\em unspeakable} coherence is provided by the resource theory of asymmetry when the symmetry group is a group of translations, and we translate a number of prior results on asymmetry into the language of coherence.  
We also highlight some of the applications of this approach, for instance, in the context of quantum metrology, quantum speed limits, quantum thermodynamics, and NMR.  The question of how best to treat {\em speakable} coherence as a resource is also considered.  We review a popular approach in terms of operations that preserve the set of incoherent states, propose an alternative approach in terms of operations that are covariant under dephasing, and we outline the challenge of providing a physical justification for either approach. 
Finally, we note some mathematical connections that hold among the different approaches to quantifying coherence.

%\color{blue}
%There are two slightly different proposals in the literature for quantification and classification of coherence as a resource, and sometimes they have been confused with each other. The first proposal defines coherence as asymmetry relative to a translational symmetry, such as phase shift or time translation, while the second proposal defines coherence as the resource under \emph{incoherent operations}, operations which map incoherent states to incoherent states. Free operations in the first resource theory are a proper subset of the free operations in the second resource theory, and consequently, any measure of coherence in the second resource theory is also a  measure of coherence in the first resource theory, i.e. it is also a measure of asymmetry. On the other hand,  there are measures of asymmetry, such as the Wigner-Yanase-Dyson skew information, which can increase under incoherent operations, and so they are not measure of coherence in the second proposal.  In this paper we review these two resource theories of coherence, clarify their differences and discuss about their operational interpretations and applications in the contexts of quantum metrology, reference frames, quantum thermodynamics and quantum control. Furthermore, we provide a novel interpretation of quantum speed limits  in terms of the notion of coherence in the first proposal. 
%\color{black}
\end{abstract}

\maketitle

\tableofcontents

\section{Introduction and summary}

%Coherence is important. Blah Blah.(See also  \cite{yadin2014new})

%Resource theory approach  has seen a great deal of success in characterizing entanglement \cite{}, asymmetry \cite{gour2008resource,GMS09, MS11, MS_Short,   Modes, Marvian_thesis, Noether,  GConcurrence, skotiniotis2012alignment, narasimhachar2014phase}, and athermality \cite{}. 

\color{black}
%To take a resource-theoretic approach to a given quantum phenomenon is to try to and understand it 
Many properties of quantum states can be better understood by considering them as constituting a resource~\cite{coecke2014mathematical}.  The properties of entanglement \cite{horodecki2009quantum,Vedral:98}, asymmetry \cite{gour2008resource,GMS09, MS11, MS_Short,   Modes, Marvian_thesis, Noether,  GConcurrence, skotiniotis2012alignment, narasimhachar2014phase}, and athermality \cite{brandao2015second, brandao2013resource, gour2015resource, skrzypczyk2014work,  horodecki2013fundamental, aaberg2013truly} are good examples.  We are here concerned with the property of having coherence relative to some decomposition of the Hilbert space.  This property appears to 
%Quantum coherence is a property of quantum states that seems to 
be necessary for certain types of tasks, and as such it is natural to attempt to understand coherence from the resource-theoretic perspective.  This has led to some proposals for how to define a resource theory of coherence and in particular how to quantify coherence \cite{Coh_Plenio, Noether, marvian2015quantum, yadin2015quantum, yadin2015general}.  
%Why do we believe that it might be useful 

The following is a list of operational tasks for which quantum coherence seems to be a resource.
%What are the tasks to which a resource of coherence is put?  
%\begin{enumerate}
\begin{itemize}
\item 
%Quantum techniques for high precision metrology. 
Quantum metrology. An example is the task of estimating the phase shift on a field mode, used in quantum accelerometers and gravitometers.  Here one requires the ability to prepare and measure coherent superpositions of different occupation numbers of the mode. Another example is estimating the rotation of a quantum gyroscope about some axis, where one must be able to prepare and measure coherent superpositions of eigenstates of the angular momentum operator along that axis, a problem that is relevant for developing high precision measurements of magnetic field strength \cite{ajoy2012stable}. A third example is building high precision clocks, where one must be able to prepare and measure coherent superpositions of energy eigenstates \cite{QMetrology, Lloyd_Nature_Clock, giovannetti2011advances, schnabel2010quantum}. 
\item Reference frame alignment.
 %using quantum communication.  
 Examples include aligning distant gyroscopes, synchronizing distant clocks, and phase-locking distant phase references.  Each example requires communicating quantum states that carry the appropriate sort of information (orientation, time, and phase, for instance) and therefore, like the metrology examples, requires coherence relative to the appropriate eigenspaces \cite{QRF_BRS_07}.
%An example is the task of synchronizing distant clocks by the communication of quantum states carrying timing information, which requires a coherent superposition of energy eigenstates.   Another example is that of aligning distant gyroscopes by the communication of quantum states carrying directional information,  [citations]
\item 
%Quantum techniques for thermodynamic tasks. 
Thermodynamic tasks. An example is the task of extracting as much work as possible from a given quantum state given a bath at some fixed background temperature. This require states that are not in thermal equilibrium at the background temperature.  Resources here include not only those states having a nonthermal distribution of energy eigenstates, but also those that have coherence between the energy eigenspaces \cite{lostaglio2015quantum, lostaglio2015description, janzing2000thermodynamic, cwiklinski2015limitations}.  
%, which implies that latter sort of state is a thermodynamic resource. 
\item Computational, cryptographic and communication tasks.  
%Here, there is a computational basis in which preparations and measurements are implemented. 
For these sorts of tasks, it is well known that having access only to preparations and measurements that are all diagonal in some basis, hence incoherent, is not sufficient for achieving any quantum advantage.  So it is natural to seek to study such coherence as a resource.  
%\item Quantum key distribution. [citations]
%\end{enumerate}
\end{itemize}

%\color{red} [Athough quantum control gives a nice story about when the symmetric operations are easier than the asymmetric ones, it isn't clear how a coherent state is a resource in that context, so I've left it off the list for now.  Also, in what sense is coherence a resource in the study of quantum speed limits?  Is it a resource for rapid state evolution?]

For many resource theories, there are also  applications to problems in theoretical physics.  For example, while entanglement theory was originally developed through its role as a resource in operational tasks such as quantum teleportation \cite{bennett1993teleporting} and dense coding \cite{bennett1992communication}, the possibility of quantifying entanglement has since found applications in diverse problems, including the study of phase transitions, characterizing the ground states of many-body systems \cite{latorre2003ground, kitaev2006topological, vidal2003entanglement, marvian2013symmetry}, holography in quantum field theories \cite{ryu2006holographic}, and the black hole information-loss paradox \cite{almheiri2013black, verlinde2012black, hayden2007black}. 
%many-body physics, the study of phase transitions, and the black hole information loss paradox.  
Similarly, the possibility of quantifying coherence is expected to shed light on various problems in theoretical physics.  The following are a few examples.

\begin{itemize}
%\item{\bf Noether's theorem}
\item{Quantum speed limits.}  The Mandelstam-Tamm bound  \cite{QSL_MT}  and the Margolus-Levitin bound  \cite{Margolus:98} are upper bounds on the minimum time it takes for a system in some state to evolve to a (partially) distinguishable state.  This time is clearly related to the amount of coherence between energy eigenstates and therefore quantifying this coherence can shed light on quantum speed limits~\cite{marvian2015quantum, mondal2016quantum}. 
%It turns out that  the notion of coherence as translational asymmetry  provides a new interpretation of quantum speed limits, i.e., the Mandelstam-Tamm  \cite{QSL_MT} bound and the Margolus-Levitin bound  \cite{Margolus:98}. As explained in Ref.~\cite{Marvian2015}, the minimum time it takes for the system to evolve to a (partially) distinguishable state is the inverse of a particular measure of translational asymmetry (hence a particular measure of unspeakable coherence). Furthermore, the standard quantum speed limits can be interpreted as upper bounding this measure of coherence with another measure of coherence. 
\item{Magnetic resonance techniques.}  For such techniques, in particular in NMR,  if the system one is probing consists of many spins, then the large dimension of the Hilbert space together with constraints on the measurements are such that full tomography is not possible.  Still, one can obtain much useful information about the state by measuring the degree of coherence relative to  the quantization axis \cite{cappellaro2014implementation}. If the system is quantized along the $\hat{z}$ axis,  then  \emph{coherence of order} $q$  of the state $\rho$ is defined as the norm of the sum of the off-diagonal terms $\rho_{m_1m_2}|m_1\rangle\langle m_2|$ with $m_2-m_1=q$, where $|m\rangle$ is the eigenstate with eigenvalue $m$ of  $J_{\hat{z}}$, the magnetic moment in the $\hat{z}$ direction \cite{cappellaro2014implementation}\footnote{This concept is closely related to the idea of decomposition into modes of asymmetry, introduced in Refs.~\cite{Marvian_thesis, Modes} (See Sec.\ref{Modes} for a short review). In particular, 
the \emph{$q$th-quantum coherence component} in the language of \cite{cappellaro2014implementation} is the same as the \emph{mode $q$ component} in the language of \cite{Marvian_thesis, Modes}. Furthermore, the Frobenius norm of the \emph{$q$th-quantum coherence component}, which can be measured in NMR experiments, provides lower and upper bounds on a measure of coherence studied in  \cite{Marvian_thesis, Modes}, which quantifies the asymmetry of a state in mode $q$ (See Eq.(\ref{modeNMR})).}. Measuring the quantum coherence of different orders is relatively straightforward and has been useful in many NMR experiments, in particular, in the context of quantum information processing, as well as in simulations of many-body dynamics (See e.g. \cite{cappellaro2007simulations, cho2006decay, cappellaro2014implementation}).
\color{black}
\item{Coherence lengths.} The spatial extent over which a quantum state is coherent is an important concept in many-body physics~\cite{Frerot2015}, for instance, in the onset of Bose-Einstein condensation~\cite{ketterle1997coherence},  and in quantum biology, for instance, in excitation transport in photosynthetic complexes \cite{levi2015quantum, mohseni2008environment, rebentrost2009environment}. 
% The onset of Bose-Einstein condensation, for instance, is also related to a measure of such coherence \cite{ketterle1997coherence}.
\color{black}
\item{Order parameters.} Quantum phase transitions in the ground states of quantum many-body systems, such as a spin chain, can be studied in terms of the degree of coherence contained in local reductions of the state, such as single-spin or two-spin, density operators \cite{Karpat2014, malvezzi2016quantum}.
\item{Decoherence theory.}  It is well known that interaction of a system with its environment can lead to the loss of coherence relative to preferred subspaces that depend on the nature of the interaction \cite{joos2013decoherence}.  For instance, if an environment couples to the spatial degree of freedom of a system, then it will reduce the spatial extent over which the system exhibits coherence \cite{joos2013decoherence}.  Such decoherence plays a significant role in many accounts of the emergence of classicality.  Measures of coherence, therefore, can be used as a tool for studying such emergence.
%to formalize  Classical limit. Spatial decoherence clearly cares about whether two position eigenstates are contiguous or not.
%The order of eigenstates is important therefore it is unspeakable information. 
\end{itemize}
\color{black}

%Which task one is interested in may determine what sort of coherence is a resource. 
It is critical, however, to distinguish two types of coherence that arise in these various applications.
% that might serve as a resource.
The distinction can be explained as follows.
%To explain the distinction, consider the following pair of states 
Consider the states
\begin{equation}
|\psi\rangle=\frac{|0\rangle+|1\rangle}{\sqrt{2}}\ \ \text{and}\ \  |\phi\rangle=\frac{|0\rangle+|2\rangle}{\sqrt{2}}\ .
\end{equation}
If we are interested in quantum computation using qutrits and the elements of the set $\{| \l\rangle \}_{l \in \{0,1,2\}}$
%$\{ |0\rangle,|1\rangle,|2\rangle\}$
 are the computational basis states, then we would expect the states $|\psi\rangle$ and $|\phi\rangle$ to be equivalent resources because
 %both involve the same type of superposition and 
 the particular identities of the computational basis states appearing in the superposition are not relevant for any computational task.
 %involved are not relevant in a computation. 
  In this case, $l$ is simply an arbitrary label or \emph{flag} for different distinguishable pure states. 
  %It is merely a convenience to label the computational basis states by an integer; any set of distinguishable labels would suffice. 
If, on the other hand, we are considering a phase estimation task and the elements of the set $\{| \l\rangle \}_{l \in \{0,1,2\}}$ are eigenstates of the number operator, then there is a significant difference between the states $|\psi\rangle$ and $|\phi\rangle$.  For instance, $|\psi\rangle$ can detect a phase shift of $\pi$ while $|\phi\rangle$ cannot.  Conversely, if one's task is to estimate a very small phase shift, then $|\phi\rangle$ is a better resource than  $|\psi\rangle$, because the former becomes more orthogonal to itself than the latter under a small phase shift. Similarly, starting from the incoherent state  $|0\rangle$, to prepare a state close to $|\phi\rangle$
%$|\phi\rangle=(|0\rangle+|2\rangle)/{\sqrt{2}}$ 
one needs to have access to a phase reference with higher precision than the phase reference required to prepare a state close to $|\psi\rangle$
%$|\psi\rangle=(|0\rangle+|1\rangle)/{\sqrt{2}}$ 
\cite{Modes}.   
%Here, $l$ is not an arbitrary label.  Rather, its integer value is relevant for the task of phase estimation.   
Here, $l$ is not an arbitrary label, but an eigenvalue of the number operator, and its value is relevant for the task of phase estimation.

These two types of coherence pertain to two types of information which have been termed 
%This difference in types of coherence parallels a difference between types of information, namely, the difference between 
{\em speakable} and {\em unspeakable}~\cite{QRF_BRS_07, peres2002unspeakable}.   Speakable information is information for which the means of encoding is irrelevant.  This is exemplified by the fact that if one seeks to transmit a bit-string, it is irrelevant what degree of freedom one uses to encode the bits.  Unspeakable information is information which can only be encoded in certain degrees of freedom. Information about orientation, for instance, is unspeakable because it can only be transmitted using a system that transforms nontrivially under rotations.  Information about time is also unspeakable because it can only be transmitted by a system that transforms nontrivially under time-translations.  We shall therefore refer to the two types of coherence we have outlined above as speakable and unspeakable respectively.

\begin{comment}
%One might summarize 
This difference in types of coherence concerns 
%is perhaps best understood in terms of 
whether the labels of the superposed states are part of a metric space or not.  We shall therefore refer to coherence between states where the labels need not be part of a metric space as {\em speakable coherence}, while we refer to coherence between states where the labels are part of a metric space as {\em unspeakable coherence}.
\end{comment}

%It turns out that in many of the tasks for which coherence is thought to be a resource, it is {\em unspeakable coherence} that is at play.  Indeed, for the list of tasks we have provided above, in each case except for the last, it is unspeakable coherence that is relevant for the task.  
%For the list of tasks we have provided above, the relevant notion of coherence is the unspeakable one in all cases except for the last.  
For the list of operational tasks we have provided above, the relevant notion of coherence is the unspeakable one in all cases except for the last item. This is also the case for most of the physical applications listed above. 
 This is because from the point of view of speakable coherence, the eigenvalues of the observable that defines the preferred  subspaces are not relevant: the set of preferred subspaces is a set without any order. 
%This is because speakable coherence does not care about the eigenvalues of the observable that defines the preferred  subspaces, and from its point of view preferred subspaces are just a set without any order. 
%This is because from the point of view of speakable coherence, there is no order on the elements of the preferred basis, and therefore the eigenvalues of the observable which defines the preferred basis are completely irrelevant. 
%It follows that in contexts such as 
However, for the examples of quantum speed limits, coherence lengths and magnetic resonance, for instance, it is clear that the eigenvalues of the relevant observable, the Hamiltonian, position, and magnetic moment observables respectively,  has important physical meaning, and there is a natural order defined on the preferred subspaces. Therefore, the notion of unspeakable coherence seems to be the more appropriate one in these cases.
  \color{black}

\color{black}

%\color{red} [Should we note here that the relevant distinction among the two types of tasks is whether they deal with fungible/spekable information or nonfungible/unspeakable information?  Would it make sense to refer to the two notions of coherence as  {\em fungible} and {\em nonfungible} or {\em speakable} and {\em unspeakable} rather than unspeakable and speakable?  Regarding the latter terminology, is it appropriate to use the term ``unspeakable'' even if one is taking about, for instance, the integers modulo $n$ (i.e. when the symmetry group is a discrete cyclic group)?] \color{black}

Most recent work on coherence as a resource, however, considers only speakable coherence.
%a speakable notion of coherence.  
One might think, therefore, that there is work to be done in defining a resource theory of {\em unspeakable} coherence.
%for the unspeakable notion of coherence.   
 In fact, however, such a resource theory {\em already exists}.  It simply goes by another name: the resource theory of {\em asymmetry}.\footnote{In early work on the topic, which emphasized the role of asymmetric quantum states in defining quantum reference frames, the resource theory of asymmetry was termed the resource theory of {\em frameness}~\cite{gour2008resource}.}  To be precise, the resource of unspeakable coherence is nothing more than the resource of 
asymmetry
 %symmetry-breaking, or {\em asymmetry}, 
 relative to a group of translations.

\subsection{Resource-Theoretic approach to unspeakable coherence}

Consider the task of phase estimation as an example.  A state of some field mode has coherence relative to the eigenspaces of the number operator $N$ if and only if it is asymmetric (i.e., symmetry-breaking) relative to the group of phase-shifts generated by $N$, where such translations of the phase are represented by the group of unitaries $\{e^{-i N \theta}:\ \theta\in(0,2\pi]\}$.   This follows from the fact that if a state $\rho$ is symmetric under phase-shifts, that is, $\forall\theta\in(0,2\pi]: e^{-i N \theta} \rho e^{i N \theta}  = \rho$, then it must be block-diagonal with respect to the eigenspaces of $N$, while if it is not symmetric under phase-shifts, then it cannot have this form. 

%In a similar fashion, 
Other examples are treated in a similar fashion\footnote{Throughout this article, we use units where $\hbar=1$.}:
coherence relative to the eigenspaces of a Hamiltonian $H$ is simply asymmetry relative to the group of time-translations generated by this Hamiltonian, $\{e^{- i H t}:\ t\in\mathbb{R}\}$; coherence relative to the eigenspaces of the momentum operator $P$ is simply asymmetry relative to the group of spatial translations, $\{e^{- i P x}:\ x\in\mathbb{R} \}$; coherence relative to the eigenspaces of the angular momentum operator $J_z$ is simply asymmetry relative to the group of rotations around $\hat{z}$, $\{e^{-i J_z \theta}:\ \theta\in(0,2\pi]\}$.
%; and coherence relative to the eigenspaces of the number operator $N$ for some field mode is simply asymmetry relative to the group of phase-shifts for that mode, $\{e^{-i N \theta}:\ \theta\in(0,2\pi]\}$.

 Any resource theory must not only partition the states into those that are resources and those that can be freely prepared at no cost, it must also partition the {\em operations} into those that are resources and those that can be freely implemented at no cost.  The free set of operations is required to be closed under composition and convex combination \cite{coecke2014mathematical}. In entanglement theory, for instance, not only are the states partitioned into those that are unentangled, hence free, and those that are entangled, hence resources, but the operations are also partitioned into those that can be achieved by Local Operations and Classical Communications (LOCC), which are deemed to be free, and those which cannot, which are deemed to be resources.
 %in the entanglement theory, where unentangled states are \emph{free states}, Local Operations and Classical Communications (LOCC) form the set of \emph{free operations}. 
 %These are states and operations which are \emph{easy} or \emph{allowed} to prepare and implement under a practical or fundamental constraint.  

If one considers 
%the restriction on {\em operations} that characterizes 
each of the tasks for which unspeakable coherence is a resource, one sees that the freely-implementable operations are those that are covariant under translations, that is, for which first translating and then implementing the operation is equivalent to first implementing the operation and then  translating (See Def. \ref{defnTI}).
For instance, in the task of reference frame alignment, the set-up of the problem is that there are two parties, each of which has a local reference frame (e.g. a gyroscope, a clock, a phase reference), but the group element that relates these two frames (e.g. the rotation, the time translation, the phase shift) is unknown.   It is not difficult to show that the operations that one party can implement relative to the other party's reference frame are precisely those that are covariant under the group action \cite{QRF_BRS_07, MS11, MS_Short, Marvian_thesis}.  In fact, there are {\em many} ways of providing a physical justification of the translationally-covariant operations, and we shall review these at length further on.

These considerations imply that the problem of quantifying and classifying unspeakable coherence can be considered a special case of the resource theory of asymmetry where the group under consideration describes a translational symmetry.   (The resource theory of asymmetry is more general than this, however, because it is also capable of dealing with non-Abelian groups where asymmetry does not simply correspond to the existence of coherence between some preferred set of subspaces.)

%The notion that the resource of coherence should be understood as asymmetry relative to the action of a translational symmetry and that the free operations defining the resource theory are the translationally-covariant ones was first proposed in Ref.~\cite{GMS09} and developed in Refs.~\cite{Modes, Noether}  and has been adopted also in the study of quantum thermodynamics (see, e.g., \cite{lostaglio2015quantum,yang2015optimal}) and quantum speed limits \cite{marvian2015quantum}.

The notion that the resource of coherence should be understood as asymmetry relative to the action of a translational symmetry and that the free operations defining the resource theory are the translationally-covariant ones was first proposed in Ref.~\cite{Noether} and developed in Appendix A of Ref.~\cite{Modes} and in Ref.~\cite{marvian2015quantum} (See also \cite{yadin2015general}). 
%(1) in *both* the modes paper (appendix A) and the QSL paper, we note that coherence can be understood as asymmetry, 
%(2) In the Noether paper, we note that coherence can be understood as asymmetry and we highlight several applications of unspeakable coherence.} and developed in Refs.~\cite{Modes, Noether}. 

This connection implies that most questions about unspeakable coherence as a resource find their answers in prior work on the resource theory of asymmetry.  It suffices to specialize known results to the particular translational symmetry of interest.  One of the goals of this article is to explicitly translate some of these known results from the language of asymmetry to the language of coherence, to describe the measures of coherence that result, and to review some of the applications. 
%the surprising breadth of applications of the resource theory of translational asymmetry.

This approach to coherence has already been applied to shed light on the various applications of unspeakable coherence outlined above: quantum metrology \cite{Modes,Noether}, aligning reference frames \cite{QRF_BRS_07,MS11}, quantum thermodynamics \cite{lostaglio2015quantum,yang2015optimal,Aberg2014, janzing2000thermodynamic, cwiklinski2015limitations}, quantum speed limits \cite{marvian2015quantum, mondal2016quantum}.
%The method  of mode decompositions has been used in Refs.~\cite{lostaglio2015quantum} and \cite{lostaglio2015description} to study coherence in the context of quantum thermodynamics.  Also note that  \emph{Catalytic Coherence} discussed in \cite{Aberg2} uses the notion of coherence as translational asymmetry.
Furthermore, it was shown in Ref.~\cite{Noether} that for symmetic open-system dynamics, measures of asymmetry are monotonically nonincreasing, thereby yielding a significant generalization of Noether's theorem. 
 Translated into the language of coherence, this result states that for open system dynamics that is translationally-covariant, every measure of coherence which is derived within the translational-covariance approach to coherence provides a monotone of the dynamics.  Such measures, therefore, provide a powerful new tool for studying decoherence.
% For open system dynamics that have a translational symmetry, therefore, measures of coherence which are based on translational asymmetry, also play the role of monotones in this generalized Noether's theorem. [Connect back to decoherence theory.]
% \color{red} 
%[It might be good to go into more detail about these applications after the "measures of coherence" section, but I'm not so interested in writing such a section.] 
%\color{black}

% to tasks that use coherence. 

\subsection{Resource-Theoretic approaches to speakable coherence}

The second topic we address in this work is whether and how one can develop a resource theory of {\em speakable} coherence. 

When one considers recent work on quantifying coherence from the perspective of the 
%unspeakable/speakable
speakable/unspeakable distinction, it is clear that it concerns itself only with the 
speakable notion (See e.g. \cite{Coh_Plenio, winter2015operational, streltsov2015genuine, girolami2015witnessing, ma2015converting, yadin2015quantum, streltsov2015measuring, napoli2016robustness, piani2016robustness}).
%speakable notion. 
%Most recent work on quantifying coherence concerns itself  is best characterized as attempting to develop a resource theory of speakable coherence.  
Most of this work builds on a proposal by Baumgratz, Cramer and Plenio (BCP) \cite{Coh_Plenio}.  
The set of free operations in the BCP approach, called \emph{incoherent operations}, is defined based on the Kraus decomposition of quantum operations, and is closely related to another set which is called \emph{incoherence-perserving}. These are operations which take every incoherent state to an incoherent state.

Because it concerns speakable coherence, this approach is only appropriate for tasks concerning speakable information.
%{\em speakable} notion of coherence, its potential applications are rather limited in scope.  In particular,
% it is not the appropriate approach if one is interested in tasks wherein the relevant notion of coherence is a unspeakable notion. 
%they do not extend to the tasks for which the resource is unspeakable coherence.   
Nonetheless, if one is content to accept that a resource theory of speakable coherence has a more limited scope of applications than one might have na\"{i}vely expected, the question arises of whether the BCP approach is the right way to define the resource theory of speakable coherence.

%One criticism of the BCP proposal is that it places {\em no constraint} on the sorts of measurements that can be implemented, in the following sense: for {\em any} POVM, it is possible to find a measurement that realizes it which is considered free in the BCP proposal.  But just as everyone agrees that the free states in the resource theory of coherence should be restricted to those that have no coherence between the elements of the preferred basis, one would expect that the free {\em effects} in a resource theory of coherence should be similarly restricted, contrary to what occurs in the BCP proposal. 

%The more important criticism, however, concerns the lack of a physical justification for the proposal. 
%the set of incoherent operations as the set of free operations. 

\subsection{Criticism of resource-theoretic approaches to speakable coherence}

As we noted earlier, to take a resource-theoretic approach to any given property one must first of all make a proposal for which set of operations can be freely implemented.  But a given proposal for how to do so is only expected to have physical relevance if it can be provided with a {\em physical justification}, that is, if one can provide a restriction on {\em experimental capabilities} that yields all and only the operations in the free set that is proposed. For instance, in entanglement theory, the restriction on experimental capabilities that yields all and only the LOCC operations between two parties is the absence of any quantum channel between the two parties. 

%Our main criticism of the BCP proposal is that no physical justification has been provided for the incoherent operations being the ones that can be freely implemented.

%Establishing such a physical justification is critical to establishing that a given definition of a resource theory has physical relevance [citation]. 

%The BCP proposal for how to treat coherence resource-theoretically involves taking the free set of operations to be the incoherent operations. 
Despite the amount of attention that the BCP approach has received, no one has yet described a physical justification for the set of incoherent operations or the set of incoherence-preserving operations.  The property of taking incoherent states to incoherent states is certainly a mathematically well-defined constraint; whether there is an {\em experimental} constraint that corresponds to this property is the question of interest here. 
 Of course one can imagine physical scenarios in which preparing coherent states is hard, for instance, because of the challenge of isolating one's systems from environmental decoherence.  But this does not justify the claim that the set of incoherent operations or the set of incoherence-preserving operations is the natural one to study; 
 %to justify the study of a resource theory in which these sets are free, 
to do so one needs to argue that \emph{all} of the operations in a given set can be easily implemented in that physical scenario. 
%, at the presence of decoherence and environment effects. 
However, it is not clear whether such a justification can be found.
%as we discuss in the following, it is not clear whether there is a physical scenario that picks out the incoherent operations or the incoherence-preserving operations.
%The fact that no such justification has yet been provided is certainly a deficiency of the BCP approach. 
%it turns out that each of these sets include operations which do not seem to be implementable in these situations. \color{red}Indeed there are good reasons to %believe that it {\em cannot} be justified by a restriction on experimental capabilities.  [Rob, given that our arguments are not conclusive, I think we  should say something weaker.] 

For one, because the free states in the resource theory of coherence should be restricted to those that have no coherence between the preferred subspaces, one would expect that the free measurements in a resource theory of coherence should be similarly restricted. However, we show that the BCP proposal 
%based on incoherent operations 
places {\em no constraint} on the sorts of measurements that can be implemented, in the following sense: for {\em any} POVM, it is possible to find a measurement that realizes it which is considered free in the BCP approach.  Therefore, to justify the BCP approach to the resource theory of speakable coherence, one needs to argue that there are physical or experimental constraints  which lead to a significant restriction on state-preparations and transformations, but no restriction on the possibilities for discriminating states.

\color{black}
%operations, whereas they do not limit the distinguishability of states at all. 

%, contrary to what occurs in this proposal. 

%Our main criticism of the BCP proposal is that there are good reasons to believe that it {\em cannot} be justified by any such restriction on experimental capabilities. %This is contrary to what one would expect.   
%Second, and more importantly, 
%For another, 
%But there are good reasons to believe that it {\em cannot} be justified by any such restriction on experimental capabilities.

For another, it turns out that even if one finds physical scenarios in which the set of free unitaries is the set of
 incoherent unitaries  (as defined by BCP), this still does not justify the set of incoherent or incoherence-preserving  operations as the set of free operations for a resource theory of speakable coherence.   As we will show, a general incoherence-preserving (incoherent) operation cannot be implemented using only  incoherent states, incoherent unitaries and incoherent measurements. Using a more technical language, this means that  incoherence-preserving (incoherent)  operations do not admit dilation using only incoherent resources (at least, not in a straightforward way, when we treat all systems even-handedly). 
 
 This lack of dilation for the set of free operations is not necessarily a problem in its own right
 \footnote{In entanglement theory, the free operations are the LOCC operations, which do not all admit of a dilation in terms of the free unitaries (because the only unitaries in the LOCC set are tensor products of local unitaries, which do not support any sort of communication, classical or quantum, between the two parties), and yet this is not considered a problem with entanglement theory because there is a natural physical restriction that yields LOCC as the set of free operations, namely, the fact that classical channels between parties are technologically easy to implement, while quantum channels are not. }.
% Strictly speaking, one may argue that even in the entanglement theory the set of LOCC does not have dilation in terms of unitaries in this set: the only unitaries in this set are the tensor products of local unitaries, which do not allow any sort of communication, classical or quantum, between the two parties. the set of LOCC operations admits of a natural physical justification as all and the only operations which can be implemented without having a quantum channel between the two parties.}. 
 One can imagine physical scenarios where the set of operations that we can implement in a \emph{controlled} fashion using the free unitaries and free states, are smaller than the set of all free operations we can implement on the systems of interest, because the latter operations might result from \emph{uncontrolled} interaction of these systems with an environment over which the experimenter has limited control, such as a thermal bath.   The lack of a dilation that is even-handed in its treatment of systems implies that this is the only sort of avenue open for providing a physical justification of the incoherence-preserving or incoherent operations.
% However, lack of dilation implies that finding physical justifications for the study of incoherent unitaries does not automatically justify the set of  incoherence-preserving or incoherent operations. 
 % Indeed, 
Furthermore, we show that there are other natural proposals for the set of free operations for the resource theory of speakable coherence that share precisely the same set of free unitaries, namely, the incoherent unitaries. Therefore, even if one accepts that the set of incoherent unitaries is the appropriate set of free unitaries for a resource theory of speakable coherence, this does not resolve the question of which of the many sets of free operations consistent with this choice one should use to define the resource theory.
 % it is not clear why one should take one of these sets of operations as the set of free operations, and not the other.

%Therefore, to justify the approach based on incoherent (incoherence-preserving) operations  to the resource theory of speakable coherence, one needs to provide some restriction that picks out all and only these operations.  The possibilities for doing so are discussed in more detail in Sec.~\ref{Sec:CriticismIP}.

Our concerns about the suitability of the incoherence-preserving operations in a resource theory of coherence are bolstered by comparing them to the {\em non-entangling} operations in entanglement theory\footnote{One can also define a set of operations that plays a role in entanglement theory which is parallel to the role played by {\em incoherent} operations in the resource theory of coherence, namely, those bipartite operations for which there is a Kraus decomposition where each term is non-entangling.}.  The non-entangling operations are those that map unentangled states to unentangled states \cite{brandao2008entanglement}.   Like the incoherence-preserving operations, therefore, they are the largest set that maps the free states to the free states. 
The non-entangling operations are a strictly larger set than the LOCC operations because they include nonlocal operations such as swapping systems between the two parties, and because they allow the implementation of arbitrary POVM measurements on the bipartite system.  It is difficult to imagine any restriction on experimental capabilities that yields all and only the non-entangling operations.
%, unlike the LOCC operations which arise naturally from the restriction of having no quantum channel.
Indeed, it is widely acknowledged that the LOCC operations---which {\em do} arise from a natural restriction, having classical channels but not quantum channels---is the physically interesting set,  while  the non-entangling operations are studied primarily as a mathematical technique for making inferences about LOCC.  Incoherence-preserving and incoherent operations may ultimately have a similarly subservient role to play in the resource theory of coherence.
\color{black}

\color{black}

\subsection{A proposal for the resource theory of speakable coherence}

In the absence of a physical justification for the BCP approach, the question arises of whether an alternative choice of the set of free operations might be more suited to a resource-theoretic treatment of {\em speakable coherence}.  Once the question is raised, a natural alternative for the set of free operations immediately suggests itself, namely, those  that are {\em covariant under dephasing}, that is, 
%We here propose an alternative choice of the set of free operations for defining the resource theory of speakable coherence. 
%We propose a  resource theory which does not have some drawbacks of the BCP proposal (In particular, it does not allow arbitrary measurements). 
%Specifically, we propose the set of operations that are {\em covariant under dephasing}, that is, 
those that  commute with the operation that achieves complete dephasing relative to the preferred subspaces.
%  This dephasing operation removes all of the off-block-diagonal components of the density operator while leaving the block-diagonal components unchanged.  
We call this the {\em dephasing-covariance approach to coherence}.  A variant of this proposal has recently been considered in Ref.~\cite{yadin2014new}\footnote{Ref.~\cite{yadin2014new} came to our attention as we were preparing this manuscript.}.

This proposal does not have  one of the counter-intuitive features of the BCP approach that was outlined above: the set of free measurements includes only the POVMs whose elements are incoherent, as one would expect.  Nonetheless, it is still not clear whether the dephasing-covariance approach to coherence has much physical relevance because it is still unclear whether there is any restriction on experimental capabilities that picks out all and only the dephasing-covariant operations.  (In particular, it is unclear whether every dephasing-covariant operation admits of a dilation in terms of incoherent states, incoherent measurements and dephasing-covariant unitaries.)  We do not settle the issue here.

\subsection{Relation between different approaches}

In addition to providing a characterization and assessment of both the dephasing-covariance approach and the BCP approach for treating speakable coherence as a resource, we explore the mathematical relation between the free set of operations that each adopt.  In particular, we show that the dephasing-covariant operations relative to a choice of preferred subspaces are a strict subset of the incoherent (incoherence-preserving) operations relative to the same choice.  This implies that any measure of coherence  in the incoherent (incoherence-preserving) approach is also a measure of coherence in the dephasing-covariance approach.

%relation between them.   
%The same relations hold for composite systems, as long as the dephasing operation is taken to act independently on different systems. 

We also compare the translational-covariance approach to coherence with the dephasing-covariance approach (and, via the connection noted above, with the incoherent and incoherence-perserving  approaches).

Any given translational symmetry defines a decomposition of the Hilbert space via the joint eigenspaces of the generators of this symmetry.  Thus, for any given translational symmetry, one can consider 
% Any given translational symmetry and its corresponding generators one can consider the decomposition of the Hilbert space to the eigensubspaces of these generators, and  the
 the sets of dephasing-covariant, incoherent, and incoherence-preserving operations defined relative to this decomposition of the Hilbert space.
% the sets of dephasing-covariant, incoherent, and incoherence-preserving operations which are defined based on the decomposition of the Hilbert space to the eigensubspacses of these generator(s). 
 For instance, the incoherence-preserving operations defined  by a given translational symmetry is the set of operations under which any state which is incoherent with respect to the eigenspaces of its generators, is mapped to a state which is still incoherent relative to these eigenspaces. 

We show that the set of translationally-covariant operations is a strict subset of the dephasing-covariant operations and thus also a strict subset of the incoherent (incoherence-preserving) operations.  This implies that any measure of coherence in the dephasing-covariance proposal is also a measure of coherence in the translational-covariance proposal.  We also show that this inclusion relation is strict. 

Given these inclusion relations, the question arises of whether the measures of coherence that have been identified recently as valid in the BCP approach were already identified in prior work on the resource theory of asymmetry.  We show that this is indeed the case for most such measures of coherence.\footnote{Note that the present work expands on some of the comparisons between the BCP approach to coherence and the one based on translational asymmetry made in Ref.~\cite{marvian2015quantum}.} In addition to noting these relations, we discuss two general techniques for deriving measures of coherence, one that infers them from measures of information and the other that appeals to a certain kind of  decomposition of operators into so-called {\em modes of asymmetry}.

%\Ref.~\cite{Marvian2015} already noted some of the comparisons between 
%the BCP approach to coherence with the one based on translational covariance that we was already made in Ref.~\cite{Marvian2015}.

% ne must use a collective dephasing operation  if the translation group is defined such that different components of a composite system receive independent translations.

%\red{[This paragraph does not belong to this section, because we do not use this concepts in introduction and summary, we should move it to preliminaries ]
%\color{blue} I agree we should move this to preliminaries. \color{black}

\subsection{The choice of preferred subspaces}

The notion of the state of a system being {\em coherent} is only meaningful relative to a choice of decomposition of the Hilbert space of the system into subspaces.  The latter must be dictated by physical considerations, which is to say, {\em operational criteria}.   This is because, from a purely mathematical point of view,  any state is coherent in some basis and incoherent in another basis. If $|\pm\rangle \equiv  ( |0\rangle\pm |1\rangle )/\sqrt{2}$, then the state $|+\rangle$ is coherent if one judges relative to the $\{|0\rangle, |1\rangle\}$ basis, but by the same token, $|0\rangle$ is deemed coherent if one judges relative to the $\{|+\rangle, |-\rangle\}$ basis.  One consequently has no alternative but to appeal to physical considerations in defining the notion of coherence\footnote{The fact that the direct-sum decomposition of Hilbert space that defines the resource of coherence is determined by the physical problem of interest is completely analogous to how the factorization of the Hilbert space of a composite system that defines the resource of entanglement is so determined.}.  
%Any two systems under consideration might be entangled in degrees of freedom over which we do not have control, or of whose existence we may not be aware.  Any such entanglement cannot be used as a resource  to perform tasks, and  cannot be observed experimentally. 

Furthermore, physical considerations often dictate that the relevant notion of coherence is relative to a decomposition of the Hilbert space into subspaces that are not 1-dimensional.  A few examples serve to illustrate this.
%how the preferred subspaces may be higher-dimensional. 

In the context of decoherence theory, environmental decoherence does not always pick out 1-dimensional subspaces of the system Hilbert space.  The dimensions of the decohering subspaces depend on which degree of freedom of the system couples to the environment and generally this is a degenerate observable.  Indeed, this fact, i.e., the existence of \emph{Decoherence Free Subspaces} with dimension larger than one, has been exploited to protect quantum information against decoherence \cite{Lidar:2003fk, Zanardi:97c}. 

Another common example is where the notion of coherence that is of interest is coherence relative to the eigenspaces of some particular physical observable, such as the system's Hamiltonian (as happens when the coherence is a resource for building a quantum clock) or the photon number operator in a particular mode (as happens when the coherence is a resource for phase estimation). Even if the notion of coherence of interest is the speakable one, {\em some} physical degree of freedom must be used to encode the coherence, and practical considerations might dictate the use of a particular physical observable.  And in all such cases, there is a priori no reason that the physical observable should be nondegenerate.  

A final example is if there are degrees of freedom over which the experimenter has no control.  In this case, coherence in that degree of freedom is neither observable nor usable.  

Thus, if the physically relevant observable is $L$, and $l$ labels its eigenspaces while $\alpha$ is a degeneracy index, the state $|l,\alpha_1\rangle+|l,\alpha_2\rangle$ is an {\em incoherent} state in the resource theory insofar as it has no coherence {\em between} the eigenspaces of $L$\footnote{This is analogous to how entanglement {\em between} laboratories can be a resource while entanglement {\em within} a given laboratory is not.}.  Coherence {\em within} an eigenspace of $L$ might be made to be a resource as well, but this requires the degeneracy to be broken, for instance, by introducing another physical observable which picks out a basis of that eigenspace.

We conclude that any resource theory of coherence should be able to quantify and characterize  coherence, not only with respect to 1-dimensional subspaces, but also  with respect to 
%a decomposition of the Hilbert space into 
subspaces of arbitrary dimension.
%any subspace decomposition of the Hilbert space. 
As we will see in the following, the resource theory of unspeakable coherence based on translationally-covariant operations has this capability.  In the case of speakable coherence, we define dephasing-covariant and incoherence-preserving operations to incorporate this possibility,  
%decompositions into subspaces of arbitrary dimension 
and we generalize the definition of incoherent operations in BCP~\cite{Coh_Plenio}, which assumed 1-dimensional subspaces, to do so as well.
% (because the original definition in BCP \cite{Coh_Plenio} assumed 1-dimensional subspaces).\color{black}

%we assume this generality is allowed for incoherent operations, as well as  dephasing-covariant and incoherence-preserving operations,  even though, the original proposal of BCP \cite{Coh_Plenio} for incoherent operations is concerned only with coherence relative to 1-dimensional subspaces. 

%This sort of physical questions should be meaningful and quantifiable in any reasonable resource theory of  coherence. 

%As we will see in the following, the resource theory of unspeakable coherence based on translationally-covariant operations has this capability. In the case of speakable coherence, we assume this generality is allowed for incoherent operations, as well as  dephasing-covariant and incoherence-preserving operations,  even though, the original proposal of BCP \cite{Coh_Plenio} for incoherent operations is concerned only with coherence relative to 1-dimensional subspaces. 

\subsection{Composite systems}
How should the resource theory of coherence be defined on composite systems? In particular, how should we define the set of free states and free operations in this case? For instance, suppose we are interested in quantifying coherence with respect to the energy eigenbasis, which is relevant, for instance, in the context of thermodynamics and clock synchronization.  Consider two non-interacting systems having identical Hamiltonians, $H_A$ and $H_B$, with energy eigenbases $\{|E\rangle_A\}$ and $\{|E\rangle_B\}$ respectively.  One can then imagine  two different ways of defining coherence on the composite system $AB$.  Definition (1): coherence is defined relative to the products of eigenspaces of the single system Hamiltonians. 
%, i.e. $H_A$ and $H_B$. 
In this case, a joint state of systems $A$ and $B$  is coherent if it contains coherence with respect to either of these two Hamiltonians. In particular, in this approach, the state $(|E_1\rangle_A|E_2\rangle_B+|E_2\rangle_A|E_1\rangle_B)/\sqrt{2}$ is considered a resource. 
Definition (2): coherence is defined relative to the eigenspaces of the total Hamiltonian, that is, of $H_A\otimes I_B+I_A\otimes H_B$, where $ I_{A}$ and $I_{B}$ are,  respectively,  the identity operators on systems $A$ and $B$.
%(2) Alternatively, one may define coherence on the joint system in terms of the total Hamiltonian $H_A\otimes I_B+I_A\otimes H_B$, where $ I_{A}$ and $I_{B}$ are,  respectively,  the identity operators on systems $A$ and $B$.
 In this case, states which do not contain coherence relative to the total Hamiltonian, such as $(|E_1\rangle_A|E_2\rangle_B+|E_2\rangle_A|E_1\rangle_B)/\sqrt{2}$, are deemed to be incoherent. 

As the set of preferred subspaces and incoherent states on a single system should be chosen based on physical considerations, the set of incoherent states on composite systems should also be defined in a similar fashion. It turns out that each of the above definitions can be relevant in some physical scenarios. For instance, in the scenarios where the two subsystems cannot exchange energy (for instance, because they are held by two distant parties) then approach (1) is relevant, and entangled states such as $(|E_1\rangle_A|E_2\rangle_B+|E_2\rangle_A|E_1\rangle_B)/\sqrt{2}$ are resources. On the other hand, in the scenarios where we can easily apply operations that allow  energy exchange between the two subsystems, then the relevant observable is the total energy, and not the energy of the individual subsystems. Therefore, in this situation  approach (2) is the relevant one, and  entangled states such as $(|E_1\rangle_A|E_2\rangle_B+|E_2\rangle_A|E_1\rangle_B)/\sqrt{2}$ are not resources.  The fact that the resource of coherence is only defined relative to a choice of basis which depends on the physical scenario is precisely analogous to how the resource of entanglement is only defined relative to a choice of factorization of the Hilbert space which depends on the physical scenario.   (For instance, in the distant laboratories paradigm, entanglement between laboratories is a resource, while entanglement between systems in the same laboratory is not.) 
%Note that this is precisely analogous to entanglement theory, where certain entangled states, such as for two systems in the same laboratory, are considered free. 

\subsection{Outline}

The article is organized as follows.  Sec.~\ref{prelim} covers preliminary material, including a discussion of certain features that are common to the various different proposals for a resource theory of coherence,  what counts as a physical justification of a proposal for the set of free operations, and the definition of a measure of coherence. 
Sec.~\ref{TCcoherence} presents the resource theory of unspeakable coherence that one obtains by taking the free operations to be those that are translationally covariant.  In particular, various different characterizations and physical justifications of the free operations are provided.  Sec.~\ref{DCcoherence} presents the proposal for speakable coherence based on dephasing-covariant operations, together with a discussion of the relation to the translationally-covariant operations and physical justifications.  In Sec.~\ref{BCP}, we review the BCP proposal for speakable coherence, which is defined by the incoherent operations, as well as a related proposal, defined by the set of incoherence-preserving operations.  The relation to the dephasing-covariance appraoch is considered, as well as possibilities for a physical justification.  Finally, in Sec.~\ref{Sec:examples}, we consider measures of coherence within the various approaches, and in Sec.~\ref{Discussion} we provide some concluding remarks.

\section{Preliminaries}\label{prelim}

Any resource theory is specified by a set of \emph{free states} and a set of \emph{free operations}. These are states and operations which are \emph{easy} or \emph{allowed} to prepare and implement under a practical or fundamental constraint.  

%\subsection{Free states and free operations}
\subsection{Free states}

The notion of coherence is only defined relative to a preferred decomposition of the Hilbert space into subspaces. This preferred decomposition is determined based on practical restrictions or physical considerations, although in some cases a preferred decomposition may be considered as a purely mathematical exercise. 
For a system with Hilbert space $\mathcal{H}$, we denote the preferred subspaces by $\{ \mathcal{H}_l \}_l$, so that $\mathcal{H} =\bigoplus_l \mathcal{H}_l$.  Here, the index $l$ may be discrete or continuous.   We denote the projectors onto these subspaces by $\{\Pi_l\}_l$.   

The free states, which are termed \emph{incoherent states}, are those states which are block-diagonal relative to the preferred subspaces, 
\beq\label{incoherent_state}
\rho=\sum_l p_l \Pi_l\ .
\eeq

An alternative way of characterizing the set of free states is via map that dephases between the preferred subspaces.  This dephasing map has the form
%the {\em dephasing map}.  Dephasing relative to the subspaces defined by $\{\Pi_l\}_l$, equivalently, relative to the eigenspaces of $L$, is achieved by the map
\begin{equation}\label{Dephasing}
\mathcal{D}(\cdot) \equiv \sum_{l} \Pi_l (\cdot) \Pi_l \ .
\end{equation}
As a superoperator acting on the vector space of  operators, $\mathcal{D}$ is a projector, and hence idempotent, $\mathcal{D}^2=\mathcal{D}$.  In fact, it projects onto the subspace of operators that are block-diagonal relative to the decomposition $\{ \mathcal{H}_l \}_l$, so that the set of incoherent states can be characterized as those that are invariant under $\mathcal{D}$, 
\begin{equation}
%\rho \in \mathcal{I} \; \leftrightarrow \; 
\mathcal{D}(\rho)=\rho\ .
\end{equation}

Note that for any choice of preferred subspaces, the set of incoherent states is closed under convex combinations. We will denote this set by $\mathcal{I}$.
%The subspace spanned by  incoherent states is denoted by $\mathcal{I}$.

%In many cases of interest, the set of preferred subspaces is specified as the set of eigenspaces of an observable. If $L$ denotes a Hermitian operator whose eigenspaces are precisely the preferred subspaces, that is, $L= \sum_l  \lambda_l \Pi_l$, where the $\lambda_l$ are all distinct, then we can equivalently characterize the incoherent states as those which commute with $L$,
%\beq\label{incoherent_state2}
%[\rho,L]=0\ .
%\eeq

\subsection{Free measurements}

If a system survives a quantum measurement, then the outcome of the measurement provides the ability to predict the outcomes of future measurements on the system.  To do so, one must specify the state update map associated to the measurement.  The von Neumann projection postulate is an example. This is a specification of the measurement's {\em predictive} aspect.  Whether the system survives or not, every quantum measurement also allows one to make retrodictions about earlier interventions of the system.  
%Measurements provide information both for prediction  in quantum theory have a predictive and a retrodictive aspect.   
We here focus only on the retrodictive aspect of a measurement, which in quantum theory is represented by a postive-operator-valued measure (POVM).   An element of a POVM, denoted $E$, satisfies $I \ge E \ge 0$, and wlll be termed  an {\em effect}.

We will call an effect {\em incoherent} if it is block-diagonal relative to the preferred subspaces.  Although one might have expected all proposals for the resource theory of coherence to allow as free only those POVMs made up of incoherent effects, we will see that this is not the case for the proposal based on incoherence-preserving  or incoherent \color{black}    operations.
%, and we will argue that this fact is a deficiency of the proposal. 

\subsection{Free operations}\label{constraintsonfreeops}

We turn now to the set of free operations. 
We consider not only those operations wherein the input and output spaces are the same (i.e., transformations of a system) but also those where they may be different (in which case the operation involves adding or taking away some or all of the system).  In particular, the free operations from a trivial input space to a nontrivial output space specify which \emph{preparations} of the output system can be freely implemented, so that a specification of the free operations implies a specification of the free states.

The minimal property that the set of free operations should have is to be {\em incoherence-preserving}, which is to say that each free operation takes every incoherent state on the input space to an incoherent state on the output space.
%The minimal property that the set of free operations should have is
%\begin{itemize}
%\item\label{property2}{Incoherence-preserving:} Every free operation takes incoherent states on its input space to incoherent states on its output space.
%\end{itemize}
Note that the incoherence-preserving property implies that for the special case where the operation is a state preparation, the free set corresponds to the incoherent states. \footnote{This is because we can think of the input in a state preparation as a 1-dimensional system, which is necessarily in an incoherent state.}

%Note that because the input and output spaces may be different, specifying the free operations from a trivial input space to a nontrivial output space specifies which \emph{preparations} of the output system can be freely implemented, that is, which states are free for that output space.  

All proposals we consider here are such that every free operation is incoherence-preserving. 
%have these minimal properties.  
Nonetheless, different proposals %We will see, however, that different proposals 
for how to treat coherence as a resource differ in their choice of the set of free operations, subject to this constraint. 
%assume different sets of free operations.

Note that we here use the term \emph{quantum operation} to refer to a trace-nonincreasing completely positive linear map.  If the operation is trace-preserving, we will refer to it as a \emph{quantum channel}.
%will refer to a trace-nonincreasing completely positive linear map as a \emph{quantum operation}, while one that is trace-preserving will be called a \emph{quantum channel}.

%{\bf Consistency condition of Stinespring dilation as a criterion of operational meaningfulness}

%\subsection{Dilation of the free operations}
%\subsection{Probing physical justification via dilation} 
\subsection{Physical justification of the free operations through dilation} \label{dilation}

%We now formulate a desideratum for the set of free operations that is {\em not} shared by all three proposals.
% to be {\em physically justified}.

It is widely believed that physical systems undergoing closed-system dynamics evolve according to a unitary map. 
%fundamentall evolution of physical systems are unitary evolutions. 
In this view, the only circumstance in which a nonunitary map is used to describe the evolution of a system's state is when the system is known to undergo open-system dynamics, that is, when it interacts with some auxiliary system (perhaps its environment) via a unitary map, but one chooses to not describe the auxiliary system, by marginalizing or post-selecting on it. 
It is straigthforward to show that in any situation wherein a system interacts unitarily with the auxiliary system and one subsequently marginalizes or post-selects on the latter (through a partial trace or a partial trace with some measurement effect respectively), the effective evolution of the system's state is always described by a completely positive trace-nonincreasing map (what we are here calling a quantum operation).   %One expects, therefore, that for {\em open-system} dynamics, i.e., wherein the system interacts with an environment, {\em nonunitary} evolution is possible but it arises from a unitary interaction of that system with its environment, followed by a partial trace operation. 
The Stinespring dilation theorem \cite{nielsen2000quantum} guarantees that the vice-versa is also true: {\em every} quantum operation on the system can be achieved in this fashion. 

For a given triple of state of the auxiliary system, effect measured thereon, and unitary coupling of the system to its auxiliary, we will term the effective quantum operation on the system that it defines the {\em marginal operation} on the system.  For a given quantum operation on the system, we will term any triple of state of the auxiliary, effect on the auxiliary and unitary coupling of system to auxiliary that yields the operation as a marginal, a {\em dilation} of that operation.

% {\em all} nonunitary operations on a system can be implemented in this fashion. 
%supports this expectation: every nonunitary operation on a system can be achieved by unitarily coupling that system to an environment and tracing over the environment. 
%So, in the physical world any non-unitary quantum operation can be realized only using unitary evolutions. In general, it is guaranteed by the Stinespring dilation theorem that  any quantum operation can be realized by coupling the system to an ancilla and applying a unitary on the joint system. 

%A resource theory is defined by specifying a set of operations that are deemed to be freely implementable.  Included among these are states and unitaries.  
%Included among these are states and unitaries.  
%A natural consistency condition, therefore, is to 

%For a resource theory, it is similarly natural to demand that any nonunitary operation that is included in the set of free operations should be achievable using free states, free effects, and free unitaries.  That is, it should be achievable by coupling the system to an environment via a unitary operation that is free, using an initial state of the environment that is free.  If this condition is not met, then it implies that a nonunitary operation that is supposed to be free in fact must consume some resource, either in the state of the environment or in the unitary coupling.

In the context of resource theories, one way to define the free set of operations on a system is by specifying the free states and free effects on the auxiliary system, as well as the free unitaries that couple the system to the auxiliary, and then defining the free set of operations on the system as all and only those that can be obtained as a marginal of these.   If a proposal for the free set of operations is {\em not} defined in this way, then one {\em can and should} ask whether it admits of such a definition or not.  In other words, one should seek to determine whether the free set of operations in a given proposal can be understood as those that admit of a dilation in terms of the free states and effects on, and free unitary couplings with, the auxilary system. We refer to such dilations as {\em free dilations}.

% For instance, in entanglement theory any bi-partite LOCC operation has free dilation, in the sense that it can be implemented by  local unitaries on the systems of interests and local auxiliary systems, which do not couple systems at different local points (labs) to each other.

We shall here ask this question of various proposals for how to choose the free set of operations in a resource theory of coherence.  If, for any given proposal, one finds that the free set of operations on the system of interest includes an operation that does {\em not} admit of a free dilation, then this may imply that some nontrivial resource on the composite of system and auxiliary must be consumed in order to realize the operation.

We will show that in cases where one considers a translation group that acts collectively on all physical systems, the translationally-covariant operations have free dilation.  We will also show that the set of incoherence-preserving operations and the set of incoherent operations does not have this property, at least not if we treat all systems even-handedly.  For the set of dephasing covariant operations, the question remains open.

\color{black}

\subsection{Measures of coherence}

In any resource theory, a measure of the resource is a function from states to real numbers which defines a partial order on the set of states.
%, and intuitively quantify the amount of resource in a given state. 
%There are different properties which can be considered for a measure of resource. However, the most essential one
The essential property that any such function must have is to be a monotone (i.e, to be monotonically nonincreasing)  under the free operations\footnote{See Ref.~\cite{gour2008resource} for an operational justification of monotonicity, that is, an account of why monotonicity is required if a measure of a resource is to characterize the degree of success achievable in some operational task.}.  We are therefore going to use the following definition of a measure of coherence:
\begin{definition}
A function $f$ from states to real numbers is a measure of coherence according to a given proposal if 
%\textbf{A}, \textbf{B} or \textbf{C} if it satisfies the followings \\
\noindent (i) For  any trace-preserving quantum operation $\mathcal{E}$ which is free according to the proposal,
%in the corresponding resource theory,
 it holds that $f(\mathcal{E}(\cdot))\le f(\cdot)$.\\
\noindent (ii) For any incoherent state $\rho\in\mathcal{I}$, it holds that  $f(\rho)=0$.
\end{definition}
Because any incoherent state can be mapped to any other incoherent state via a free trace-preserving operation, condition (i) implies that the value of the function $f$ must be the same for all incoherent states. Condition (ii) merely expresses a choice of convention for this value: that all incoherent states should be assigned measure zero.  Of course, given any function satisfying condition (i), one can define a shift of this function which satisfies condition (ii).

\begin{comment}
We now discuss the operational motivation for monotonicity.   

The motivation for demanding that a measure of the resource be monotonically non-increasing under the allowed operations is that it is a necessary condition if the measure is to have operational significance. This is an important point that is worth making precise.\newline
We shall say that a measure of a resource is operational if and only if it quantifies the optimal figure of merit for some task that requires the resource for its implementation. Specifically, we imagine a task that is described entirely operationally (i.e. in terms of empirically observable consequences) and a figure of merit that quantifies the degree of success achieved by every possible protocol for implementing the task (under the restriction that defines the resource theory). Success might be measured in terms of the probability of achieving some outcome, or the yield of some other resource, etc. The key point is that any processing of the resource (consistent with the restriction that defines the resource theory) cannot increase an operational measure of that resource because the definition of an operational measure already incorporates an optimization over protocols and thus an optimization over all such processings.

\end{comment}

It is worth noting that any measure of a resource $f$ is constant on states that are connected by a free unitary operation.  That is, if $\mathcal{U}$ is a free unitary operation, then any resource measure $f$ must satisfy  
$$f(\rho) = f(\mathcal{U} (\rho)).$$
The proof is simply that if $\rho$ and $\sigma$ are connected by a free unitary, then state conversion in both directions are possible under the free operations,  $\rho \to \sigma$ and  $\sigma \to \rho$, which in turn implies that $f(\rho) \ge f(\sigma)$ and $f(\rho) \le f(\sigma)$.
%the fact that $\rho \to \sigma$ implies $f(\rho) \ge f(\sigma)$, while the fact that $\sigma \to \rho$ implies $f(\sigma)\ge f(\rho)$.

%[Placeholder: Provide an equivalent definition of measures of coherence in terms of incoherent states, free unitaries and monotonicity under partial trace.  Note that the quantum speed of evolution satisfies this property.]

We distinguish the three resource-theoretic approaches to coherence that we consider in this article 
by the set of free operations that define them: translationally-covariant, dephasing-covariant and incoherence-preserving operations.  A measure of coherence within a given approach is also defined relative to the set of free operations within that approach.  Therefore, we refer to measures of coherence within the different approaches as measures of {\em TC-coherence}, {\em DC-coherence}, and {\em IP-coherence} respectively. 
In Sec. \ref{Sec:examples},  we provide a list of examples for each type.

\section{Coherence via translationally-covariant operations}\label{TCcoherence}

We begin by demonstrating that if one is interested in an {\em unspeakable} notion of coherence, then coherence can be understood as asymmetry relative to a symmetry group of translations.   In this approach, the coherence is defined based on a given observable $L$, such as the Hamiltonian, the linear momentum,  or the angular momentum. Then, to characterize coherence relative to the eigenbasis of $L$, we consider the asymmetry relative to the group of translations generated  by $L$, i.e.,  the group of unitaries 
\begin{equation}
U_{L, x} \equiv e^{-i x L} : x\in \mathbb{R}\ .
\end{equation}
The superoperator representation of the translation $x \in \mathbb{R}$ is then
%Define the superoperator representation of the translation $x \in \mathbb{R}$ generated by the observable $L$,
%\begin{equation}
%\mathcal{U}_{L, x}(\cdot) \equiv e^{-i x L}(\cdot)  e^{i x L}.
%\end{equation}
\begin{eqnarray}
\mathcal{U}_{L, x}(\cdot) &\equiv& U_{L, x}(\cdot)  U^{\dag}_{L, x}\nonumber\\
&=& e^{-i x L}(\cdot)  e^{i x L}.
\end{eqnarray}
Note that this group has often a natural physical interpretation. For instance, if $L$ is the Hamiltonian, then it generates the group of time translations, and if  $L$ is the component of angular momentum in some direction, then it generates the group of rotations about this direction.\footnote{Note that, depending on the spectrum of $L$, this group could be isomorphic to U(1) or to $\mathbb{R}$  (under addition).}

%Suppose the spectral decomposition of the generator $L$ is
%\beq
%L=\sum_l \lambda_l \Pi_l
%\eeq 
%where the $\lambda_l$ are all distinct, so that the $\{\Pi_l\}_l$ are the projectors onto the eigenspaces of $L$.

%unspeakable coherence is a special case of asymmetry, for the case of translational symmetries, i.e. the group of unitaries $\{ e^{-i x L}: x\in\mathbb{R}\}$,  where the observable $L$ is the generator of the symmetry with eigenvalues $l$. In this approach,  $\{P_l\}_l$ are projectors onto the eigenspaces of an observable  $L=\sum_l \lambda_l P_l$ where the $\lambda_l$ are nondegenerate.

%According to the  resource theory of asymmetry,
The free states are taken to be those that are  \emph{translationally-invariant} or {\em translationally-symmetric},
 %the free states are those that are invariant under all translations, termed the \emph{translationally-invariant} states,
  %or \emph{translationally-symmetric} states, 
%\begin{align}
%\forall x\in \mathbb{R} :\  e^{- i L x}\rho\ e^{i L x}=\rho.
%\end{align}
%Defining the superoperator representation of the group as ???, the translationally-invariant states can equivalently be characterized by
%or equivalently,
\begin{align}\label{fourthdefnincoherentstates}
\forall x\in \mathbb{R} :\  \mathcal{U}_{L,x} (\rho)  &=\rho.
\end{align}
   %$\{ e^{-i x L}: x\in\mathbb{R}\}$. 
One can easily see that the set of translationally-invariant states coincides with the set of states that are incoherent with respect to the eigenspaces of $L$, % $\{P_l\}_l$, 
 i.e., 
\begin{equation}\label{equiv}
\forall x\in \mathbb{R} : \mathcal{U}_{L,x}(\rho)=\rho\ \ \Longleftrightarrow \ \  [L,\rho]=0.
%\forall x\in \mathbb{R} : e^{- i L x}\rho\ e^{i L x}=\rho\ \ \Longleftrightarrow \ \  [L,\rho]=0.
%\ \ \Longleftrightarrow\  \rho\in\mathcal{I} .
\end{equation}
Therefore,  in the translational-covariance approach to coherence the preferred subspaces relative to which coherence is a resource are the eigenspaces of the generator $L$. 
%one wishes to define coherence are those corresponding to the projectors $\{P_l\}_l$, then a state is incoherent if and only if it is translationally-invariant.
%Therefore, if the preferred subspaces relative to which one wishes to define coherence are those corresponding to the projectors $\{P_l\}_l$, then a state is incoherent if and only if it is translationally-invariant.

%In this section, we consider defining the resource of coherence as nothing other than the resource of translational-{\em non}invariance or translational asymmetry.   We must therefore review the way in which the resource theory of translational asymmetry is defined. 
%, namely, the translationally-covariant operations.

\color{black}

%\subsection{Definition of free operations: Translationally-Invariant operations}
\subsection{Free operations as translationally-covariant operations}

For a given choice of symmetry transformations, the resource theory of asymmetry is defined by taking the set of free operations to be those that are covariant relative to the symmetry transformations.   We here particularize this definition to the case of a translational symmetry, and provide several ways of characterizing this set.

%\color{red} [In the folowing, would it be better to move all the proofs to the appendices?  ] \color{black}

%\subsubsection{Characterization via covariance properties}
\subsubsection{Definition of translationally-covariant operations}

 \begin{definition}\label{defnTI}
 We say that a quantum operation  $\mathcal{E}$ is {\em translationally-covariant relative to the translational symmetry generated by $L$} if
 %free if it is covariant with respect to a translation of $x$ generated by the observable $L$, for all $x\in \mathbb{R}$,
\begin{align}
\forall x\in \mathbb{R} :\  \mathcal{U}_{L,x} \circ \mathcal{E}  &=\mathcal{E} \circ  \mathcal{U}_{L,x}.\label{TI_opo}
\end{align}
\end{definition}

Note that condition \eqref{TI_opo} is equivalent to 
\begin{align}
\forall x\in \mathbb{R} :\  \mathcal{U}_{L,x} \circ \mathcal{E}  \circ  \mathcal{U}^{\dag}_{L,x} &=\mathcal{E}.\label{TI_opo2}
\end{align}
If the input and output spaces of the map $\mathcal{E}$ are distinct, then  the generator $L$ may be different on the input and output spaces. For instance, in the case where  $L$  corresponds to the  angular momentum  in a certain direction, then this observable may have different representations on the input and output spaces.  For simplicity, we do not indicate such differences in our notation.
\color{black}
A preparation of the state $\rho$ is an operation with a trivial input space and translational-covariance in this case implies translational-invariance of $\rho$, confirming that translationally-invariant states are the free states in this approach.

%Both desiderata for the free set outlined in Sec.~\ref{constraintsonfreeops} are satisfied by translationally-covariant quantum operations.  The closure property is easily verified.
%Translationally-covariant quantum operations are easily verified to have property \ref{property1} of Sec.~\ref{constraintsonfreeops}.  
%To see that properties \ref{property2} and \ref{property3} also hold, it is useful to  note that a state is incoherent relative to the eigenspaces of $L$ if and only if it is invariant under all translations generated by $L$, that is, the condition $[\rho,L]=0$ is equivalent to
%\begin{align}\label{fourthdefnincoherentstates}
%\forall x\in \mathbb{R} :\  \mathcal{U}_{L,x} (\rho)  &=\rho.
%\end{align}
%To see that property \ref{property2} holds, 
%To see that the incoherence-preserving property holds,
Sec.~\ref{constraintsonfreeops} articulated a minimal constraint on the free operations, that they should be incoherence-preserving. Translationally-covariant operations have this property because 
%are easily seen to have the property of being incoherence-preserving, as the desideratum of  note that 
 from Eq.~\eqref{fourthdefnincoherentstates} one can deduce that for a translationally-covariant operation $\mathcal{E}$, and an incoherent state $\rho$ for input,
\begin{align}
\forall x\in \mathbb{R} :\  \mathcal{U}_{L,x} (\mathcal{E}(\rho))  &=\mathcal{E} (\mathcal{U}_{L,x}(\rho)) = \mathcal{E}(\rho),
\end{align}
which implies that $\mathcal{E}(\rho)$ is translationally-invariant, hence incoherent. 
%satisfies Eq.~\eqref{fourthdefnincoherentstates}. 
 Therefore incoherent states are mapped to incoherent states.
%, where the set of incoherent states relative to the eigenspaces of $L$ are those that commute with $L$, implying that if 
%$[\rho,L]=0$ then $[\mathcal{E}(\rho),L]=0$, so that incoherent states are mapped to incoherent states.

If one thinks of incoherence as translational symmetry, then the incoherence-preserving property formalizes the simple intuition known as {\em Curie's principle}: If the initial state does not break the translational symmetry and the evolution does not break the translational symmetry either, then the final state cannot break the translational symmetry.

%Finally, we show that property \ref{property3} also holds. If the input space is trivial, so that the map $\mathcal{E}$ corresponds to the preparation of a state $\rho$ on the output space, then Eq.~\eqref{TI_opo} reduces to Eq.~\eqref{fourthdefnincoherentstates} which, as just noted, is equivalent to $[\rho,L]=0$ and therefore Eq.~\eqref{TI_opo} specializes to the definition of an incoherent state on the output space.

\subsubsection{Translationally-covariant measurements}

If an operation $\mathcal{E}$ has a trivial {\em output} space, so that it corresponds to tracing with a measurement effect $E$ on the input space, that is,  $\mathcal{E}(\cdot) = {\rm Tr}(E \cdot)$, then Eq.~\eqref{TI_opo} reduces to 
\begin{align}
\forall x\in \mathbb{R} :\ {\rm Tr}(E (\cdot)) &= {\rm Tr}(E \; \mathcal{U}_{L	,x}(\cdot)) \nonumber\\
&= {\rm Tr}(\mathcal{U}^{\dag}_{L,x}(E) \;(\cdot)) ,
\end{align}
which in turn implies 
\begin{align}
\forall x\in \mathbb{R} :\  \mathcal{U}^{\dag}_{L,x} (E)  &=E,
\end{align}
i.e., the effect $E$ is translationally-invariant. This condition is equivalent to $[E,L]=0$, so that the effect $E$ is incoherent with respect to the eigenspaces of $L$.

\begin{proposition}\label{propTC:meas}
%The set of POVMs associated to translationally-covariant measurements  are all and only those whose elements are all block-diagonal relative to the eigenspaces of $L$.
A POVM is translationally-covariant if and only if all of its effects are incoherent 
%elements are block-diagonal 
relative to the eigenspaces of $L$.
\end{proposition}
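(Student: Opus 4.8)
The plan is to reduce the statement about a full POVM to the single-effect computation that immediately precedes the proposition. The first thing I would do is fix what ``translationally-covariant POVM'' means when a measurement is viewed as an operation in the sense of Def.~\ref{defnTI}. A measurement is a quantum operation from the quantum input system to a classical outcome register, $\mathcal{M}(\cdot)=\sum_k \Tr(E_k\, \cdot)\,\ketbra{k}{k}$, where the $\{|k\rangle\}$ are orthonormal states labelling the outcomes. Since the outcome labels carry only speakable information, the translation group acts trivially on the output register, so covariance reduces to the requirement $\mathcal{M}=\mathcal{M}\circ\mathcal{U}_{L,x}$ for all $x\in\mathbb{R}$.

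Next I would expand this condition. Using $\Tr(E_k\,\mathcal{U}_{L,x}(\rho))=\Tr(\mathcal{U}^{\dag}_{L,x}(E_k)\,\rho)$ and the linear independence of the operators $\ketbra{k}{k}$, the covariance requirement $\sum_k\Tr(\mathcal{U}^{\dag}_{L,x}(E_k)\rho)\ketbra{k}{k}=\sum_k\Tr(E_k\rho)\ketbra{k}{k}$ holds for all $\rho$ if and only if $\Tr(\mathcal{U}^{\dag}_{L,x}(E_k)\rho)=\Tr(E_k\rho)$ for every state $\rho$ and every outcome $k$, hence if and only if $\mathcal{U}^{\dag}_{L,x}(E_k)=E_k$ for each effect. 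This is precisely the single-effect condition already derived in the display preceding the proposition, so covariance of the whole measurement decouples into translational invariance of each individual effect. I would then invoke the equivalence $\mathcal{U}^{\dag}_{L,x}(E)=E$ for all $x$ $\Longleftrightarrow$ $[E,L]=0$ $\Longleftrightarrow$ $E$ block-diagonal with respect to the eigenspaces of $L$, which is the same argument as \eqref{equiv} but applied to effects rather than states. The converse direction is immediate: if every $E_k$ is incoherent then $\mathcal{U}^{\dag}_{L,x}(E_k)=E_k$, the statistics are invariant, and the POVM is covariant.

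The main thing to get right is not any hard calculation but the conceptual step of fixing the definition, namely that the outcome register transforms trivially because measurement outcomes are classical and speakable; without this, ``covariance of a POVM'' would be ambiguous. Once that is granted, the only genuine content beyond the preamble is the observation that covariance of the whole measurement factorizes into covariance of each effect, which follows from the linear independence of the outcome-register basis. Everything else is a restatement of the single-effect result together with the state-level equivalence already in hand.
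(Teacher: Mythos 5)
Your proof is correct and follows essentially the same route as the paper: the paper also reduces translational covariance of a measurement to the single-effect condition $\mathcal{U}^{\dag}_{L,x}(E)=E$ for all $x$, which it identifies with $[E,L]=0$ and hence with block-diagonality of $E$. The only addition you make is to spell out the quantum-to-classical channel form of a multi-outcome POVM and the linear-independence argument that decouples covariance into per-effect invariance, a bookkeeping step the paper leaves implicit by treating each effect as an operation with trivial output space.
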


%\color{red} [I was going to talk about translationally-covariant unitary maps as well, but they do not always correspond to those for which the unitary operator commutes with the generator, so I think it might be best not to mention it here.]

\subsubsection{Translationally-covariant unitary operations}

Finally, if $\mathcal{V}$ is a unitary translationally-covariant operation, that is, $\mathcal{V}(\cdot) = V(\cdot)V^{\dag}$ for some unitary operator $V$ (in which case the input and output spaces are necessarily the same), then Eq.~\eqref{TI_opo} reduces to 
\begin{align}
\forall x\in \mathbb{R} : U_{L,x} V (\cdot) V^{\dag} U^{\dag}_{L,x}  &=  V U_{L,x}(\cdot) U^{\dag}_{L,x} V^{\dag},
%\nonumber\\
%&= {\rm Tr}(\mathcal{U}^{\dag}_{L,x}(E) \;(\cdot)) ,
\end{align}
which implies 
\begin{align}
\forall x\in \mathbb{R} :\  U_{L,x} V = e^{i\phi(x)} V U_{L,x},
\end{align}
for some phase $\phi(x)$. Taking the traces of both sides, we find that in finite-dimensional Hilbert spaces, this condition can hold if and only if $e^{i\phi(x)}=1$,
 %[Was this the point that Giulio mentioned before? Should we we cite him?], 
that is, if and only if
\begin{align}
\forall x\in \mathbb{R} :\  [U_{L,x} ,V]=0,
\end{align}
which is equivalent to $[V,L]=0$, so that the unitary operator $V$ is also block-diagonal with respect to the eigenspaces of $L$.  If $\{ \mathcal{H}_{\lambda} \}_{\lambda}$ denotes the set of eigenspaces of $L$, $\{ \Pi_{\lambda} \}_{\lambda}$ the projectors onto these, and $\{ V_{\lambda} \}_{\lambda}$ an arbitrary set of unitaries {\em within} each such subspace, then any such unitary $V$ can be written as
\beq\label{unit352}
V=  \sum_{\lambda} V_{\lambda} \Pi_{\lambda}\ .
\eeq

%If the spectrum of $L$ has infinite cardinality, then the system must have an infinite-dimensional Hilbert space, and 
If the Hilbert space is infinite-dimensional, on the other hand, then the characterization above need not apply.  Indeed, in this case, there are translationally-covariant unitaries that need not map every eigenspace of $L$ to itself.  For instance, if the generator is a charge operator with integer eigenvalues, $Q= \sum_{q = -\infty}^{\infty} q |q\rangle \langle q|$, then the unitary $V_{\Delta q}$ that applies a rigid shift of the charge by an integer $\Delta q$, that is, 
$$V_{\Delta q} \equiv \sum_{q = -\infty}^{\infty} |q + \Delta q \rangle \langle q|,$$
defines a unitary operation $\mathcal{V}_{\Delta q}(\cdot) = V_{\Delta q}(\cdot)V_{\Delta q}^{\dag}$ that is covariant relative to the group of shifts of the phase conjugate to charge, $\{ \mathcal{U}_{Q,x}: x\in \mathbb{R} \}$ where $\mathcal{U}_{Q,x}(\cdot) = e^{-ixQ}(\cdot)e^{ixQ}$.
%corresponds to a translationally-covariant unitary operation.
As another example, if the system is a particle in one dimension, then the unitary operation that boosts the momentum by $\Delta p$ is translationally-covariant relative to the group of spatial translations.  This is because the unitary operation associated with a boost by $\Delta p$, $e^{-i\Delta p X} (\cdot) e^{i\Delta p X}$ where $X$ is the position operator, 
and the unitary operation associated with a translation by $\Delta x$, $e^{-i\Delta x P} (\cdot) e^{i\Delta x P}$ where $P$ is the momentum operator, commute with one another for all $\Delta x, \Delta p \in \mathbb{R}$.

%\color{red} [Make the definition more general, to allow for distinct input and output spaces, then show that it particularizes to the translationally invariant states, unitaries and POVMs that we define in the beginning.] \color{black}

\subsubsection{Characterization via Stinespring dilation}\label{TCStinespring}

We show that every translationally-covariant operation on a system can arise by coupling the system to 
an ancilla in an incoherent (translationally-invariant) state, subjecting the composite to a translationally-covariant unitary, and post-selecting on the outcome of a measurement on the ancilla which is assoicated to an incoherent (translationally-invariant) effect.  Such an implementation is termed a translationally-covariant {\em dilation} of the operation.

%We show that every translationally-covariant channel on a system can arise by coupling the system to an ancilla in a translationally-invariant state, subjecting the composite to a translationally-covariant unitary, and taking a partial trace.  
%Such an implementation is termed a translationally-covariant {\em dilation} of the channel. \color{black}  For a translationally-covariant {\em operation} (which, unlike a channel, need not be trace-preserving), a translationally-covariant dilation is similar but allows for a measurement followed by a post-selection of one outcome in place of the partial trace. \color{black}

To make sense of the notion of a translationally-covariant dilation, however, one needs to specify not only the representation of the translation group on the system and ancilla individually, but on the composite of system and ancilla as well. 
%To discuss a dilation of a nonunitary operation on some system to a unitary operation on the composite of that system and  an auxiliary system, we must first address the question of how the translational symmetry is represented on the composite.  
Recall that we allow the operation $\mathcal{E}$ to have different input and output spaces, so that to make sense of a translationally-covariant dilation, we must also specify the representation of the translation group on the output versions of the system and ancilla.  

Some notation is helful here. We denote the Hilbert spaces corresponding to the input and output of the map $\mathcal{E}$ by $\mathcal{H}_{\rm s}$ and $\mathcal{H}_{\rm s'}$ respectively.  Denoting the Hilbert space of the ancilla by $\mathcal{H}_{\rm a}$, the composite Hilbert space of system and ancilla is $\mathcal{H}_{\rm s} \otimes \mathcal{H}_{\rm a}$.  We denote the subsystem that is complementary to $\mathcal{H}_{\rm s'}$ by $\mathcal{H}_{\rm a'}$ (this is the subsystem over which one traces), so that $\mathcal{H}_{\rm s} \otimes \mathcal{H}_{\rm a} = \mathcal{H}_{\rm s'} \otimes \mathcal{H}_{\rm a'}$.

In the physical situations to which TC-coherence applies---which we will discuss at length in Sec.~\ref{JustificationsTC}---one can always choose an ancilla system such that translation is represented {\em collectively} on the composite of system and ancilla.  Specifically, if $L_{\rm s}$ is the generator of translations on $\mathcal{H}_s$ and $L_{\rm a}$ is the generator of translations on $\mathcal{H}_a$, then the generator of translations on the composite $\mathcal{H}_{\rm s} \otimes \mathcal{H}_{\rm a}$ is $L = L_{\rm s} \otimes I_{\rm a} +  I_{\rm s} \otimes L_{\rm a} $. Similarly, we have $L = L_{\rm s'} \otimes I_{\rm a'} +  I_{\rm s'} \otimes L_{\rm a'}$.  It follows that the translation operation on the composite is collective on the factorization $\mathcal{H}_{\rm s} \otimes \mathcal{H}_{\rm a}$, that is, $\mathcal{U}_{L,x} = \mathcal{U}_{L_s,x} \otimes \mathcal{U}_{L_a,x}$ and on the factorization $\mathcal{H}_{\rm s'} \otimes \mathcal{H}_{\rm a'}$, that is, $\mathcal{U}_{L,x}= \mathcal{U}_{L_{s'},x} \otimes \mathcal{U}_{L_{a'},x}$.  In the discussion below, we use $L$ to denote the generator of translations, regardless of the system it is acting upon.

%\begin{definition}
\begin{proposition}\label{propTCStinespring}
A quantum operation  $\mathcal{E}$ is {\em translationally-covariant} 
%relative to the translational symmetry generated by $L$
 if and only if it can be implemented by coupling the system $\mathcal{H}_{\rm s}$ to an ancilla $\mathcal{H}_{\rm a}$ prepared in an incoherent state $\sigma$ via a translationally-covariant unitary quantum operation $\mathcal{V}$, and then post-selecting on the outcome of a measurement on the ancilla $\mathcal{H}_{\rm a'}$ which is associated with an incoherent effect $E$.
 %to an taking a partial trace with an incoherent effect $E$ on an ancilla $\mathcal{H}_{\rm a'}$. 
 %tracing over a subsystem (denoted $\mathcal{H}_{\rm a'}$ and not necessarily equal to $\mathcal{H}_{\rm a}$).
   Formally, the condition is that for all quantum states $\rho$,
\begin{equation}\label{TIStinespring}
\mathcal{E}(\rho) = {\rm tr}_{\rm a'} (E \mathcal{V}( \rho \otimes \sigma)),
\end{equation}
%where $\mathcal{D}_{L_{\rm a}}(\sigma)= \sigma$ and $\mathcal{D}_{L_{\rm a'}}(E)= E$ 
where $[L_{\rm a},\sigma]=0$ and $[L_{\rm a'}, E]=0$ 
and where 
%$\mathcal{V}(\cdot) =V (\cdot) V^{\dag}$ for some unitary operator $V$ where 
$\mathcal{V} \circ \mathcal{U}_{L,x} = \mathcal{U}_{L,x} \circ \mathcal{V}$
%$[\mathcal{V},\mathcal{U}_{L,x}]=0$
 for all $x\in \mathbb{R}$.
%$[V,L]=0$.
%\end{definition}
\end{proposition}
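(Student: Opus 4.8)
The plan is to prove the two directions separately, with sufficiency being a direct computation and necessity requiring a covariant refinement of the Stinespring dilation.

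For the sufficiency (``if'') direction, I would assume $\mathcal{E}$ has the stated form and verify Eq.~\eqref{TI_opo} directly. The essential inputs are the collective action of the translation group on both factorizations, $\mathcal{U}_{L,x}=\mathcal{U}_{L_s,x}\otimes\mathcal{U}_{L_a,x}$ and $\mathcal{U}_{L,x}=\mathcal{U}_{L_{s'},x}\otimes\mathcal{U}_{L_{a'},x}$, together with $\mathcal{U}_{L_a,x}(\sigma)=\sigma$, the covariance $\mathcal{V}\circ\mathcal{U}_{L,x}=\mathcal{U}_{L,x}\circ\mathcal{V}$, and $[L_{a'},E]=0$. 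Concretely, starting from $\mathcal{E}(\mathcal{U}_{L_s,x}(\rho))$, I would insert $\sigma=\mathcal{U}_{L_a,x}(\sigma)$, recombine into $\mathcal{U}_{L,x}(\rho\otimes\sigma)$, commute $\mathcal{U}_{L,x}$ through $\mathcal{V}$, split it back as $\mathcal{U}_{L_{s'},x}\otimes\mathcal{U}_{L_{a'},x}$, pull $\mathcal{U}_{L_{s'},x}$ outside the partial trace over $\mathrm{a}'$, and finally use $[L_{a'},E]=0$ with cyclicity of the partial trace to remove $\mathcal{U}_{L_{a'},x}$. This yields $\mathcal{E}(\mathcal{U}_{L_s,x}(\rho))=\mathcal{U}_{L_{s'},x}(\mathcal{E}(\rho))$, which is exactly translational covariance.

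For the necessity (``only if'') direction I would build a covariant dilation from the mode structure that covariance imposes on the Kraus operators. Given a Kraus decomposition $\mathcal{E}(\cdot)=\sum_k K_k(\cdot)K_k^\dagger$, Eq.~\eqref{TI_opo2} forces the transformed operators $U_{L_{s'},x}K_kU_{L_s,x}^\dagger$ to generate the same map, hence to be related to the $K_k$ by a (one-parameter, possibly projective but liftable) unitary representation $w(x)$ of the translation group; diagonalizing $w(x)$ lets me choose Kraus operators of definite mode, $U_{L_{s'},x}K_kU_{L_s,x}^\dagger=e^{i\mu_k x}K_k$. I would then assemble the standard isometry $W=\sum_k K_k\otimes|k\rangle_{\mathrm{a}'}$ and endow the output ancilla with the generator $L_{a'}|k\rangle_{\mathrm{a}'}=\mu_k|k\rangle_{\mathrm{a}'}$, so that the definite-mode property makes $W$ an intertwiner, i.e.\ a covariant isometry from $\mathcal{H}_s$ into $\mathcal{H}_{s'}\otimes\mathcal{H}_{a'}$. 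Next I would extend $W$ to a covariant unitary $V$ on $\mathcal{H}_s\otimes\mathcal{H}_a$, embedding $\mathcal{H}_s$ as $\mathcal{H}_s\otimes|0\rangle_{\mathrm{a}}$ with $|0\rangle_{\mathrm{a}}$ a zero-mode (hence invariant) ancilla state, $\sigma=|0\rangle\langle0|_{\mathrm{a}}$, and $E=I_{\mathrm{a}'}$ in the trace-preserving case. The trace-nonincreasing case reduces to this one: covariance gives $[\sum_k K_k^\dagger K_k,L_s]=0$, so $I-\sum_k K_k^\dagger K_k$ is block-diagonal in the charge sectors and can be completed by definite-mode ``failure'' Kraus operators into a covariant channel; post-selecting with the charge-sector projector $E=\sum_k|k\rangle\langle k|_{\mathrm{a}'}$, which commutes with $L_{a'}$ and is therefore incoherent, recovers $\mathcal{E}$.

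The main obstacle is the covariant extension of the isometry $W$ to a covariant unitary $V$: such a unitary exists only if the orthogonal complements $(\mathcal{H}_s\otimes\mathcal{H}_a)\ominus(\mathcal{H}_s\otimes|0\rangle_{\mathrm{a}})$ and $(\mathcal{H}_{s'}\otimes\mathcal{H}_{a'})\ominus W(\mathcal{H}_s)$ carry equivalent representations of the translation group. I would secure this by exploiting the freedom to choose the dimension and representation content of the ancilla $\mathcal{H}_a$, padding it with enough copies of each charge sector so that the two complements have matching multiplicities, after which Schur's lemma supplies the covariant unitary. Care is needed in the infinite-dimensional or continuous-spectrum setting (e.g.\ $L$ a number or momentum operator), where one must check that the mode decomposition and the multiplicity-matching argument still go through.
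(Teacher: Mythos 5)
Your sufficiency argument coincides with the paper's proof of that direction: the same chain of manipulations (absorb $\sigma$ into a collective translation, commute the translation through $\mathcal{V}$, pull the system translation out of the partial trace, and cancel the ancilla translation against the incoherent effect). For necessity, however, you part ways with the paper, which does not prove this direction at all but simply cites the Keyl--Werner result on Stinespring dilations of group-covariant operations. Your route --- extract a Kraus decomposition whose elements have definite mode, form the isometry $W=\sum_k K_k\otimes|k\rangle_{\mathrm{a}'}$, grade the output ancilla by the mode labels so that $W$ intertwines the translations, and extend to a covariant unitary by padding the input ancilla until the representation content of the two orthogonal complements matches --- is the standard constructive proof of such covariant dilation theorems, and it is sound for abelian one-parameter groups: for a minimal Kraus decomposition the interrelating matrix $w(x)$ is an honest continuous unitary representation of the group and can be diagonalized, so the definite-mode Kraus operators exist. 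Note that this inverts the paper's logical order: the paper obtains the definite-mode Kraus characterization (Proposition~\ref{defnKraus}) as a consequence of the Stinespring dilation, whereas you establish the mode structure directly from Eq.~\eqref{TI_opo2} and use it to build the dilation, so there is no circularity. Your handling of the trace-nonincreasing case (the mode-zero completion of $I-\sum_k K_k^\dag K_k$, which commutes with $L_{\rm s}$ because each $K_k^\dag K_k$ has mode zero, followed by post-selection on the charge-sector projector) is also correct. What your necessity argument buys is a self-contained, explicit construction in place of an external citation; what it leaves thinner than a full proof is exactly what you flag, namely the multiplicity-matching needed to extend the covariant isometry to a covariant unitary and the continuous-spectrum case, both of which require additional but standard care.
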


\color{black}
\begin{proof}
The proof that any operation of the form of Eq.~\eqref{TIStinespring} is translationally-covariant is as follows:
\bes
\begin{align}
&\mathcal{E}\big(\mathcal{U}_{L_{\rm s},x}(\rho)\big)\\
&=\text{tr}_{\rm a'} \Big(E \mathcal{V} \Big(\mathcal{U}_{L_{\rm s},x}(\rho)\otimes\sigma\Big) \Big)\ \\ 
&=\text{tr}_{\rm a'}\Big(E\; \mathcal{V}  \Big( \mathcal{U}_{L_{\rm s},x}(\rho)\otimes \mathcal{U}_{L_{\rm a},x}(\sigma)\Big) \Big)\\ 
%&=\text{tr}_{\rm a'}\Big(E\; \mathcal{V}  \Big(\mathcal{U}_{L,x}(\rho\otimes \sigma)\Big) \Big)\\ 
%&=\text{tr}_{\rm a'} \Big(E\; \mathcal{U}_{L,x} \Big( \mathcal{V} [\rho\otimes\sigma] \Big)\Big)\\ 
&=\text{tr}_{\rm a'} \Big(E\; \mathcal{U}_{L_{\rm s'},x}\otimes \mathcal{U}_{L_{\rm a'},x} \Big( \mathcal{V} [\rho\otimes\sigma] \Big)\Big)\\ 
&= \mathcal{U}_{L_{\rm s'},x} \Big( \text{tr}_{\rm a'} \Big(\mathcal{U}^{\dag}_{L_{\rm a'},x} (E) \; \Big( \mathcal{V} [\rho\otimes\sigma] \Big)\Big)\Big)\\ 
&=\mathcal{U}_{L_{\rm s'},x} \Big( \text{tr}_{\rm a'} \big(E\;  \mathcal{V} [\rho \otimes\sigma] \big)\Big)\\ &=\mathcal{U}_{L_{\rm s'},x}(\mathcal{E}(\rho)),
\end{align}
\ees
where in the second equality, we have used that fact that $\sigma$ is an incoherent  state;  in the third equality, we have used the fact that $\mathcal{V}$ is a translationally-covariant operation; in the fifth equality, we have used the fact that $E$ is an incoherent effect; and in the last equality, we have used Eq.~\eqref{TIStinespring}.
\color{black}

For the converse implication, we refer the reader to the result on the form of the Stinespring dilation for group-covariant quantum operations by Keyl and Werner \cite{keyl1999optimal}. 
 \end{proof}

\subsubsection{Characterization via modes of translational asymmetry}\label{Modes}

We begin by introducing some technical machinery.

Denote the preferred subspaces of $\mathcal{H}$ relative to which coherence is evaluated, that is, the eigenspaces of $L$, by $\{ \mathcal{H}_{\lambda_n} \}$, where $\{\lambda_n\}_n$ is the set of eigenvalues of $L$ (these may be discrete or continuous). Let the set of \emph{modes} $\Omega $ be the set of the gaps between all eigenvalues, i.e. $\{\lambda_n-\lambda_m\}_{n,m}$. In the case where $L$ is the system Hamiltonian, each element of $\Omega$ can be interpreted as  a \emph{frequency} of the system. 
% $\{ \Pi_n \}$.  
%If these subspaces are not 1-dimensional, then we can 
%We define an orthonormal basis for each, indexed by $\alpha$, so that, $$\mathcal{H}_n \equiv {\rm span}\{ |n,\alpha\rangle  \}_{\alpha}.$$  

Elements of the set $\Omega$ label different modes in the system. For any $\omega \in \Omega$, define the superoperator
\beq 
\mathcal{P}^{(\omega)}= \lim_{x_0\rightarrow \infty} \frac{1}{2x_0} \int_{-x_0}^{x_0}{\rm d}x\  e^{-i\omega x}\  \mathcal{U}_{L,x} \ ,
\eeq
where $\mathcal{U}_{L, x}(\cdot)=U_{L, x}(\cdot)  U^{\dag}_{L, x} = e^{-i x L}(\cdot)  e^{i x L}$. 
%Roughly speaking, 
This superoperator is the projector that erases all the terms in the input operator except those which connect eigenstates of $L$ whose eigenvalues are different by $\omega$, i.e., all except those which are of the form $ |\lambda_n,\alpha \rangle\langle\lambda_n+\omega,\beta|$, where $ |\lambda_n, \alpha \rangle$ and $ |\lambda_n+\omega,\beta \rangle$ are eigenstates of $L$ with eigenvalues $\lambda_n$ and $\lambda_n+\omega$ respectively. \color{black} One can easily show that
\bes\label{modes-main}
\begin{align}
\sum_{\omega\in \Omega} \mathcal{P}^{(\omega)}&=\mathcal{I}_\text{id}\\
 \mathcal{P}^{(\omega)}\circ \mathcal{P}^{(\omega')}&=\delta_{\omega,\omega'}  \mathcal{P}^{(\omega)}\\
 \mathcal{P}^{(0)}&=\mathcal{D}\\  \mathcal{U}_{L,x}\circ \mathcal{P}^{(\omega)}&=e^{i \omega x}\mathcal{P}^{(\omega)}\ ,
\end{align}
\ees
where $\mathcal{I}_\text{id}$ is the identity superoperator, and $\delta_{\omega,\omega'}$ is the Kronecher delta.

The set of superoperators $\{\mathcal{P}^{(\omega)}: \omega\in\Omega \}$ are a complete set of projectors to different subspaces of the operator space $\mathcal{B}$. It can be easily shown that these subspaces are orthogonal according to the Hilbert-Schmidt inner product, defined by $(X,Y) \equiv {\rm Tr}( X^{\dag} Y)$ for arbitrary pair of operators $X,Y\in\mathcal{B}$.  Therefore, the operator space $\mathcal{B}$ can be decomposed into a direct sum of operator subspaces, $\mathcal{B} = \bigoplus_{\omega\in\Omega} \mathcal{B}^{(\omega)}$, where each $\mathcal{B}^{(\omega)}$ is the image of $\mathcal{P}^{(\omega)}$.

Note that any operator in the operator subspace $\mathcal{B}^{(\omega)}$ transforms distinctively under translations,
\begin{eqnarray}\label{modektransf}
A\in  \mathcal{B}^{(\omega)} \Longrightarrow \mathcal{U}_{L,x} ( A) = e^{i\omega x} A.
\end{eqnarray}
We refer to $\mathcal{B}^{(\omega)}$ as the ``mode $\omega$'' operator subspace.  For any operator $A$, the component of that operator in the operator subspace $\mathcal{B}^{(\omega)}$, denoted $$A^{(\omega)} \equiv \mathcal{P}^{(\omega)} (A),$$ is termed the ``mode $\omega$ component of $A$''.

Clearly, every incoherent (i.e. translationally symmetric) state lies entirely within the mode 0 operator subspace, while a coherent (i.e. translationally asymmetric) state has a component in at least one mode $\omega$ operator subspace with $\omega\ne 0$. %Equivalently, every translationally symmetric state is in the mode 0 operator subspace, while every translationally asymmetric state has a component in at least one mode $k$ operator subspace with $k\ne 0$.  

Operator subspaces associated with distinct $\omega$ values have been called ``modes of asymmetry'' in Ref.~\cite{Modes}, where the decomposition of states, operations and  measurements into their different modes was shown to constitute a powerful tool in the resource theory of asymmetry.   

\begin{example} Consider the special case where $J_{\hat{z}}$ is the angular operator in the $\hat{z}$ direction. For simplicity, assume that $J_{\hat{z}}$ is non-degenerate and let $\{|m\rangle \}_{m}$ be its orthonormal eigenbasis, where $|m\rangle$ is the eigenstate of $J_{\hat{z}}$  with eigenvalue $m$. Since the eigenvalues of the angular momentum operator are all separated by integers, it follows the  set of modes $\Omega$ is a subset of the integers, $\Omega \subseteq \mathbb{Z}$. Then, for each integer $k\in \Omega$ we have
\begin{equation}
\mathcal{B}^{(k)} \equiv {\rm span} \{ |m \rangle\langle m+k|\}_{m} .
\end{equation}
Furthermore, the mode $k$ component of any operator $A$ is given by
\beq
A^{(k)}=\sum_{m} |m\rangle\langle m+k|\  \langle m|A|m+k\rangle\ .
\eeq
The mode $k$ of the density operator $\rho$ corresponds to \emph{coherence of order $k$} in the context of magnetic resonance techniques \cite{cappellaro2014implementation}.
\end{example}

%In the resource theory of coherence, by contrast, the only important distinction is between the operator subspace with $k=0$, and the direct sum of operator subspaces with $k\ne 0$.  

With these notions in hand, we can provide the mode-based characterization of the translationally-covariant operations.
% alternative definition of the free operations. 

%\begin{definition}\label{defnmodes}
\begin{proposition}\label{defnmodes}
A quantum operation  $\mathcal{E}$ is translationally covariant relative to the generator $L$ if and only if it preserves the modes of asymmetry associated to $L$, that is, if and only if the mode $\omega\in \Omega$ component of the input state is mapped to the mode $\omega\in\Omega$ component of the output state. Formally, the condition is that whenever $\mathcal{E}(\rho)=\sigma$, we have $\mathcal{E}(\rho^{(\omega)})=\sigma^{(\omega)}$ where $\rho^{(\omega)} \equiv \mathcal{P}^{(\omega)} (\rho)$ and $\sigma^{(\omega)} \equiv \mathcal{P}^{(\omega)} (\sigma)$.  Note that $\sigma$ is only a normalized state if $\mathcal{E}$ is a channel (i.e. trace-preserving) and is otherwise subnormalized. 
%\end{definition}
\end{proposition}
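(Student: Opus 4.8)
The plan is to reduce the mode-preservation condition to the single operator identity $\mathcal{E}\circ\mathcal{P}^{(\omega)} = \mathcal{P}^{(\omega)}\circ\mathcal{E}$ for every $\omega\in\Omega$, and then to establish its equivalence with translational covariance, Eq.~\eqref{TI_opo}, by leaning entirely on the integral representation of $\mathcal{P}^{(\omega)}$ together with the algebraic relations collected in Eq.~\eqref{modes-main}. First I would record the reformulation. Since $\rho^{(\omega)} = \mathcal{P}^{(\omega)}(\rho)$ and $\sigma^{(\omega)} = \mathcal{P}^{(\omega)}(\mathcal{E}(\rho))$, the stated requirement that $\mathcal{E}(\rho^{(\omega)}) = \sigma^{(\omega)}$ hold for every input $\rho$ and every $\omega$ is, by linearity of all the maps involved, exactly the statement $\mathcal{E}\circ\mathcal{P}^{(\omega)} = \mathcal{P}^{(\omega)}\circ\mathcal{E}$.

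For the forward direction, assuming covariance, I would push $\mathcal{E}$ through the defining integral of $\mathcal{P}^{(\omega)}$. Because $\mathcal{E}$ is a bounded linear superoperator it commutes with the limit and the integral, so that $\mathcal{E}\circ\mathcal{P}^{(\omega)} = \lim_{x_0\to\infty}\frac{1}{2x_0}\int_{-x_0}^{x_0}\mathrm{d}x\,e^{-i\omega x}\,(\mathcal{E}\circ\mathcal{U}_{L,x})$. Substituting the covariance hypothesis $\mathcal{E}\circ\mathcal{U}_{L,x} = \mathcal{U}_{L,x}\circ\mathcal{E}$ and then pulling $\mathcal{E}$ back out of the integral yields precisely $\mathcal{P}^{(\omega)}\circ\mathcal{E}$, establishing mode preservation.

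For the converse, assuming $\mathcal{E}\circ\mathcal{P}^{(\omega)} = \mathcal{P}^{(\omega)}\circ\mathcal{E}$ for all $\omega$, I would use the resolution of the identity $\sum_{\omega}\mathcal{P}^{(\omega)} = \mathcal{I}_{\mathrm{id}}$ and the eigenrelation $\mathcal{U}_{L,x}\circ\mathcal{P}^{(\omega)} = e^{i\omega x}\mathcal{P}^{(\omega)}$, both from Eq.~\eqref{modes-main}, to obtain the spectral expansion $\mathcal{U}_{L,x} = \mathcal{U}_{L,x}\circ\sum_{\omega}\mathcal{P}^{(\omega)} = \sum_{\omega}e^{i\omega x}\mathcal{P}^{(\omega)}$. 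Then $\mathcal{E}\circ\mathcal{U}_{L,x} = \sum_{\omega}e^{i\omega x}\,\mathcal{E}\circ\mathcal{P}^{(\omega)} = \sum_{\omega}e^{i\omega x}\,\mathcal{P}^{(\omega)}\circ\mathcal{E} = \mathcal{U}_{L,x}\circ\mathcal{E}$, where the middle equality invokes the commutation hypothesis and the last step reassembles the spectral expansion on the left of $\mathcal{E}$. This recovers Eq.~\eqref{TI_opo}.

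I do not expect a deep obstacle: the argument is essentially algebraic, and each ingredient is already supplied by Eq.~\eqref{modes-main}. The only point demanding care is the interchange of $\mathcal{E}$ with the limit and the integral in the forward direction. This is harmless for bounded superoperators on a finite-dimensional operator space, but if one wishes to cover the infinite-dimensional or continuous-spectrum setting mentioned elsewhere in the paper, I would supply a short justification via continuity of $\mathcal{E}$ together with a dominated-convergence argument for the averaging integral. I would also note in passing that the same manipulation gives the companion relation $\mathcal{P}^{(\omega)}\circ\mathcal{U}_{L,x} = e^{i\omega x}\mathcal{P}^{(\omega)}$ by a change of integration variable, confirming that the mode projectors themselves commute with the translations, which makes the overall structure transparent.
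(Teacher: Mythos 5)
Your proof is correct and follows essentially the same route the paper intends: the paper's proof consists of the remark that the result "follows immediately from the properties listed in Eq.~(\ref{modes-main})," and your argument is precisely the fleshed-out version of that, reducing mode preservation to $\mathcal{E}\circ\mathcal{P}^{(\omega)}=\mathcal{P}^{(\omega)}\circ\mathcal{E}$ and then using the integral representation of $\mathcal{P}^{(\omega)}$ for one direction and the spectral expansion $\mathcal{U}_{L,x}=\sum_{\omega}e^{i\omega x}\mathcal{P}^{(\omega)}$ for the other. Your caveat about interchanging $\mathcal{E}$ with the averaging limit is the right point to flag and is handled appropriately.
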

The proof follows immediately from properties listed in Eq.(\ref{modes-main}) (See \cite{Modes} for further discussion).

\color{black}

%So, to summarize we have
%\bes
%\begin{align}
%&\rho^{(k)}=\mathcal{M}^{(k)}(\rho) ,\ \ \   \sigma^{(k)}=\mathcal{M}^{(k)}(\sigma) \ ,\\
%&\rho=\sum_{k\in \mathbb{Z}} \rho^{(k)},\ \ \ \sigma=\sum_{k\in \mathbb{Z}} \sigma^{(k)}\\
%&\mathcal{E}_\text{sym}(\rho^{(k)})=\sigma^{(k)}\ , \ \ \ \ \ \ \ \  \forall k\in \mathbb{Z} .
%\end{align}
%\ees
%This means that under symmetric operations the mode $k$ component of the input state is mapped to the mode $k$ component of the output state.

%\subsubsection{Definition in terms of Kraus decomposition}
\subsubsection{Characterization via Kraus decomposition}

%\begin{definition}\label{defnKraus}
\begin{proposition}\label{defnKraus}
A quantum operation $\mathcal{E}$ is translationally covariant if and only if it admits of a Kraus decomposition of the form
\begin{equation}
%\mathcal{E}(\cdot) = \sum_{\omega\in\Omega,\alpha} K_{\omega,\alpha} (\cdot) K_{\omega,\alpha}^{\dag},
\mathcal{E}(\cdot) = \sum_{\omega,\alpha} K_{\omega,\alpha} (\cdot) K_{\omega,\alpha}^{\dag},
\end{equation}
where the elements of the set $\{ K_{\omega,\alpha}\}_{\alpha}$  are all mode $\omega$ operators.
\end{proposition}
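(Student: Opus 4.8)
The plan is to prove the two directions separately, with the ``if'' direction a direct computation and the ``only if'' direction the substantive one, which I would handle by a group-averaging argument built on the mode machinery already developed.

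For the easy direction, suppose $\mathcal{E}(\cdot)=\sum_{\omega,\alpha}K_{\omega,\alpha}(\cdot)K_{\omega,\alpha}^{\dag}$ with each $K_{\omega,\alpha}\in\mathcal{B}^{(\omega)}$. By Eq.~\eqref{modektransf}, a mode $\omega$ operator obeys $U_{L,x}K_{\omega,\alpha}=e^{i\omega x}K_{\omega,\alpha}U_{L,x}$, where the generator on the left acts on the output space and the one on the right on the input space. I would then conjugate $\mathcal{E}(\rho)$ by $U_{L,x}$, push each $U_{L,x}$ through its Kraus operator at the cost of a phase $e^{i\omega x}$ on the left and $e^{-i\omega x}$ on the right, observe that these phases cancel termwise, and read off $\mathcal{U}_{L,x}\circ\mathcal{E}=\mathcal{E}\circ\mathcal{U}_{L,x}$, i.e.\ Eq.~\eqref{TI_opo}.

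For the converse, I would start from an arbitrary Kraus decomposition $\mathcal{E}(\cdot)=\sum_{\mu}A_{\mu}(\cdot)A_{\mu}^{\dag}$ and split each Kraus operator into its modes, $A_{\mu}=\sum_{\omega}A_{\mu}^{(\omega)}$ with $A_{\mu}^{(\omega)}\equiv\mathcal{P}^{(\omega)}(A_{\mu})$. Rewriting the covariance condition \eqref{TI_opo2} as $\mathcal{E}=\mathcal{U}_{L,x}\circ\mathcal{E}\circ\mathcal{U}^{\dag}_{L,x}$ and using $\mathcal{U}_{L,x}(A_{\mu})=\sum_{\omega}e^{i\omega x}A_{\mu}^{(\omega)}$ from Eq.~\eqref{modektransf} gives, for every $x$, the identity $\mathcal{E}(\rho)=\sum_{\mu}\sum_{\omega,\omega'}e^{i(\omega-\omega')x}A_{\mu}^{(\omega)}\rho\,(A_{\mu}^{(\omega')})^{\dag}$. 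Since the left-hand side is independent of $x$, I would average both sides over $x$ with the same limit $\lim_{x_0\to\infty}\tfrac{1}{2x_0}\int_{-x_0}^{x_0}\mathrm{d}x$ that defines $\mathcal{P}^{(\omega)}$; by the orthogonality relations underlying Eq.~\eqref{modes-main} this annihilates every cross term with $\omega\neq\omega'$ and leaves $\mathcal{E}(\cdot)=\sum_{\mu,\omega}A_{\mu}^{(\omega)}(\cdot)(A_{\mu}^{(\omega)})^{\dag}$. Setting $K_{\omega,\alpha}\equiv A_{\alpha}^{(\omega)}$ yields exactly the claimed Kraus form, since each $A_{\mu}^{(\omega)}\in\mathcal{B}^{(\omega)}$ by construction.

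The main obstacle I anticipate is not the algebra but the spectral subtleties. When $L$ has continuous spectrum the set of modes $\Omega$ is a continuum, the group is $\mathbb{R}$ rather than $U(1)$, and the averaging produces Dirac rather than Kronecker deltas, so the orthogonality step and the resulting ``sum over $\omega$'' must be read as an integral and the Kraus family may become a continuous one. I would therefore either state the clean result for $\Omega$ discrete, so that Eq.~\eqref{modes-main} holds with Kronecker deltas, leaning on the mode bookkeeping already packaged in Proposition~\ref{defnmodes}, or note that a cleaner route in the general case is via the Choi matrix: covariance is equivalent to $[J(\mathcal{E}),\,U^{\rm out}_{L,x}\otimes\overline{U^{\rm in}_{L,x}}]=0$, and diagonalizing the positive operator $J(\mathcal{E})$ within the joint eigenspaces of this abelian family produces eigenvectors carrying a definite character $e^{i\omega x}$, whose images under the vectorization isomorphism are precisely mode $\omega$ Kraus operators.
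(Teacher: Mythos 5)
Your proof is correct. The ``if'' direction is essentially identical to the paper's own computation: conjugate by $U_{L,x}$, use Eq.~\eqref{modektransf} to pull a phase $e^{i\omega x}$ out of each mode-$\omega$ Kraus operator, and observe the termwise cancellation. The difference lies in the converse: the paper does not actually write out a proof of that direction, instead deferring to the general characterization of Kraus decompositions of group-covariant operations in Ref.~\cite{gour2008resource}, or alternatively to an argument via the Stinespring dilation of Proposition~\ref{propTCStinespring}. Your group-averaging argument --- decompose an arbitrary Kraus family into modes, insert this into the covariance condition \eqref{TI_opo2}, and average over $x$ so that the orthogonality $\lim_{x_0\to\infty}\frac{1}{2x_0}\int_{-x_0}^{x_0}dx\,e^{i(\omega-\omega')x}=\delta_{\omega,\omega'}$ kills the cross terms --- is a self-contained and valid specialization of that cited result to the translation group, and the resulting map is automatically a legitimate Kraus decomposition of $\mathcal{E}$ since it is manifestly completely positive and equals $\mathcal{E}$. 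What your route buys is that the reader need not unpack the representation-theoretic machinery of the general $G$-covariant case; what the paper's citations buy is uniformity with the non-Abelian setting and, via the dilation route, a treatment that sidesteps the interchange of limit and sum. Your caveats are well placed: the paper's statement, with its sum over $\omega\in\Omega$ and the Kronecker delta in Eq.~\eqref{modes-main}, already implicitly assumes a discrete mode set, so restricting the clean averaging argument to that case (or passing to the Choi-matrix argument, which is also sound) loses nothing relative to what the paper claims.
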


To see that any quantum operation with such a Kraus decomposition is translationally covariant, we note that 
\begin{equation}
\mathcal{U}_{L,x} \circ \mathcal{E}  \circ  \mathcal{U}^{\dag}_{L,x} =\sum_{\omega\in\Omega,\alpha} \mathcal{U}_{L,x}(K_{\omega,\alpha}) (\cdot) (\mathcal{U}_{L,x}(K_{\omega,\alpha}))^{\dag}
%\mathcal{E}(\cdot) = \sum_{k,\alpha} K_{k,\alpha} (\cdot) K_{k,\alpha}^{\dag},
\end{equation}
and then use the fact that $K_{\omega,\alpha} \in \mathcal{B}^{(\omega)}$ and Eq.~\eqref{modektransf} to infer that $\mathcal{U}_{L,x}(K_{\omega,\alpha}) = e^{i\omega x} K_{\omega,\alpha}$, which in turn implies 
\begin{equation}
\forall x\in \mathbb{R}:\mathcal{U}_{L,x} \circ \mathcal{E}  \circ  \mathcal{U}^{\dag}_{L,x} =\sum_{\omega,\alpha} K_{\omega,\alpha} (\cdot) K_{\omega,\alpha}^{\dag}= \mathcal{E},
\end{equation}
which, from Eq.~\eqref{TI_opo2}, simply asserts the translational covariance of $\mathcal{E}$.

%The proof that this definition is equivalent to definition ?? proceeds as follows.  
%First, note that if $\mathcal{E}$ is a translationally covariant operation, then 
%\begin{equation}
%\mathcal{T}(g)\circ\mathcal{E}\circ\mathcal{T}(g^{-1}) = \mathcal{E}.
%\end{equation}

%The proof that every translationally-covariant operation has a Kraus decomposition of the form specified is more involved.  It can be inferred from a result in Ref.~\cite{gour2008resource} which characterizes the Kraus decomposition of any group-covariant operation, by specializing the result to the case of a translation group.

The proof that every translationally-covariant operation has a Kraus decomposition of the form specified can be inferred from a result in Ref.~\cite{gour2008resource} which characterizes the Kraus decomposition of any group-covariant operation, by specializing the result to the case of a translation group.  Alternatively, it can be inferred from the Stinespring dilation, Proposition \ref{propTCStinespring}, using a \color{black} slight generalization (from channels to all operations) of the \color{black} argument provided in Appendix A.1 of Ref.~\cite{marvian2015quantum}. 
%\color{red} [Do these proofs cover only the case of channels? ]\color{black}

\color{black}
Proposition \ref{defnKraus} also implies:
\begin{corollary}
A quantum operation $\mathcal{E}$ is translationally-covariant if and only if it admits of a Kraus decomposition every term of which is translationally-covariant. 
\end{corollary}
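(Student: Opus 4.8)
The plan is to obtain this corollary as an essentially immediate consequence of Proposition \ref{defnKraus}, so that the real work is already done upstream. The bridge I need is the observation that a single-Kraus-operator map $\mathcal{K}(\cdot) = K(\cdot)K^{\dag}$ is translationally-covariant exactly when $K$ is a mode operator, i.e. $K\in\mathcal{B}^{(\omega)}$ for some $\omega\in\Omega$.

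First I would record the easy half of this observation, which is the only half the corollary actually requires. If $K\in\mathcal{B}^{(\omega)}$, then Eq.~\eqref{modektransf} gives $\mathcal{U}_{L,x}(K)=e^{i\omega x}K$, equivalently $U_{L,x}K U^{\dag}_{L,x}=e^{i\omega x}K$. Substituting this into $\mathcal{U}_{L,x}\circ\mathcal{K}\circ\mathcal{U}^{\dag}_{L,x}$, the factors $e^{i\omega x}$ and $e^{-i\omega x}$ coming from $K$ and $K^{\dag}$ cancel, leaving $\mathcal{K}$; by Eq.~\eqref{TI_opo2} this is precisely the statement that $\mathcal{K}$ is translationally-covariant. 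This is just the single-term version of the computation already given in the proof of Proposition \ref{defnKraus}.

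With this in hand, the forward direction is immediate: by Proposition \ref{defnKraus} any translationally-covariant $\mathcal{E}$ admits a Kraus decomposition $\mathcal{E}(\cdot)=\sum_{\omega,\alpha}K_{\omega,\alpha}(\cdot)K^{\dag}_{\omega,\alpha}$ in which every $K_{\omega,\alpha}$ is a mode $\omega$ operator, and hence by the observation every term $K_{\omega,\alpha}(\cdot)K^{\dag}_{\omega,\alpha}$ is itself translationally-covariant. The reverse direction needs nothing more than linearity: the defining condition Eq.~\eqref{TI_opo2} is linear in the operation, so if $\mathcal{E}=\sum_j\mathcal{K}_j$ with each term $\mathcal{K}_j$ translationally-covariant, then $\mathcal{U}_{L,x}\circ\mathcal{E}\circ\mathcal{U}^{\dag}_{L,x}=\sum_j\mathcal{U}_{L,x}\circ\mathcal{K}_j\circ\mathcal{U}^{\dag}_{L,x}=\sum_j\mathcal{K}_j=\mathcal{E}$.

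I do not expect a genuine obstacle here, since all the substance lives in Proposition \ref{defnKraus}. The only point that takes a slight argument is the converse half of the bridging observation---that a translationally-covariant single-Kraus map forces $K$ into a single mode---but this is not needed for the corollary and can be omitted. Were one to want it, the argument is that translational covariance of $\mathcal{K}$ gives $(U_{L,x}K)\rho(U_{L,x}K)^{\dag}=(KU_{L,x})\rho(KU_{L,x})^{\dag}$ for all $\rho$, forcing $U_{L,x}K=e^{i\theta(x)}KU_{L,x}$ for some phase $\theta(x)$; expanding $K=\sum_{\omega}K^{(\omega)}$ and invoking the Hilbert-Schmidt orthogonality of the subspaces $\mathcal{B}^{(\omega)}$ then shows that only a single mode can be populated.
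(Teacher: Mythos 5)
Your proof is correct and follows essentially the same route as the paper, which likewise derives the corollary by noting that each term of the mode-resolved Kraus decomposition from Proposition \ref{defnKraus} is itself translationally-covariant (the phase cancellation you spell out), with the reverse direction left to linearity. Your extra remark on the converse of the bridging observation is fine but, as you note, not needed.
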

It suffices to note that each term of the Kraus decomposition specified in proposition \ref{defnKraus} is translationally-covariant.  It follows that if the different terms in this Kraus decomposition correspond to the different outcomes of a measurement, then even one who  post-selected on a particular outcome would describe the resulting operation as translationally-covariant.
\color{black}

%All unitary operations within a given subspace $\mathcal{H}_{n}$ are U(1)-invariant. However, there are more U(1)-invariant unitaries besides these, specifically, nontrivial unitaries on the subspace $\mathcal{H}^{\prime }=$ $\mathrm{span}\{\left\vert n\right\rangle \}$. Because unitary operations have a single Kraus operator, they are irreducible U(1)-invariant operations. However, the only way in which a single Kraus operator $K$ can be unitary is if $k=0$ in Eq.~(\ref{eq:krausforU1}), i.e. the operation does not allow shifts in the number, and $|c_{n}|=1,$ so that $K$ must have the form $\sum_{n}e^{i\chi_{n}}|n\rangle \langle n|.$ All told, the unitary operations that are U(1)-invariant have the effect of merely changing the relative phases of the $\left\vert n\right\rangle$.

%\subsection{Why the restriction to translationally-covariant operations arises naturally}
\subsection{Physical justifications for the restriction to translationally-covariant operations}\label{JustificationsTC}
%\label{NaturalnessTC}
%\subsection{Does the restriction to dephasing-covariant operations arise naturally?}

As noted in the introduction, it is critical that any definition of the restricted set of operations in a resource theory must be justifiable operationally.  
%There must be some operational constraint that distinguishes the restricted operations such that all and only these operations are \emph{free}. 
In this section, we discuss different physical scenarios in which the set of translationally-covariant  operations are naturally distinguished as the set of easy or freely-available operations.

\subsubsection{Fundamental or effective symmetries of Hamiltonians}
%{\bf Fundamental or effective symmetries of Hamiltonians.} 

If, for a set of systems, the Hamiltonians one can access are symmetric and the states and measurements that one can implement are also symmetric, then for any given system only symmetric operations are possible\footnote{This is the easy half of the dilation theorem in Prop.~\ref{TIStinespring}.}.  Such symmetry constraints can sometimes be understood to be consequences of fundamental or effective symmetries in the problem.

A constraint of translational symmetry on the Hamiltonian is fundamental if it arises from a fundamental symmetry of nature, such as a symmetry of space-time.  It is effective if it arises from practical constraints, for instance, if one is interested in time scales or energy scales for which a symmetry-breaking term in the Hamiltonian becomes negligible.  A translational symmetry constraint on the states and measurements can sometimes arise as a consequence of this symmetry of the Hamiltonian.   For instance, if the only states that one can freely prepare are those that are thermal, then given that thermal states depend only on the Hamiltonian, any fundamental or effective symmetry of the Hamiltonian is inherited by the thermal states.

\color{black}
\subsubsection{Lack of shared reference frames}
%{\bf Lack of shared reference frames.}
\label{Sec:RF}

The most natural experimental restriction that leads to translationally-covariant operations is when one lacks access to any reference frame relative to which the translations can be defined.  Such a lack of access can arise in a few ways.

For a pair of separated parties, each party may have a local reference frame, but no information about the relation between the two reference frames. For instance, a pair of parties may each have access to a Caretesian reference frame (or clock, or phase reference), but not know what rotation (or time-translation or phase-shift) relates one to the other.  
%This kind of restriction
% is particularly important in  quantum optics experiments, where establishing  a shared phase reference between two distant parties requires extra resources. 

Under this kind of restriction, each party essentially lacks access to the reference frame of the other.  It has been shown that this lack of a shared reference frame implies that the only operations that one party can implement, relative to the reference frame of the other, are those that are group-covariant (see Refs.~\cite{QRF_BRS_07,MS11}). For instance, if two parties lack of a shared phase reference, then the only operations whose descriptions they can agree on are phase-covariant operations. 

It is also possible that the reference frame that one requires cannot even be prepared locally, due to technological limitations.  For instance, only after the experimental realization of Bose-Einstein condensation in atomic systems~\cite{anderson1995observation, davis1995bose}, was it possible to prepare a system that could serve as a reference frame for the phase conjugate to atom number.

\subsubsection{Metrology and phase estimation}\label{infodual}
%{\bf Metrology and phase estimation}

%So one cannot justify the restriction to translationally-covariant operations by lack of access to a reference frame. 
%The reason that the restriction to translationally-covariant operations is relevant to quantum metrology is, in fact, rather subtle.

Unspeakable coherence is the main resource for quantum metrology, and in particular phase estimation.  In this context, a state is a resource to the extent that it allows one to estimate an unknown translation applied to the state (such as a phase-shift, a rotation, or evolution for some time interval). 
Suppose one prepares a system in the state $\rho$ prior to it being subjected to a unitary translation $\mathcal{U}_{L,x}$, where $x$ is unknown.  In this case, one knows that the state after the translation is an element of the ensemble $\{ \mathcal{U}_{L,x}(\rho)\}_x$ and the task is to estimate $x$.    Clearly, if $\rho$ is invariant under translations (i.e., incoherent), then it is useless for the estimation task.  In this sense, translationally asymmetry, and hence coherence relative to the eigenspaces of the generator of translations, is a necessary resource for metrology.

Furthermore, as we show in the following,  in this context, the set of translationally-covariant operations has also a simple and natural interpretation.
Suppose one is interested in determining which of two states, $\rho$ and $\sigma$, is the better resource for the task of estimating an unknown translation.  To do so, one must determine which of the two encodings, $x\rightarrow  \mathcal{U}_{L,x}(\rho)$ and $x\rightarrow  \mathcal{U}_{L,x}(\sigma)$, carries more information about $x$. But suppose there exists a quantum operation $\mathcal{E}$ such that for all $x$,  it transforms $\mathcal{U}_{L,x}(\rho)$ to  $\mathcal{U}_{L,x}(\sigma)$, i.e.,
\beq\label{trans}
\forall x:\ \ \ \ \mathcal{E}(\mathcal{U}_{L,x}(\rho))=\mathcal{U}_{L,x}(\sigma)\ .
\eeq
Here, the quantum operation $\mathcal{E}$ can be thought as an information processing which we perform on the state before performing the measurement which yields the value of $x$. If such a quantum operation exists,  then we can be sure that the state $\rho$ is more useful than $\sigma$ for this metrological task. Because any information that we can obtain using the state $\sigma$, we can also obtain if we use the state $\rho$. 

It turns out that any such information processing $\mathcal{E}$ can be chosen to be translationally covariant with respect to translation $U_{L,x}$, i.e. 
\begin{proposition}\label{prop14}
For any given pair of states $\rho$ and $\sigma$ the following statements are equivalent:\\
\noindent (i) There exists a translationally-covariant quantum operation $\mathcal{E}$ 
%(i.e. satisfying Eq.~(\ref{TI_opo})) 
such that $\mathcal{E}(\rho)=\sigma$. \\
\noindent (ii) There exists a  quantum operation $\mathcal{E}$ such that $\mathcal{E}\left( \mathcal{U}_{L,x}(\rho)\right)= \mathcal{U}_{L,x}(\sigma)$, for all $x\in \mathbb{R}$. 
\end{proposition}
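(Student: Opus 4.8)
The plan is to prove the two implications separately; the direction (ii)$\Rightarrow$(i) is the substantive one and will rest on a group-averaging (``twirling'') construction.

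The implication (i)$\Rightarrow$(ii) is immediate. If $\mathcal{E}$ is translationally covariant then, by Definition~\ref{defnTI}, $\mathcal{E}\circ\mathcal{U}_{L,x}=\mathcal{U}_{L,x}\circ\mathcal{E}$ for all $x$, so
\begin{equation}
\mathcal{E}\big(\mathcal{U}_{L,x}(\rho)\big)=\mathcal{U}_{L,x}\big(\mathcal{E}(\rho)\big)=\mathcal{U}_{L,x}(\sigma),
\end{equation}
using $\mathcal{E}(\rho)=\sigma$. Hence the very same $\mathcal{E}$ witnesses (ii).

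For (ii)$\Rightarrow$(i), suppose $\mathcal{E}$ satisfies (ii) but is not assumed covariant. I would symmetrize it by averaging over the translation group, defining
\begin{equation}
\tilde{\mathcal{E}} \equiv \lim_{x_0\rightarrow\infty}\frac{1}{2x_0}\int_{-x_0}^{x_0}{\rm d}x\ \mathcal{U}_{L,-x}\circ\mathcal{E}\circ\mathcal{U}_{L,x},
\end{equation}
with the compact average $\frac{1}{2\pi}\int_0^{2\pi}{\rm d}x$ replacing the limit whenever the group is $U(1)$ rather than $\mathbb{R}$. Three facts then need to be checked. First, $\tilde{\mathcal{E}}$ is a legitimate quantum operation: each integrand $\mathcal{U}_{L,-x}\circ\mathcal{E}\circ\mathcal{U}_{L,x}$ is completely positive and trace-nonincreasing (a composition of unitary conjugations with $\mathcal{E}$), and the average is a convex combination of such maps. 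Second, $\tilde{\mathcal{E}}$ is covariant: shifting the integration variable $x\mapsto x-y$ and invoking translation-invariance of the measure gives $\mathcal{U}_{L,y}\circ\tilde{\mathcal{E}}\circ\mathcal{U}_{L,-y}=\tilde{\mathcal{E}}$ for all $y$, which is precisely condition~\eqref{TI_opo2}. Third, $\tilde{\mathcal{E}}(\rho)=\sigma$: inserting (ii) we obtain
\begin{equation}
\tilde{\mathcal{E}}(\rho)={\rm avg}_x\ \mathcal{U}_{L,-x}\big(\mathcal{E}(\mathcal{U}_{L,x}(\rho))\big)={\rm avg}_x\ \mathcal{U}_{L,-x}\big(\mathcal{U}_{L,x}(\sigma)\big)=\sigma,
\end{equation}
since the averaged quantity is the constant $\sigma$. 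Thus $\tilde{\mathcal{E}}$ witnesses (i).

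The one point requiring care is convergence of the average in the non-compact case ($\mathbb{R}$), and I expect this measure-theoretic bookkeeping to be the only genuine obstacle. The cleanest route is to pass to the mode decomposition of Sec.~\ref{Modes}. Writing $\mathcal{E}_{\omega'\omega}\equiv\mathcal{P}^{(\omega')}\circ\mathcal{E}\circ\mathcal{P}^{(\omega)}$ and using $\mathcal{U}_{L,x}\circ\mathcal{P}^{(\omega)}=e^{i\omega x}\mathcal{P}^{(\omega)}$ from Eq.~\eqref{modes-main}, one finds $\mathcal{U}_{L,-x}\circ\mathcal{E}_{\omega'\omega}\circ\mathcal{U}_{L,x}=e^{i(\omega-\omega')x}\mathcal{E}_{\omega'\omega}$, so that averaging annihilates every off-diagonal term and leaves
\begin{equation}
\tilde{\mathcal{E}}=\sum_{\omega\in\Omega}\mathcal{P}^{(\omega)}\circ\mathcal{E}\circ\mathcal{P}^{(\omega)}.
\end{equation}
This reduces the convergence question to the elementary average of $e^{i(\omega-\omega')x}$, and it simultaneously exhibits $\tilde{\mathcal{E}}$ as a manifestly mode-preserving map, hence covariant by Proposition~\ref{defnmodes}, giving an independent confirmation of the second check above. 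Everything else is direct substitution.
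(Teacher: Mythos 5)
Your proof is correct, and it is essentially the standard group-averaging (twirling) argument: the paper does not prove Proposition~\ref{prop14} inline but defers to Ref.~\cite{MS11}, whose proof for a general symmetry group is exactly this symmetrization $\tilde{\mathcal{E}}=\mathrm{avg}_x\,\mathcal{U}_{L,-x}\circ\mathcal{E}\circ\mathcal{U}_{L,x}$, specialized here to translations. Your use of the mode decomposition to control convergence in the non-compact case is a sensible and correct way to handle the only delicate point, and it is at the same level of rigor as the paper's own treatment of $\mathcal{P}^{(\omega)}$.
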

This is the specialization to the case of a translational symmetry group of a similar proposition for an arbitrary symmetry group the proof of which is presented in Ref.~\cite{MS11}, where we have also presented a version of this duality for pure states and unitaries, its  interpretation in terms of reference frames, and some of its applications. 

Statement (ii) in proposition~\ref{prop14} concerns the relative quality of $\rho$ and $\sigma$ as resources for metrology, while statement (i) concerns the relative quality of  $\rho$ and $\sigma$ within the resource theory defined by the restriction to translationally-covariant operations.  The partial order of quantum states under translationally-covariant operations, therefore, determines their relative worth as resources for metrology. Note that in this context, translationally-covariant operations are \emph{all and only} the operations that are relevant.

It follows from proposition \ref{prop14} that any function which quantifies  the  performance of states in this metrological task should be a measure of unspeakable coherence.

%But by information processing we cannot increase the measure of information. So, the amount of information we can extract about the unknown translation $x$ from the output state $U(x)\sigma U^{\dag}(x)$ cannot be larger than the amount of information we can extract from the input state $U(x)\rho U^{\dag}(x)$. We can quantify this amount information using the standard tools of information theory, and in this way we can indeed quantify the amount of asymmetry of the state. 

%\subsubsection{Coherence in quantum thermodynamics}
%\subsubsection{Thermal operations and quantum thermodynamics}
%\subsubsection{Time-translational-covariant operations in thermodynamics}
%{\bf Time-translational-covariant operations in thermodynamics.}
%{\bf Thermodynamics.}
\subsubsection{Thermodynamics}

The resource theory of {\em athermality} seeks to understand states deviating from thermal equilibrium as a resource \cite{brandao2015second, brandao2013resource, gour2015resource, skrzypczyk2014work, horodecki2013fundamental, aaberg2013truly, janzing2000thermodynamic, cwiklinski2015limitations} . The free operations defining the theory, termed {\em thermal operations} are all and only those that can be achieved using thermal states, unitaries that commute with the free Hamiltonian, and the partial trace operation.  (The restriction on unitaries is motivated by the fact that were one to allow more general unitaries, one could increase the energy of a system, thereby allowing thermodynamic work to be done for free.)

Noting that: (i) if a unitary commutes with the free Hamiltonian, then it is covariant under time-translations, and (ii) because thermal states are defined in terms of the free Hamiltonian, they are symmetric under time-translations, it follows from the dilation theorem for translationally-covariant operations (Prop.~\ref{propTCStinespring}) that the restriction to thermal operations implies a restriction to time-translation-covariant operations.

%(Note also that any state with coherence between the energy eigenspaces, i.e. asymmetric relative to time-translations, is an athermal state, which is a resource for thermodynamic tasks)

% Therefore, thermal operations are included in the set of time-translation-covariant operations.  Any state with coherence between the energy eigenspaces, which is to say one that is asymmetric relative to time-translations, is therefore 

%The method  of mode decompositions has been used in Refs.~\cite{lostaglio2015quantum} and \cite{lostaglio2015description} to study coherence in the context of quantum thermodynamics.  Also note that  \emph{Catalytic Coherence} discussed in \cite{Aberg2} uses the notion of coherence as translational asymmetry.

\color{black}

%\subsubsection{Time-translational-covariant operations as the easy operations in control theory}
%\subsubsection{Time-translational-covariant operations in control theory}
%{\bf Time-translational-covariant operations in control theory.}
%{\bf Control theory.}
\subsubsection{Control theory}

Suppose we are trying to prepare a quantum system in a desired state by applying a sequence of control pulses to the system.  Then, there is an important distinction between the pulses which commute with the system Hamiltonian $H$, and hence are invariant under time translations, and those which are not. Namely, to apply the pulses which do not commute with the system Hamiltonian, we need be careful about the timing of the pulses, and  also the duration that the pulse is acting on the system.

To see this, first assume that the pulses are applied instantaneously, i.e., the width of the pulse is sufficiently small that the intrinsic evolution of the system generated by the Hamiltonian $H$ during the pulse is negligible. Then, if instead of applying the control unitary $V$ at the exact time $t$, we apply it at time $t+\Delta t$,  the effect of applying this pulse would be equivalent to applying  the pulse $e^{i H\Delta t} V e^{-i H\Delta t}$  at time $t$, instead of the desired pulse $V$. If $V$ does not commute with the Hamiltonian $H$, then in general  $V$ and $e^{i H\Delta t} V e^{-i H\Delta t}$ are different unitaries, and so the final state  is different from the desired state. 
%So, in the case of  pulses which do not commute with the Hamiltonian, the exact timing of applying the pulse could be a very sensitive factor.  

Furthermore, if the control pulse $V$ commutes with the system Hamiltonian $H$, then dealing with the nonzero width of the pulse is much easier and we do not need to be worried about the  intrinsic evolution of the system during the pulse, as we now demonstrate.   In general, to apply a control unitary $V$ we need to apply a control field to the system. The effect of this control field  can be described by a term $H_\text{cont}(t)$  which is added to the system Hamiltonian $H$. Then, to implement a control unitary $V$ which does not commute with the Hamiltonian $H$, we need to apply a control field $H_\text{cont}(t)$ which does not  commute with the Hamiltonian $H$. In this case, the width of the control pulse, i.e., the duration over which we apply the control field $H_\text{cont}(t)$, becomes an important parameter. In practice, in many situations we need to choose  the control field $H_\text{cont}(t)$ to be strong enough so that the evolution of the system during the pulse width is negligible. On the other hand, if the control field $H_\text{cont}(t)$ commutes with the system Hamiltonian, then the effect of finite width can be easily taken into account, and so we do not need to apply strong fields to the system.  

It follows that in this context,  operations which are covariant under time translations are easy to  implement, because for this type of operation, there is no sensistivity to the exact timing and the width of the control pulses. So it is natural to consider the operations that are covariant under time-translations as the set of freely-implementable operations, and this again leads us to treat coherence as translational asymmetry.

%\subsubsection{Quantum Speed Limits}

\color{blue}
% It turns out that  the notion of coherence as translational asymmetry  provides a new interpretation of quantum speed limits, i.e., the Mandelstam-Tamm  \cite{QSL_MT} bound and the Margolus-Levitin bound  \cite{Margolus:98}. As explained in Ref.~\cite{Marvian2015}, the minimum time it takes for the system to evolve to a (partially) distinguishable state is the inverse of a particular measure of translational asymmetry (hence a particular measure of unspeakable coherence). Furthermore, the standard quantum speed limits can be interpreted as upper bounding this measure of coherence with another measure of coherence. 
 %See Sec.(\ref{Sec:QSL}) for more details. 
 \color{black}

%\subsection{Generalization to the case where translations are not collective on all systems}
\subsection{Covariance with respect to independent translations}

%As noted in Ref.~\cite{coecke2014mathematical}, to specify a resource theory it is not sufficient to specify the free states and operations on a single system, one must specify also what the free states and operations are for a composite system.

%As noted in Sec.~\ref{}, 
It can happen that the set of all systems is partitioned into subsets and that the action of the translation group is only collective for those systems within a given subset, while it is independent for different subsets.  Suppose that the subsets are labelled by $\alpha$ and that for the set of systems of type $\alpha$, the generator of collective translations on these systems is denoted 
%perator whose distinct eigenspaces are the preferred subspaces for those systems is 
$L^{(\alpha)}$.   Consider the group element consisting of a translation by $x_{\alpha} \in \mathbb{R}$ for all the systems of type $\alpha$.  We label this group element by the independent translation parameters $(x_1, x_2, \dots, x_A) \in \mathbb{R}^A$ where $A$ denotes the number of different types of system.  
The unitary representation of this group element is 
\begin{equation}
U_{\{L^{(\alpha)}, x_{\alpha}\}} \equiv \bigotimes_{\alpha} e^{-i x_{\alpha} L^{(\alpha)}}.
\end{equation}
The superoperator representation of this group element is then
\begin{eqnarray}
\mathcal{U}_{\{L^{(\alpha)}, x_{\alpha}\}}(\cdot) &\equiv& U_{\{L^{(\alpha)}, x_{\alpha}\}}  (\cdot) U^\dag_{\{L^{(\alpha)}, x_{\alpha}\}}  %\nonumber\\
%&=& \Big( \bigotimes_{\alpha} e^{-i x_{\alpha} L^{(\alpha)}}\Big) (\cdot) \Big( \bigotimes_{\alpha} e^{-i x_{\alpha} L^{(\alpha)}}\Big).
\end{eqnarray}

In this case, the set of free states are translationally-invariant relative to translations generated by the set of generators $\{L^{(\alpha)}\}$.  These are the states that are block-diagonal relative to the distinct joint eigenspaces of $\{L^{(\alpha)}\}$. 

The free operations are those that are translationally-covariant relative to the set of generators $\{L^{(\alpha)}\}$,  that is, $\forall (x_1, x_2, \dots, x_A) \in \mathbb{R}^A\ ,$
 %free if it is covariant with respect to a translation of $x$ generated by the observable $L$, for all $x\in \mathbb{R}$,
\begin{align}
\mathcal{U}_{\{L^{(\alpha)}, x_{\alpha}\}} \circ \mathcal{E}  &=\mathcal{E} \circ  \mathcal{U}_{\{L^{(\alpha)}, x_{\alpha}\}}.\label{TCcomposite}
\end{align}

%defined by Eq.~\eqref{} but where $\mathcal{U}_{L,x}$ is replaced with $\mathcal{U}_{\{L^{(\alpha)}, x_{\alpha}\}}$.  

Indeed, all of the results expressed in this section can be generalized by substituting the translations $x\in \mathbb{R}$ with $(x_1, x_2, \dots, x_A) \in \mathbb{R}^A$, the superoperator $\mathcal{U}_{L,x}$ with $\mathcal{U}_{\{L^{(\alpha)}, x_{\alpha}\}}$, and the eigenspaces of $L$ with the joint eigenspaces of $\{L^{(\alpha)}\}$.

%This freedom in the choice of the translation group will allow us, in Sec.~\ref{}, to compare translationally-covariant operations to dephasing-covariant and incoherence-preserving operations for the choices of preferred subspaces that are preferred by the latter approaches.

% If the translation group does not act collectively on all systems of interest,  but rather acts only collectively within certain subsets and independently between these subsets (as in the example presented above),

\color{black}
%\section{speakable coherence as noncovariance under dephasing}
\section{Coherence via dephasing-covariant operations}\label{DCcoherence}

Much recent work seeking to quantify coherence as a resource has considered \emph{speakable} coherence.
%Most recent work has been on this notion.  
The article of Baumgratz, Cramer and Plenio~\cite{Coh_Plenio} (BCP) provides one such proposal, 
%based on operations that preserve incoherence, 
which has been taken up by most other authors who have sought to characterize coherence as a resource.  Nonetheless, we postpone our discussion of the BCP proposal to Sec.~\ref{BCP} and instead begin our discussion of speakable coherence with a very different proposal, based on operations that are dephasing-covariant. 
 %A variant of this proposal has recently been considered in Ref.~\cite{yadin2014new}\footnote{Ref.~\cite{yadin2014new} came to our attention as we were preparing this manuscript.}. 
 We here assess the dephasing-covariance approach and compare it
 %the approach to coherence based on dephasing-covariant operations 
 to the translational-covariance approach 
 %one based on translationally-covariant operations
  discussed in the last section.
 %The discussion of the BCP proposal is taken up in Sec.~\ref{BCP}, where we compare it the one we introduce in this section, and provide some criticisms.
%In the next section, we will discuss the differences betwee our proposal and that of BCP and we will provide criticisms of the latter.

%\subsection{Our proposal}

\subsection{Free operations as dephasing-covariant operations}

\subsubsection{Definition of dephasing-covariant operations}

As before, suppose that the preferred subspaces relative to which coherence is to be quantified are $\{ \mathcal{H}_l \}_l$ and are associated with the projectors $\{ \Pi_l \}_l$.
%, and let $L$ denote a Hermitian operator whose eigenspaces are $\{ \Pi_l \}_l$.

\begin{definition}
We say that a quantum operation $\mathcal{E}$ is \emph{dephasing-covariant} relative to the preferred subspaces
 %(the eigenspaces of $L$)
  if it commutes with the associated dephasing operation, $\mathcal{D}$ of Eq.~\eqref{Dephasing}, i.e., if
%i.e. if it satisfies the following 
\beq \label{DefDC}
\mathcal{E} \circ\mathcal{D}=\mathcal{D}\circ \mathcal{E}\ .
\eeq
\end{definition}

%In the following, we will often write $\mathcal{D}_L$ as simply $\mathcal{D}$ when no confusion is likely to occur.
%\color{red} [Add $L$ subscript to all that follows?] \color{black}
%\begin{definition}
%We say that a quantum operation $\mathcal{E}$ is \emph{dephasing-covariant} if it commutes with the dephasing operation, i.e., if
%\beq \label{DefDC}
%\mathcal{E} \circ\mathcal{D}=\mathcal{D}\circ \mathcal{E}\ .
%\eeq
%\end{definition}

Note that if the input and output spaces of the map $\mathcal{E}$ are distinct, then the dephasing map is different on the input and output spaces, but we do not indicate this difference in our notation.

Dephasing-covariant quantum operations are easily seen to be incoherence-preserving. 
%verified to have property \ref{property1} of Sec.~\ref{constraintsonfreeops}.  Property \ref{property2} is inferred as follows: 
It suffices to note that if $\mathcal{E}$ is dephasing-covariant, then for any incoherent state $\rho\in\mathcal{I}$,
\begin{equation}\label{IPofDC}
\mathcal{E}(\rho)=\mathcal{E}(\mathcal{D}(\rho))=\mathcal{D}\left(\mathcal{E}(\rho)\right),
\end{equation}
and therefore $\mathcal{E}(\rho)$ is invariant under dephasing and hence incoherent.
%Dephasing-covariant quantum operations cannot generate coherence, i.e., if $\mathcal{E}$ is dephasing-covariant, then for any incoherent state $\rho\in\mathcal{I}$,
%\beq
%\mathcal{E}(\rho)=\mathcal{E}(\mathcal{D}(\rho))=\mathcal{D}\left(\mathcal{E}(\rho)\right),
%\eeq
%and therefore $\mathcal{E}(\rho)\in\mathcal{I}$. The set of dephasing-covariant quantum operations is closed under composition of operations and under convex combinations. 
%Dephasing-covariant quantum operations are easily verified to have properties \ref{property1} and \ref{property2}.  For property \ref{property1}, it suffices to note that every incoherent state on a system is invariant under the dephasing map on that system.   

%Finally, if the input space is trivial, so that the map $\mathcal{E}$ corresponds to the preparation of a state $\rho$ on the output space, then Eq.~\eqref{DefDC} reduces to 
%\begin{align}
%\rho  &=\mathcal{D} (\rho)
%\end{align}
%which implies that $\rho$ is an incoherent state. So the definition of free operation, when specialized to states, yields the definition of an incoherent state, which is property \ref{property3}.

\subsubsection{Dephasing-covariant measurements}
If the output space is trivial, so that the map corresponds to tracing with a measurement effect $E$ on the input space, that is,  $\mathcal{E}(\cdot) = {\rm Tr}(E \cdot)$, then Eq.~\eqref{DefDC} reduces to 
\begin{align}
{\rm Tr}(E (\cdot)) &= {\rm Tr}(E \mathcal{D}(\cdot)) \nonumber\\
&= {\rm Tr}(\mathcal{D}(E) \;(\cdot)) ,
\end{align}
where we have used the fact that $\mathcal{D}$ is self-adjoint relative to the Hilbert-Schmidt inner product, and this in turn implies 
\begin{align}
\mathcal{D} (E)  &=E,
\end{align}
where we have used the fact that the set of all quantum states form a basis of the operator space. 
Thus $E$ is an incoherent effect, i.e., it is block-diagonal with respect to the preferred subspaces.

%It follows that the dephasing-covariant measurements are those corresponding to POVMs all of whose elements are incoherent (i.e. block-diagonal relative to the preferred subspaces).  
\begin{proposition}\label{prop:meas}
%The set of POVMs considered to be free (i.e. included among the dephasing-covariant operations) are all and only those that are diagonal in the preferred basis.  
%considered to be free (i.e. included among the dephasing-covariant operations)
%The set of POVMs associated to dephasing-covariant measurements are all and only those whose elements are all block-diagonal relative to the preferred subspaces.
% The dephasing-covariant set of POVMs contains those all of whose elements are block-diagonal relative to the preferred subspaces.
 A POVM is dephasing-covariant if and only if all of its effects are incoherent.
 %elements are block-diagonal relative to the preferred subspaces.
\end{proposition}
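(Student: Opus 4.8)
The plan is to reduce the claim to the single-effect computation carried out immediately above the proposition, applied to each effect of the POVM. First I would make precise what dephasing-covariance means for a measurement: model the POVM $\{E_k\}$ as a quantum channel $\mathcal{M}$ whose input is the system and whose output is a classical register spanned by orthonormal outcome states $\{|k\rangle\}$, acting as $\mathcal{M}(\rho)=\sum_k {\rm Tr}(E_k\rho)\,|k\rangle\langle k|$. On this register the preferred decomposition is into the one-dimensional outcome subspaces, so the output dephasing map $\mathcal{D}_{\rm out}$ annihilates coherences among the $|k\rangle$. Since $\mathcal{M}(\rho)$ is diagonal in this basis for every $\rho$, we have $\mathcal{D}_{\rm out}\circ\mathcal{M}=\mathcal{M}$, and the dephasing-covariance condition $\mathcal{M}\circ\mathcal{D}=\mathcal{D}_{\rm out}\circ\mathcal{M}$ collapses to $\mathcal{M}\circ\mathcal{D}=\mathcal{M}$.

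Next I would read off the content of $\mathcal{M}\circ\mathcal{D}=\mathcal{M}$ componentwise. Equating the coefficients of $|k\rangle\langle k|$ on both sides gives ${\rm Tr}(E_k\,\mathcal{D}(\rho))={\rm Tr}(E_k\rho)$ for every state $\rho$ and every outcome $k$. Using that $\mathcal{D}$ is self-adjoint with respect to the Hilbert-Schmidt inner product, the left-hand side equals ${\rm Tr}(\mathcal{D}(E_k)\,\rho)$, so the condition becomes ${\rm Tr}\big((\mathcal{D}(E_k)-E_k)\rho\big)=0$ for all $\rho$. Since the states span the operator space, this forces $\mathcal{D}(E_k)=E_k$ for each $k$; that is, every effect is invariant under dephasing and hence block-diagonal with respect to the preferred subspaces, i.e.\ incoherent.

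The converse runs the same computation backward: if $\mathcal{D}(E_k)=E_k$ for all $k$, then ${\rm Tr}(E_k\mathcal{D}(\rho))={\rm Tr}(\mathcal{D}(E_k)\rho)={\rm Tr}(E_k\rho)$, so $\mathcal{M}\circ\mathcal{D}=\mathcal{M}=\mathcal{D}_{\rm out}\circ\mathcal{M}$ and the measurement is dephasing-covariant. This is exactly the single-effect derivation preceding the proposition applied term by term, so no new inequality or estimate is needed. The only genuinely delicate point---and the one I would spell out carefully---is the modeling step: pinning down the preferred decomposition on the classical output register (the one-dimensional outcome subspaces) so that $\mathcal{D}_{\rm out}$ acts trivially on $\mathcal{M}(\rho)$. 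Once that is fixed, the equivalence is immediate, and it is this choice that makes dephasing-covariance of the whole measurement equivalent to the incoherence of each effect separately rather than merely of the full POVM resolution.
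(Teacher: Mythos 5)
Your proof is correct and follows essentially the same route as the paper: the core step in both is that $\mathcal{D}$ is self-adjoint under the Hilbert--Schmidt inner product, so $\mathrm{Tr}(E_k\,\mathcal{D}(\rho))=\mathrm{Tr}(\mathcal{D}(E_k)\,\rho)$ for all $\rho$, and since states span the operator space this forces $\mathcal{D}(E_k)=E_k$. The only cosmetic difference is that you package the POVM as a single quantum-to-classical channel with the outcome basis as the preferred decomposition on the register, whereas the paper treats each effect as an operation with trivial output space (for which the output dephasing is trivially the identity); both reduce to the identical per-effect computation.
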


Comparing to proposition \ref{propTC:meas}, we see that a POVM is dephasing-covariant if and only if it is translationally-covariant.
%the POVM associated to a measurement is dephasing-covariant if and only if it is translationally-covariant.\color{black}

\subsubsection{Dephasing-covariant unitary operations}

Because the dephasing-covariant operations are all incoherence-preserving, the set of unitary dephasing-covariant operations are included within the set of unitary incoherence-preserving operations.  As it turns out,  the two sets are in fact equivalent.  
%We postpone the proof until Sec.~\ref{relationIPDC} (see proposition \ref{coincidenceunitaries}).  
We postpone the proof until Sec.~\ref{relationIPDC}, Proposition \ref{coincidenceunitaries}, where we also present the general form of such unitaries.

%In Sec.~\ref{}, we prove a characterization of the unitary incoherence-preserving operations, which therefore yields a characterization of the unitary dephasing-covariant operations, which can be summarized as follows.   If $\pi$ denotes a dimension-preserving permutation of the preferred subspaces, and $V_l$ denotes an arbitrary unitary {\em within} the $\mathcal{H}_l$ subspace, then $\mathcal{V}(\cdot)= V(\cdot)V^{\dag}$ is incoherence-preserving if and only if 
%\beq\label{unit35}
%V=  \sum_{l,\alpha_l} V_{\pi(l)} |\pi(l), \alpha_{\pi(l)} \rangle \langle l,\alpha_{l} |.
%\eeq
\color{black}

\subsubsection{Considerations regarding the existence of a free dilation}
%\subsubsection{Sufficient condition on Stinespring dilation}
%\subsubsection{Marginal maps of dephasing-covariant unitaries with incoherent ancillas}
\color{black}

\label{StinespringForDC}

By analogy with the considerations of Sec.~\ref{TCStinespring}, in order to discuss the possibility of dilating a dephasing-covariant operation with the use of an ancilla in an incoherent state and a dephasing-covariant unitary on the composite of system and ancilla, one needs to specify not only the preferred subspaces (relative to which coherence is defined) on the system and ancilla individually, but on the composite of system and ancilla as well.  When the input and output spaces differ this needs to be specified on the outputs as well, as discussed in Sec.~\ref{TCStinespring}.  

%\color{cyan} To distinguish the preferred subspaces for the system and for the system-ancilla composite, we denote the former $\{ \mathcal{H}^{(s)}_l \}_l$ and the latter $\{ \mathcal{H}^{(sa)}_{l'} \}_{l'}$.  In the following, we make only a single assumption about the relation between these, namely, that 
%\begin{equation}
%\forall l', \exists l : {\rm tr}_a \Pi^{(sa)}_{l'} = \Pi^{(s)}_{l}.
%\end{equation}
%\color{black}

Recall that the system input and output spaces are denoted $\mathcal{H}_{s}$ and $\mathcal{H}_{s'}$, the ancilla input and output spaces are denoted $\mathcal{H}_{a}$ and $\mathcal{H}_{a'}$, and the composite of system and ancilla is $\mathcal{H}_{sa} = \mathcal{H}_{s}\otimes \mathcal{H}_{a}= \mathcal{H}_{s'}\otimes \mathcal{H}_{a'}$.
We also denote the associated sets of incoherent states and dephasing maps with the subscripts $s$, $a$ and $sa$ (or $s'a'$). 

%For the resource theory to be consistent, any state that is free for the composite must have a reduction to each component that is also considered free: for all $\rho_{sa} \in \mathcal{I}_{sa}$, ${\rm tr}_a \rho_{sa}  \in \mathcal{I}_{s}$ and  ${\rm tr}_s \rho_{sa} \in \mathcal{I}_{a}$. Consequently, 
%\begin{eqnarray}
%{\rm tr}_a \mathcal{D}_{sa} = \mathcal{D}_{s}\otimes {\rm tr}_a,\label{TrDsa}\\
%{\rm tr}_s \mathcal{D}_{sa} =  {\rm tr}_s \otimes \mathcal{D}_{a}.
%\end{eqnarray}
%Similarly, the tensor product of free states on the components must also be a free state on the composite:
%\begin{equation}\label{IsIaIsa}
%\mathcal{I}_{s} \otimes \mathcal{I}_a \subseteq \mathcal{I}_{sa},
%\end{equation}
%\color{blue} which in turn implies that (?)
%\begin{eqnarray}
%\mathcal{D}_{sa} \circ ( \mathcal{D}_{s}\otimes \mathcal{I}_{a}) =  \mathcal{D}_{sa}\circ ( \mathcal{I}_{s}\otimes \mathcal{D}_{a}) \label{DsaDsDa}.
%\end{eqnarray}
%\color{cyan}
%A similar result holds for effects.

We here assume that the preferred subspaces for the composite are just the tensor products of those for the system and for the ancilla, so that 
\begin{eqnarray}\label{DsaDstimesDa}
\mathcal{D}_{sa} = \mathcal{D}_{s}\otimes \mathcal{D}_{a}.
\end{eqnarray}

\begin{proposition}\label{Prop:StinespringForDC}
A quantum operation  $\mathcal{E}$ is {\em dephasing-covariant} 
if it can be implemented by coupling the system to an ancilla in a state $\sigma$ that is incoherent, via a unitary quantum operation $\mathcal{V}$ that is dephasing-covariant, and then post-selecting on a measurement outcome associated to an incoherent effect $E$.
 Suppose $\mathcal{E}$ can be implemented as a dilation of the form
\begin{equation}
\mathcal{E}(\rho) = {\rm tr}_{\rm a'} (E\; \mathcal{V}( \rho \otimes \sigma)),
\end{equation}
where $\rho$ is a state on $\mathcal{H}_{\rm s}$, $\sigma$ is a state on $\mathcal{H}_{\rm a}$, $\mathcal{V}(\cdot) =V (\cdot) V^{\dag}$ for some unitary operator $V$ on $\mathcal{H}_{\rm sa}$ , and $E$ is an effect on $\mathcal{H}_{\rm a'}$.  Then formally, $\mathcal{E}$ is dephasing-covariant if there is such a dilation where $\mathcal{D}_{\rm a}(\sigma)= \sigma$, $\mathcal{D}_{\rm a'}(E)= E$ and  $\mathcal{V}\circ \mathcal{D}_{\rm sa}= \mathcal{D}_{\rm sa}\circ \mathcal{V}$.
\end{proposition}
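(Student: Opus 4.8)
The plan is to prove only the stated (sufficiency) direction, namely that any operation admitting a dilation of the prescribed form is dephasing-covariant; there is no converse to establish here, consistent with the fact (noted later in the paper) that whether \emph{every} dephasing-covariant operation admits such a free dilation is left open. This is the exact analogue of the ``easy half'' of the proof of Proposition~\ref{propTCStinespring}, with the superoperators $\mathcal{U}_{L,x}$ replaced throughout by the dephasing maps $\mathcal{D}$. Concretely, I would verify the covariance condition $\mathcal{E}\circ\mathcal{D}_{\rm s}=\mathcal{D}_{\rm s'}\circ\mathcal{E}$ by applying a direct chain of equalities to an arbitrary input state $\rho$.

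First I would expand $\mathcal{E}(\mathcal{D}_{\rm s}(\rho)) = {\rm tr}_{\rm a'}(E\,\mathcal{V}(\mathcal{D}_{\rm s}(\rho)\otimes\sigma))$ using the dilation form, then replace $\sigma$ by $\mathcal{D}_{\rm a}(\sigma)$, which is legitimate since $\sigma$ is incoherent, to obtain the product $\mathcal{D}_{\rm s}(\rho)\otimes\mathcal{D}_{\rm a}(\sigma) = (\mathcal{D}_{\rm s}\otimes\mathcal{D}_{\rm a})(\rho\otimes\sigma)$. The central hypothesis of the proposition, Eq.~\eqref{DsaDstimesDa}, that the composite dephasing factorizes as $\mathcal{D}_{\rm sa}=\mathcal{D}_{\rm s}\otimes\mathcal{D}_{\rm a}$, then lets me rewrite this as $\mathcal{D}_{\rm sa}(\rho\otimes\sigma)$. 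Next I would commute $\mathcal{V}$ past $\mathcal{D}_{\rm sa}$ using the dephasing-covariance of $\mathcal{V}$, producing ${\rm tr}_{\rm a'}(E\,\mathcal{D}_{\rm sa}(\mathcal{V}(\rho\otimes\sigma)))$, and re-express $\mathcal{D}_{\rm sa}$ on the \emph{output} factorization as $\mathcal{D}_{\rm s'}\otimes\mathcal{D}_{\rm a'}$, using that $\mathcal{H}_{\rm sa}=\mathcal{H}_{\rm s}\otimes\mathcal{H}_{\rm a}=\mathcal{H}_{\rm s'}\otimes\mathcal{H}_{\rm a'}$ and that the factorization assumption is imposed on both decompositions.

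The final and most delicate step is to push the output dephasing through the partial trace and the effect. Writing $\tau\equiv\mathcal{V}(\rho\otimes\sigma)$ and expanding it by linearity as $\sum_i A_i\otimes B_i$ with $A_i$ on $\mathcal{H}_{\rm s'}$ and $B_i$ on $\mathcal{H}_{\rm a'}$, the expression becomes $\sum_i \mathcal{D}_{\rm s'}(A_i)\,{\rm tr}(E\,\mathcal{D}_{\rm a'}(B_i))$. Self-adjointness of $\mathcal{D}_{\rm a'}$ with respect to the Hilbert-Schmidt inner product together with the incoherence condition $\mathcal{D}_{\rm a'}(E)=E$ then gives ${\rm tr}(E\,\mathcal{D}_{\rm a'}(B_i))={\rm tr}(\mathcal{D}_{\rm a'}(E)\,B_i)={\rm tr}(E\,B_i)$, so the whole sum collapses to $\mathcal{D}_{\rm s'}\big(\sum_i A_i\,{\rm tr}(E B_i)\big)=\mathcal{D}_{\rm s'}({\rm tr}_{\rm a'}(E\,\tau))=\mathcal{D}_{\rm s'}(\mathcal{E}(\rho))$, which is precisely the claimed dephasing-covariance.

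The main obstacle I anticipate is organizing this last step cleanly rather than any deep difficulty: the effect $E$ and the partial trace both act on the ancilla output $\mathcal{H}_{\rm a'}$, while the surviving dephasing acts on the system output $\mathcal{H}_{\rm s'}$, so the argument relies on the commutation of $\mathcal{D}_{\rm s'}$ with ${\rm tr}_{\rm a'}$ (they act on complementary factors) and on the interplay between self-adjointness of $\mathcal{D}_{\rm a'}$ and the incoherence of $E$ --- exactly the two ingredients already exploited in Proposition~\ref{prop:meas}. I would also explicitly flag the implicit reliance on Eq.~\eqref{DsaDstimesDa} holding simultaneously for the input factorization $\mathcal{H}_{\rm s}\otimes\mathcal{H}_{\rm a}$ and the output factorization $\mathcal{H}_{\rm s'}\otimes\mathcal{H}_{\rm a'}$, since it is this that legitimizes the two separate uses of the factorization assumption.
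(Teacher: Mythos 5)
Your proposal is correct and follows essentially the same route as the paper's proof: the identical chain of equalities, using the incoherence of $\sigma$ together with the factorization $\mathcal{D}_{\rm sa}=\mathcal{D}_{\rm s}\otimes\mathcal{D}_{\rm a}$, the dephasing-covariance of $\mathcal{V}$, and finally the incoherence of $E$ plus the output factorization to pull $\mathcal{D}_{\rm s'}$ outside the partial trace. Your explicit product-operator expansion and appeal to self-adjointness of $\mathcal{D}_{\rm a'}$ in the last step is just a more detailed rendering of the paper's move of commuting $E$ past $\mathcal{D}_{\rm s'a'}$, and your observation that only the sufficiency direction is claimed matches the paper's treatment.
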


\begin{proof}
The proof is as follows:
\begin{align}
\mathcal{E}\big(\mathcal{D}_{\rm s}(\rho)\big)&=\text{tr}_{\rm a'} \Big( E\; \mathcal{V} [\mathcal{D}_{\rm s}(\rho)\otimes\sigma] \Big)\ \\ &=\text{tr}_{\rm a'}\Big(  E\; \mathcal{V}  [\mathcal{D}_{\rm sa}(\rho\otimes \sigma)] \Big)\\ 
%&=\text{tr}_{\rm a'} \Big( \mathcal{D}_{\rm sa} \Big( \mathcal{V} [\rho\otimes\sigma] \Big)\Big)\\ 
&=\text{tr}_{\rm a'} \Big( E\; \mathcal{D}_{\rm s'a'} \Big( \mathcal{V} [\rho\otimes\sigma] \Big)\Big)\\ 
&=\text{tr}_{\rm a'} \Big(  \mathcal{D}_{\rm s'a'} \Big( E\; \mathcal{V} [\rho\otimes\sigma] \Big)\Big)\\ 
&=\mathcal{D}_{\rm s'} \Big( \text{tr}_{\rm a'} \big( E\; \mathcal{V} [\rho \otimes\sigma] \big)\Big)\\ &=\mathcal{D}_{\rm s'}(\mathcal{E}(\rho)),
\end{align}
where in the second line, we have used the fact that $\mathcal{D}_{\rm a}(\sigma)= \sigma$ together with Eq.~\eqref{DsaDstimesDa};
%that fact that $\sigma$ is an incoherent  state together with the assumption about dephasing on composite systems noted in Eq.~\eqref{Dsa}
in the third line, we have used the fact that $\mathcal{V}$ is a dephasing-covariant operation; in the fourth line, we have used the fact that $\mathcal{D}_{\rm a'}(E)= E$ and Eq.~\eqref{DsaDstimesDa}; in the fifth line, we have used Eq.~\eqref{DsaDstimesDa} again; and in the sixth line, we have used Eq.~\eqref{int}.
% and the cyclic property of the trace.
\end{proof}
\color{black}

It is an open question whether \emph{every} dephasing-covariant operation on a system can be implemented in this fashion for a suitable choice of ancilla. 

\subsubsection{Characterization via diagonal and off-diagonal modes}

A useful way of distinguishing incoherent states from coherent states is by considering their representation as vectors in $\mathcal{B}$, the space of linear operators on $\mathcal{H}$. 
The dephasing operations $\mathcal{D}$ is a projector on the operator space,  i.e., it satisfies $\mathcal{D}\circ \mathcal{D}=\mathcal{D}$, and it  induces a direct sum decomposition on  the operator space $\mathcal{B}$ as $\mathcal{B} = \mathcal{B}^{\rm diag} \oplus \mathcal{B}^{\rm offd}$, where $ \mathcal{B}^{\rm diag}$ and $\mathcal{B}^{\rm offd}$ are respectively the image and the kernel of $\mathcal{D}$.
For an arbitrary operator $A$, we define the diagonal component of $A$ to be
\begin{equation}
A^{\rm diag} \equiv \mathcal{D}(A),
\end{equation}
and the off-diagonal component of $A$ to be
\begin{equation}
A^{\rm offd} \equiv A-\mathcal{D}(A)= A-A^{\rm diag}.
% (\mathcal{I}-\mathcal{D})(A),
\end{equation}
Clearly, all incoherent states lie entirely within $\mathcal{B}^{\rm diag}$, while every coherent state has some nontrivial component in $\mathcal{B}^{\rm offd}$.

Then, the fact that dephasing-covariant operations by definition commute with the dephasing operation $\mathcal{D}$, immediately implies that these operations are block-diagonal with respect to this decomposition of the operator space $\mathcal{B}$.

It is useful to consider how a dephasing-covariant operation $\mathcal{E}$ is represented as a matrix on the operator space $\mathcal{B}$.  
% The formal distinction between incoherence-preserving  operations and dephasing-covariant operations can be explained using their matrix representations: For a quantum operation $\mathcal{E}$ 
If $\{X_i\}_i$ is an orthonormal basis (with respect to the Hilbert Schmidt inner product) for the space of operators $\mathcal{B}$, then $\mathcal{E}$ can be represented by the matrix elements $\mathcal{E}_{ij}=\Tr(X^\dag_i\mathcal{E}(X_j))$. 
% Let $\mathcal{B}_\mathcal{I}$ be the subspace of the operator space $\mathcal{B}$ that is spanned by the set of incoherent states $\mathcal{I}$, and let  $\mathcal{B}^\perp_\mathcal{I}$ be the orthogonal subspace. In other words,  $\mathcal{B}^\perp_\mathcal{I}$  is  the subspace spanned by operators which have only off-diagonal elements relative to the standard basis.  
$\mathcal{E}$ is dephasing-covariant iff its matrix representation has the following form relative to the decomposition $\mathcal{B} = \mathcal{B}^{\rm diag} \oplus \mathcal{B}^{\rm offd}$,
\begin{align}\label{twobytwoDC}
 \bordermatrix{~ & \mathcal{B}^{\rm diag} & \mathcal{B}^{\rm offd} \cr
               \mathcal{B}^{\rm diag}  & A  & 0 \cr
               \mathcal{B}^{\rm offd} & 0 &  B \cr}\ ,
\end{align}
where ${A}$ and ${B}$ are matrices.\color{black}

%In other words, we have the following mode-based characterization.
Alternatively, the mode-based characterization of dephasing-covariant operations can be given as follows.
\begin{proposition}\label{modecharDCops}
A quantum operation  $\mathcal{E}$ is dephasing-covariant relative to a preferred set of subspaces if and only if it preserves the diagonal and off-diagonal modes.  %That is, 
% , that is, the diagonal component of the input state is mapped only to the diagonal component of the output state, and the off-diagonal component of the input state is mapped only to the off-diagonal component of the output state. 
Formally, the condition is that  whenever $\mathcal{E}(\rho)=\sigma$, we have $\mathcal{E}(\rho^{\rm diag})=\sigma^{\rm diag}$  and $\mathcal{E}(\rho^{\rm offd})=\sigma^{\rm offd}$ .
%\end{definition}
\end{proposition}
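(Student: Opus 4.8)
The plan is to observe that the proposition unpacks immediately from the definitions, since by construction $A^{\rm diag} = \mathcal{D}(A)$ and $A^{\rm offd} = A - \mathcal{D}(A)$, so that ``preserving the diagonal and off-diagonal modes'' is nothing but a restatement of the commutation of $\mathcal{E}$ with the projector $\mathcal{D}$. A useful preliminary remark is that the two preservation conditions are equivalent to each other: given $\rho = \rho^{\rm diag}+\rho^{\rm offd}$ and $\sigma = \sigma^{\rm diag}+\sigma^{\rm offd}$, linearity of $\mathcal{E}$ gives $\mathcal{E}(\rho^{\rm offd}) = \mathcal{E}(\rho)-\mathcal{E}(\rho^{\rm diag}) = \sigma-\mathcal{E}(\rho^{\rm diag})$, so $\mathcal{E}(\rho^{\rm diag})=\sigma^{\rm diag}$ holds if and only if $\mathcal{E}(\rho^{\rm offd})=\sigma^{\rm offd}$ holds. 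Hence it suffices to treat the diagonal condition alone.

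First I would prove the forward direction. Assuming $\mathcal{E}$ is dephasing-covariant, so that $\mathcal{E}\circ\mathcal{D}=\mathcal{D}\circ\mathcal{E}$ by Eq.~\eqref{DefDC}, I would take any $\rho$ with $\mathcal{E}(\rho)=\sigma$ and compute directly
\begin{equation}
\mathcal{E}(\rho^{\rm diag}) = \mathcal{E}(\mathcal{D}(\rho)) = \mathcal{D}(\mathcal{E}(\rho)) = \mathcal{D}(\sigma) = \sigma^{\rm diag},
\end{equation}
where the second equality is the commutation relation and the outer equalities are the identification $A^{\rm diag}=\mathcal{D}(A)$. The off-diagonal claim then follows from the equivalence noted above.

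For the converse I would assume mode preservation and recover the commutation relation. Setting $\sigma=\mathcal{E}(\rho)$, the diagonal-preservation hypothesis reads $\mathcal{E}(\mathcal{D}(\rho)) = \mathcal{D}(\mathcal{E}(\rho))$, i.e. $(\mathcal{E}\circ\mathcal{D})(\rho) = (\mathcal{D}\circ\mathcal{E})(\rho)$. Since this holds for every state $\rho$, and the quantum states span the operator space $\mathcal{B}$ (a fact already invoked earlier in this section), linearity extends the identity to all of $\mathcal{B}$, yielding $\mathcal{E}\circ\mathcal{D}=\mathcal{D}\circ\mathcal{E}$, which is precisely Eq.~\eqref{DefDC}.

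There is essentially no hard step here; the whole argument is a one-line substitution resting on the identifications $A^{\rm diag}=\mathcal{D}(A)$, $A^{\rm offd}=A-\mathcal{D}(A)$ together with the commutation relation defining dephasing-covariance. The only point that deserves a word of care is the passage from states to arbitrary operators in the converse direction, which relies on the fact that the state space spans $\mathcal{B}$; everything else is routine.
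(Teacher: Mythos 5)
Your proof is correct and is exactly the argument the paper intends: the proposition is stated as an immediate consequence of the definition $\mathcal{E}\circ\mathcal{D}=\mathcal{D}\circ\mathcal{E}$ together with the identifications $A^{\rm diag}=\mathcal{D}(A)$ and $A^{\rm offd}=A-\mathcal{D}(A)$, and your handling of the converse via the fact that states span the operator space is the same device the paper invokes elsewhere in this section. Nothing is missing.
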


\begin{comment}
\subsubsection{Characterization via Kraus decomposition}

\color{red} [The following proposition is tentative; if we cannot prove it, we will leave it out of the article.] \color{black}
%\begin{definition}\label{defnKraus}
\begin{proposition}\label{defnKrausDC}
A quantum operation $\mathcal{E}$ is dephasing-covariant if and only if it admits of a Kraus decomposition of the form
\begin{equation}
\mathcal{E}(\cdot) = \sum_{X\in {\{\rm diag,offd}\}} \sum_{\alpha} K^{X}_{\alpha} (\cdot) (K^{X}_{\alpha})^{\dag},
\end{equation}
where the elements of the set $\{ K^{\rm diag}_{\alpha}\}_{\alpha}$  are diagonal operators and the elements of the set $\{ K^{\rm offd}_{\alpha}\}_{\alpha}$ are off-diagonal operators.
\end{proposition}

\begin{proof}
To see that every dephasing-covariant quantum operation has such a Kraus decomposition, [...]
To see that any quantum operation with such a Kraus decomposition is dephasing-covariant, [...]
%\color{red}
%[Iman can you provide the proofs?  I cannot see how to prove it at the moment.]
%\color{black}
\end{proof}
\end{comment}

%\subsection{Does the restriction to dephasing-covariant operations arise naturally?}
\subsection{Physical justification for the restriction to dephasing-covariant operations?}

%We just need to find a reason for the unitaries to be restricted to those that permute the basis elements and for the states to all be incoherent.    

%There are many physical circumstances in which the only freely-implementable preparations of the ancilla systems correspond to the incoherent states, such as if they are unavoidably subject to environmental decoherence that induces dephasing between the preferred subspaces. For experimental operations that are not preparations, however, it is less clear whether the restriction to dephasing-covariant operations can arise as a natural constraint.

We noted that whether \emph{every} dephasing-covariant operation admits of a dilation in terms of an incoherent ancilla state, an incoherent effect on the ancilla, and a dephasing-covariant unitary on the system-ancilla composite 
%(for some choice of the preferred subspaces on the composite)
%(either for collective or independent dephasing)
 is currently an open question.  
%If its answer turns out to be negative, then this would rule against understanding the dephasing-covariant operations as those arising from an experimental restriction
%counts as evidence against there being any natural constraint on experimental operations associated to the dephasing-covariant operations 
%(this is the situation for the BCP approach, as we will see in Sec.~\ref{BCP}).  

If its answer is positive, then the problem of finding a physical justification for the dephasing-covariant operations reduces to finding a physical scenario wherein the only free states and effects on the ancilla are incoherent and the only free unitaries on the system-ancilla composite are those that are dephasing-covariant.  It is not obvious how to justify the latter constraint in particular. However, even if the answer is negative, there remains the possibility that one can find a physical justification for the free set of operations being the set of dephasing-covariant operations.  This is the same possibility that remained for justifying the incoherence-preserving or incoherent operations as the free set, namely, by allowing that different systems may not be treated even-handedly.

%On the other hand, even if its answer is negative, there might still be a physical justification of the set of dephasing-covariant operations as the free set.   The challenge is to find {\em some} restriction on the system-ancilla couplings and the states and measurements  on  the ancillas such that one obtains all and only the dephasing-covariant operations on the system.

Overall, therefore, it is at present unclear whether a restriction to dephasing-covariant operations arises from a natural experimental restriction.

%\subsection{Translational covariance versus dephasing covariance}
%asymmetry versus coherence as dephasing noncovariance} 
%\subsection{Relation to unspeakable notion of coherence}
\subsection{Relation of dephasing-covariant operations to translationally-covariant operations}
\subsubsection{Relation between the sets of free operations}

 Here we study the relation between the dephasing-covariant operations and the translationally-covariant operations {\em for the same choice of the preferred subspaces.}  In the dephasing-covariance approach to coherence, one must begin with a choice of preferred subspaces relative to which dephasing occurs.  If one is given a translational symmetry, then one can choose these subspaces to be the eigenspaces of the generator of that symmetry group (or, if the symmetry group incorporates independent  commuting  translations and therefore multiple commuting  generators, then as the joint eigenspaces of the generators).  Conversely, if a set of preferred subspaces is given, one can always construct a Hermitian operator that has these subspaces as eigenspaces with distinct eigenvalues and consider this to be a generator of translations. 
Because a given choice of preferred subspaces might only be physically justified in {\em one} of the two approaches, the comparison we are making here is best understood as probing the mathematical relation between the two approaches to quantifying coherence as a resource.
 
% For any given translational symmetry and its corresponding generator(s) one can consider  the set of dephasing-covariant operations,  defined based on the decomposition of the Hilbert space to the eigen-subspaces of these generator(s). 

To understand this connection it is useful to note that  the dephasing operation relative to the eigenspaces of $L$ 
%denoted $\mathcal{D}_L$ and defined by $\mathcal{D}_L \equiv \sum_l \Pi_l (\cdot) \Pi_l$ in Eq.~\eqref{Dephasing}, 
can be realized by applying a random translation %$e^{-i x L}$
 to the system, that is, a translation $\mathcal{U}_{L,x}$ where $x$ is chosen uniformly at random,
\beq\label{int}
\mathcal{D}(\cdot)=
%\equiv\sum_{i=1}^M P_i (\cdot) P_i=
\lim_{x_0\rightarrow \infty} \frac{1}{2x_0}\ \int_{-x_0}^{x_0}  dx\ \  e^{-i x L}(\cdot)  e^{i x L}\ .
\eeq
%This identity will be used in several of our proofs.

\begin{comment}
\begin{proposition}\label{prop1}
On an elementary system, consider the set of quantum operations that are translationally-covariant with respect to a generator $L$, 
\beq
TC_{L} \equiv \{ \mathcal{E} : \forall x \in \mathbb{R}, \;\mathcal{U}_{L,x} \circ \mathcal{E} = \mathcal{E} \circ \mathcal{U}_{L,x}  \}
\eeq
%are a proper subset of
and the set of quantum operations that are dephasing-covariant with respect to the eigenspaces of $L$,
\beq
DC_{L} \equiv \{ \mathcal{E} :  \mathcal{D}_{L} \circ \mathcal{E} = \mathcal{E} \circ \mathcal{D}_{L}  \}
\eeq
The former is a proper subset of the latter,
\beq
\text{TC}_{L}\  \subset  \ \text{DC}_{L}.
\eeq
\end{proposition}
\end{comment}

It is also useful to note the connection between the two approaches from the perspective of mode decompositions. 
The diagonal mode relative to the eigenspaces of $L$ corresponds to the $\omega=0$ mode of translational asymmetry relative to the generator $L$, 
\beq
\mathcal{B}^{\rm diag} = \mathcal{B}^{(0)},
\eeq
while the off-diagonal mode relative to the eigenspaces of $L$ corresponds to the direct sum of the $\omega\ne 0$ modes of translational asymmetry relative to the generator $L$,
\beq\label{relateBoffdtoBk}
\mathcal{B}^{\rm offd} = \bigoplus_{\omega\ne 0} \mathcal{B}^{(\omega)}.
\eeq
Intuitively then, to  choose the dephasing-covariant operations as the free set of operations is to disregard the distinction between the different nonzero modes.

\color{black}

Denote the the set of quantum operations that are translationally-covariant with respect to a generator $L$ by TC$_L$ and the set of quantum operations that are dephasing-covariant with respect to the eigenspaces of $L$ (which we will denote $\mathcal{S}(L)$) 
 by DC$_{\mathcal{S}(L)}$. \color{black}

\begin{proposition}\label{prop1}
The operations that are translationally-covariant relative to translations generated by $L$ are a proper subset of those that are dephasing-covariant relative to the eigenspaces of $L$,
\beq
\text{TC}_{L}\  \subset  \ \text{DC}_{\mathcal{S}(L)}.
\eeq
\end{proposition}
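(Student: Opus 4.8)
The plan is to prove the two halves separately: first the inclusion $\text{TC}_{L}\subseteq\text{DC}_{\mathcal{S}(L)}$, then strictness by exhibiting an explicit operation lying in $\text{DC}_{\mathcal{S}(L)}\setminus\text{TC}_{L}$. For the inclusion, the key tool is the representation of the dephasing map as a uniform average of translations, Eq.~\eqref{int}, namely $\mathcal{D}=\lim_{x_0\to\infty}\frac{1}{2x_0}\int_{-x_0}^{x_0}dx\,\mathcal{U}_{L,x}$. Suppose $\mathcal{E}\in\text{TC}_{L}$, so that $\mathcal{U}_{L,x}\circ\mathcal{E}=\mathcal{E}\circ\mathcal{U}_{L,x}$ for every $x\in\mathbb{R}$. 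Since $\mathcal{E}$ is a fixed linear map, I would compose it with both sides of Eq.~\eqref{int} and pull it through the integral and the limit, obtaining $\mathcal{D}\circ\mathcal{E}=\mathcal{E}\circ\mathcal{D}$, which is precisely the defining condition of $\text{DC}_{\mathcal{S}(L)}$. Here the $\mathcal{D}$ on the left is the output-space dephasing and the one on the right the input-space dephasing, matching the placement of $\mathcal{U}_{L,x}$ in the covariance condition. Equivalently, one can argue via modes: by Proposition~\ref{defnmodes} a translationally-covariant $\mathcal{E}$ commutes with every $\mathcal{P}^{(\omega)}$, hence in particular with $\mathcal{P}^{(0)}=\mathcal{D}$.

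For strictness, the mode picture of Eq.~\eqref{relateBoffdtoBk} tells me exactly what to look for: a dephasing-covariant operation need only keep $\mathcal{B}^{\rm offd}=\bigoplus_{\omega\ne0}\mathcal{B}^{(\omega)}$ invariant \emph{as a whole}, whereas a translationally-covariant one must preserve each nonzero mode $\mathcal{B}^{(\omega)}$ \emph{separately}. Any completely positive map that shifts weight from one nonzero mode into another therefore witnesses the strict inclusion. The simplest instance is a basis permutation that interchanges two eigenspaces of $L$ carrying distinct eigenvalues; for example, on a qubit with $L=|1\rangle\langle1|$, take the bit-flip channel $\mathcal{X}(\cdot)=X(\cdot)X$ with $X=|0\rangle\langle1|+|1\rangle\langle0|$. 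A direct check shows $\mathcal{X}\circ\mathcal{D}=\mathcal{D}\circ\mathcal{X}$, since both maps send $\rho$ to the diagonal part of $X\rho X$, i.e.\ to the permuted diagonal $\rho_{11}|0\rangle\langle0|+\rho_{00}|1\rangle\langle1|$; hence $\mathcal{X}\in\text{DC}_{\mathcal{S}(L)}$. But $\mathcal{X}$ sends the mode-$(+1)$ operator $|0\rangle\langle1|$ to the mode-$(-1)$ operator $|1\rangle\langle0|$, so it fails to preserve modes and, by Proposition~\ref{defnmodes}, is not translationally-covariant; equivalently $[X,L]\ne0$, so by the finite-dimensional characterization of translationally-covariant unitaries established earlier, $\mathcal{X}\notin\text{TC}_{L}$.

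I expect the inclusion to be essentially immediate once Eq.~\eqref{int} is invoked; the only points requiring minor care are justifying the interchange of $\mathcal{E}$ with the integral and limit, and keeping track of the input- versus output-space versions of $\mathcal{D}$ when the two spaces differ. The one genuine design choice is the strictness witness, and the bit-flip example settles it cleanly with a manifestly completely positive, trace-preserving map. There is thus no serious obstacle here: the underlying conceptual content---that $\text{DC}$ collapses the distinction between the nonzero modes that $\text{TC}$ keeps apart---is exactly the intuition already flagged immediately after Eq.~\eqref{relateBoffdtoBk}, and the proof merely makes it precise.
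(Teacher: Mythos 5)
Your proof is correct and follows essentially the same route as the paper: the inclusion is obtained by averaging the translational-covariance condition over $x$ via Eq.~\eqref{int}, and strictness is witnessed by a unitary that permutes eigenspaces of $L$ (the paper's generic swap; your explicit qubit bit-flip with $L=|1\rangle\langle1|$ is a correctly verified instance of it). The only added value in your version is that the strictness witness is worked out concretely rather than asserted "in general," which is a welcome sharpening but not a different argument.
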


\begin{proof}
%We can see the distinction through the lens of any of the four characterizations we have provided. 
The subset relation can be understood easily within any of the characterizations of translationally-covariant and dephasing-covariant operations that we have provided. 
%while the fact that it is a proper subset is most easily seen using the definition in terms of the decomposition into diagonal and off-diagonal modes. 
%The most straightforward way of proving the inclusion relation---that any translationally-covariant  operation is also a dephasing-covariant operation---is to start 
For instance, starting with the expression for the translational-covariance of an operation $\mathcal{E}$, Eq.~(\ref{TI_opo}), if one integrates over $x$, one obtains the expression for the dephasing covariance of $\mathcal{E}$, Eq.~(\ref{DefDC}), where we have made use of Eq.~\eqref{int}.

%According to the Stinespring definition, the translationally-covariant unitary operations are a subset of the dephasing-covariant unitary operations.

%According to the mode decomposition.  [blah blah]

To show that the inclusion is strict, it suffices to show that there are dephasing-covariant operations which are not translationally-covariant.  Any example of a dephasing-covariant operation wherein one nonzero mode, $\omega$, is mapped to another, distinct, nonzero mode is sufficient. 

For example, consider the unitary that swaps a pair of states living in
different eigenspaces of $L$, 
%sectors of the preferred decomposition of the Hilbert space
and leaves the rest of the states unchanged. This operation in general will not be translationally-covariant while it is dephasing-covariant. 
% \color{red} [Provide more detail here] \color{black}
%(We discuss such examples further in Sec. \ref{Sec:examples})
\end{proof}

\begin{figure}[htb]
 \includegraphics[scale=0.35]{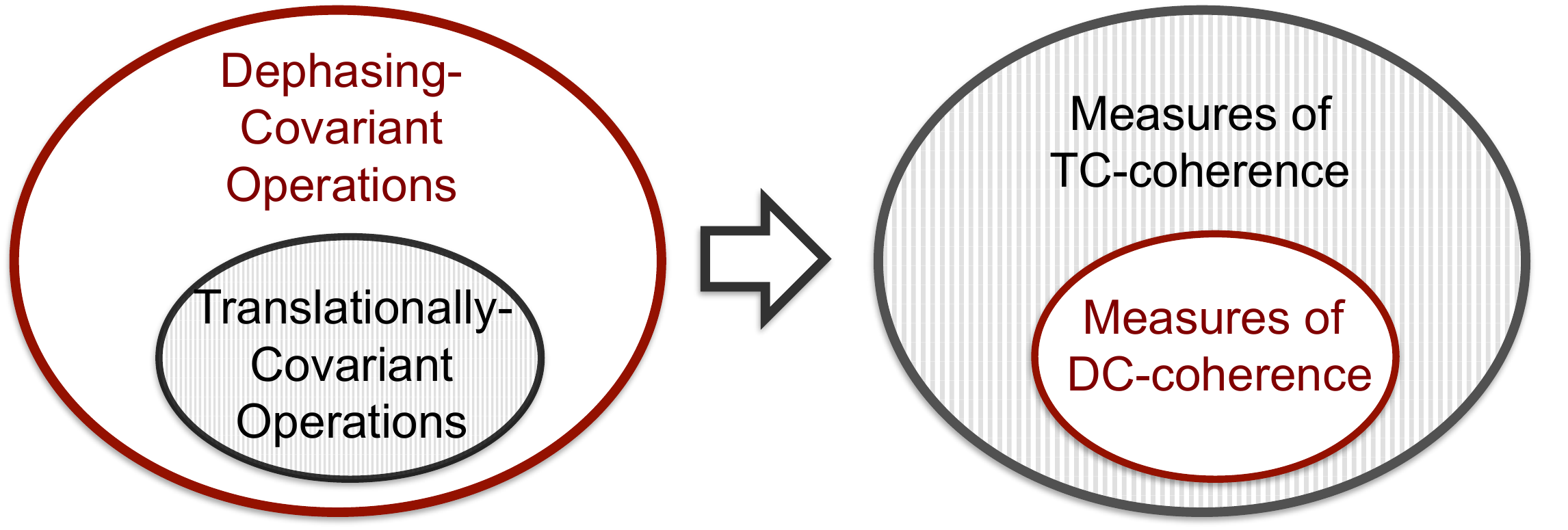}
 \caption{The relation between the dephasing-covariant and the translationally-covariant operations and the relation between the associated measures of coherence that these sets of operations define.  Both inclusions are shown to be strict.}
 \label{Venn1}
\end{figure}

\color{black}
Despite the strict inclusion of translationally-covariant operations in the set of dephasing-covariant operations, if we focus on the POVMs associated to measurements (i.e. the retrodictive aspect of the measurement) the two approaches pick out the same set, as we noted earlier.
\color{black}

\color{red}
%[Placeholder: discuss the relation between dephasing-covariant unitaries and translationally-covariant unitaries?  Our proof of strictness of the inclusion basically appeals to the distinction among unitaries, so it is probably worth spelling out earlier what the unitaries are.]
\color{black}

\subsubsection{Relation between measures of coherence}

%\color{red} [We need some simplifying terminology for ``measure of coherence in the translational-covariance approach'' and ``measure of coherence in the dephasing-covariance approach''.  The former could be ``measure of translational asymmetry'', but I'd rather stick to the language of coherence in this paper.  I propose ``measure of TC-coherence'' and ``measure of DC-coherence''.  Similarly, "coherence in the incoherence-preserving approach'' can be simply ``IP-coherence''.]\color{black}

%If a transformation from initial state $\rho$ to final state $\sigma$ is allowed under dephasing-covariant operations, then it is also allowed under  translationally-covariant operations because, by proposition \ref{prop1}, the latter set includes the former.  
%We will refer to measures of coherence that arise from a particular set $S$ of free operations as measures of $S$-coherence. 
Prop.~\ref{prop1} implies that if a transformation from initial state $\rho$ to final state $\sigma$ is allowed under $TC_L$ operations, then it is also allowed under $DC_{\mathcal{S}(L)}$  operations. This means that any measure of $DC_{\mathcal{S}(L)}$-coherence  is  also a measure of $TC_L$-coherence.  In fact, one can show that
%, but, as we will show, not every measure of $TC_L$-coherence is a measure of $DC_{\mathcal{S}(L)}$-coherence. 
%To summarize
\begin{proposition}\label{prop11}
%For an elementary system, 
Any measure of $DC_{\mathcal{S}(L)}$-coherence  is also a measure of $TC_L$-coherence, but not vice-versa. 
%For an elementary system, any measure of coherence in the dephasing-covariance approach (relative to the eigenspaces of an observable $L$) is also a measure of coherence in the translational-covariance approach (relative to translations generated by $L$), but not vice-versa. 
\end{proposition}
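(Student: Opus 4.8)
The plan is to split the biconditional into an easy forward inclusion and a more delicate strictness claim.

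First I would establish that every measure of $DC_{\mathcal{S}(L)}$-coherence is automatically a measure of $TC_L$-coherence. The two resource theories share the same free states: by Eq.~\eqref{equiv} the translationally-invariant states are exactly the states commuting with $L$, which is precisely the set of states that are incoherent (block-diagonal) relative to the eigenspaces $\mathcal{S}(L)$. Hence condition (ii) in the definition of a measure of coherence is literally the same requirement in both theories. For condition (i), recall from Prop.~\ref{prop1} that $\text{TC}_{L}\subset\text{DC}_{\mathcal{S}(L)}$; thus any $f$ that is monotonically nonincreasing under \emph{all} dephasing-covariant channels is, a fortiori, monotonically nonincreasing under the smaller set of translationally-covariant channels. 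Therefore such an $f$ satisfies (i) and (ii) for $TC_L$, and the forward implication follows with no further work.

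The substance of the proposition is the converse failure, for which I would exhibit an explicit function that is a $TC_L$-monotone but not a $DC_{\mathcal{S}(L)}$-monotone. The idea is to exploit the fact, noted in the proof of Prop.~\ref{prop1}, that dephasing-covariant operations can carry one nonzero mode of asymmetry into another, whereas translationally-covariant operations preserve each mode separately (Prop.~\ref{defnmodes}); a measure that weighs distinct nonzero modes differently will therefore be $TC$-monotone yet can be increased by a mode-shuffling $DC$ operation. Concretely, take a three-level system with $L=\sum_{l=0}^{2} l\,\ketbra{l}{l}$, and let $V$ be the unitary that swaps $|1\rangle\leftrightarrow|2\rangle$ and fixes $|0\rangle$. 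Since $V$ permutes the eigenbasis of $L$, the induced operation $\mathcal{V}(\cdot)=V(\cdot)V^{\dagger}$ commutes with $\mathcal{D}$ and is dephasing-covariant, but it sends the mode-$1$ component $\tfrac12|0\rangle\langle 1|$ to the mode-$2$ operator $\tfrac12|0\rangle\langle 2|$, so it does not preserve modes and hence, by Prop.~\ref{defnmodes}, is not translationally-covariant.

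For the separating function I would take the quantum Fisher information $F_L$ associated with the generator $L$, which for the family $\{\mathcal{U}_{L,x}(\rho)\}_x$ reduces on pure states to $F_L(\ketbra{\psi}{\psi})=4\,\Var_\psi(L)$. As established in the resource theory of asymmetry \cite{Modes,MS11}, $F_L$ is a measure of $TC_L$-coherence: it is monotonically nonincreasing under translationally-covariant operations and vanishes whenever $[L,\rho]=0$. Applying $\mathcal{V}$ to $\rho=\ketbra{\psi}{\psi}$ with $|\psi\rangle=(|0\rangle+|1\rangle)/\sqrt{2}$ yields $\sigma=\ketbra{\phi}{\phi}$ with $|\phi\rangle=(|0\rangle+|2\rangle)/\sqrt{2}$, and a direct computation gives $F_L(\rho)=4\cdot\tfrac14=1$ while $F_L(\sigma)=4\cdot 1=4$. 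Thus the dephasing-covariant channel $\mathcal{V}$ strictly \emph{increases} $F_L$, so $F_L$ violates condition (i) for $DC_{\mathcal{S}(L)}$ and is not a measure of $DC_{\mathcal{S}(L)}$-coherence, even though it is a measure of $TC_L$-coherence. This furnishes the desired counterexample, completing the proof.

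The main obstacle I anticipate is not the construction but the appeal to monotonicity of $F_L$ under $TC_L$: this is the one ingredient not proved from scratch in the excerpt, and a fully self-contained argument would require either invoking the cited asymmetry results or verifying directly, via the mode-preservation characterization of Prop.~\ref{defnmodes} together with the data-processing inequality for the Fisher information, that $F_L$ cannot increase under a translationally-covariant channel. Any other mode-sensitive asymmetry monotone (for instance a Wigner--Yanase--Dyson skew information relative to $L$) would serve equally well in place of $F_L$.
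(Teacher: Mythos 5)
Your proposal is correct and follows essentially the same route as the paper: the inclusion is immediate from $\text{TC}_L \subset \text{DC}_{\mathcal{S}(L)}$ (Prop.~\ref{prop1}), and the strictness is shown by a variance-based $TC_L$-monotone that increases under a permutation of the eigenspaces of $L$ (a dephasing-covariant unitary), which is precisely the mechanism the paper uses in Sec.~\ref{Sec:examples} with the Wigner--Yanase--Dyson skew information $S_{L,s}$ and the single-mode norms $\|\mathcal{P}^{(\omega)}(\rho)\|_1$ in place of your quantum Fisher information. Your explicit qutrit computation and your acknowledgement that the monotonicity of the chosen functional under translationally-covariant operations must be imported from the asymmetry literature are both fine; the paper likewise relies on cited asymmetry results for that ingredient.
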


The strictness of the inclusion is demonstrated in Sec.~\ref{Sec:examples}.  This relation is illustrated in Fig.~\ref{Venn1}.

%\color{red} [Improve the below with discussion of modes]\color{black}

\section{Coherence via incoherence-preserving  and incoherent operations}\label{BCP}

In this section, we consider approaches to coherence wherein the free operations are incoherence-preserving or incoherent operations.

 \color{black} 

%\section{speakable coherence according to the BCP proposal}

%\subsection{Definition of free operations:  Incoherent operations}
\subsection{Free operations as incoherence-preserving operations}

%\color{red} [Revise the discussion below to acknowledge that there is a difference between BCP's ``incoherent operations" and incoherence-preserving operations] \color{black}

%In this section, we consider an approach to coherence wherein the free operations are those that are incoherence-preserving, defined as follows:
%We now turn to the BCP approach to coherence.  We begin by defining the set of incoherence-preserving operations, as follows:
%\begin{definition}
%A quantum operation $\mathcal{E}$ is said to be  \emph{incoherence-preserving} if it maps  incoherent states to incoherent states, 
% \begin{equation}\label{incoherent}
% \rho\in \mathcal{I} \ \Longrightarrow\  \mathcal{E}(\rho)\in\mathcal{I} . 
% \end{equation} 
% \end{definition}
 \begin{definition}
A quantum operation $\mathcal{E}$ is said to be  \emph{incoherence-preserving} if it maps  incoherent states on the input space to incoherent states on the output space, 
 \begin{equation}\label{incoherent}
 \rho\in \mathcal{I}_{\rm in} \ \Longrightarrow\  \mathcal{E}(\rho)\in\mathcal{I}_{\rm out}. 
 \end{equation} 
 \end{definition}
 
% BCP only identify the free operations set  do not include those wherein the output space is different from the input space.  Furthermore, 

%The BCP proposal for the set of free operations are those that admit of a Kraus decomposition for which each term is incoherence-preserving. 
% They term these the {\em incoherent operations}. 

%The preferred subspaces relative to which coherence is assessed are presumed to be 1-dimensional, so that we refer below to the preferred \emph{basis} rather than the preferred subspaces. The set of free operations, which BCP term \emph{incoherent operations}, are those that are incoherence-preserving, defined as follows:

%\color{red}
%The incoherence-preserving operations obviously satisfy properties \ref{property1} and \ref{property2} of Sec.~\ref{constraintsonfreeops}.
%Indeed, in this proposal, one simply allows \emph{all} operations that satisfy property \ref{property2}.  
%\color{black}
%Furthermore, if we consider the case of a trivial input space, so that $\mathcal{E}$ simply prepares a state on the output space, i.e. $\mathcal{E}(\cdot) = \sigma$, then Eq.~\eqref{incoherent} simply stipulates that $\sigma\in \mathcal{I}$, so property \ref{property3} holds as well.
 %The set of incoherent operations is closed under composition of operations, and convex combinations. 

Just as was the case with the dephasing-covariant operations, the incoherence-preserving operations can be characterized in terms of their interaction with the dephasing map:
% or in terms of their action on the block-diagonal and off-block-diagonal modes:
\begin{proposition}\label{charincops}
A quantum operation $\mathcal{E}$ is incoherence-preserving 
if and only if
\beq\label{Def2}
\mathcal{E}\circ\mathcal{D}=\mathcal{D}\circ \mathcal{E}\circ\mathcal{D}\ .
\eeq
%and  if and only if the off-diagonal component of the output state is only a function of the off-diagonal component of the input state. Thus, whenever $\mathcal{E}(\rho)=\sigma$, we have $\mathcal{E}(\rho^{\rm offd})=\sigma^{\rm offd}$.
\end{proposition}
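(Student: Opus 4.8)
The plan is to reduce everything to the single fact, established in the discussion of free states, that the incoherent states are exactly the fixed points of the dephasing projector $\mathcal{D}$, together with the observation that $\mathcal{D}$ maps \emph{every} state into the incoherent set. Concretely, I would first record two elementary properties, both following from the definition of $\mathcal{D}$ in Eq.~\eqref{Dephasing} and its idempotence $\mathcal{D}^2=\mathcal{D}$: (a) a state $\rho$ is incoherent iff $\mathcal{D}(\rho)=\rho$; and (b) for an arbitrary state $\rho$, the operator $\mathcal{D}(\rho)=\sum_l \Pi_l \rho \Pi_l$ is positive, normalized, and block-diagonal, hence an incoherent state. With these in hand the proposition becomes almost immediate in both directions, so I do not anticipate a serious obstacle; the only point requiring care is the passage from statements about states to the superoperator identity in Eq.~\eqref{Def2}.

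For the ``if'' direction I would assume Eq.~\eqref{Def2} and take any incoherent input $\rho\in\mathcal{I}_{\rm in}$. Using property (a), $\mathcal{D}(\rho)=\rho$, so
\begin{equation}
\mathcal{E}(\rho)=\mathcal{E}\circ\mathcal{D}(\rho)=\mathcal{D}\circ\mathcal{E}\circ\mathcal{D}(\rho)=\mathcal{D}\big(\mathcal{E}(\rho)\big),
\end{equation}
where the middle equality is exactly the hypothesis. Thus $\mathcal{E}(\rho)$ is a fixed point of $\mathcal{D}$ and, being positive, is an incoherent state; hence $\mathcal{E}$ is incoherence-preserving.

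For the ``only if'' direction I would assume $\mathcal{E}$ is incoherence-preserving and evaluate the two superoperators on an arbitrary state $\rho$. By property (b), $\mathcal{D}(\rho)$ is an incoherent state, so incoherence-preservation gives $\mathcal{E}(\mathcal{D}(\rho))\in\mathcal{I}_{\rm out}$, and property (a) applied on the output space then yields $\mathcal{D}\big(\mathcal{E}(\mathcal{D}(\rho))\big)=\mathcal{E}(\mathcal{D}(\rho))$, i.e.\ $\mathcal{D}\circ\mathcal{E}\circ\mathcal{D}(\rho)=\mathcal{E}\circ\mathcal{D}(\rho)$.

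The remaining step---and the one subtle point---is to upgrade this equality, which a priori holds only when the argument is a density operator, to the identity of Eq.~\eqref{Def2} viewed as linear maps on the full operator space $\mathcal{B}$. Here I would invoke the standard fact that the density operators span $\mathcal{B}$ (every Hermitian operator is a real combination of states, and every operator a complex combination of Hermitian ones); since both $\mathcal{E}\circ\mathcal{D}$ and $\mathcal{D}\circ\mathcal{E}\circ\mathcal{D}$ are linear, agreement on a spanning set forces agreement everywhere. I would also note in passing that normalization plays no role: the defining feature of an incoherent operator is invariance under $\mathcal{D}$, which is scale-free, so the reasoning is unaffected if $\mathcal{E}$ is merely trace-nonincreasing and its images are subnormalized.
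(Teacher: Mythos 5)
Your proof is correct and is exactly the argument the paper implicitly relies on: Proposition~\ref{charincops} is stated there without an explicit proof, and the two facts you isolate---that the incoherent states are precisely the fixed points of $\mathcal{D}$, and that $\mathcal{D}$ maps every state into that set---together with the linear extension from density operators (a spanning set of $\mathcal{B}$) to the superoperator identity of Eq.~\eqref{Def2}, constitute the standard route. Your closing remarks on the trace-nonincreasing case and on where linearity is actually needed are the right points of care, and there are no gaps.
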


We can also characterize incoherence-preserving operations in terms of their representations as matrices on the operator space $\mathcal{B}$, just as we did for dephasing-covariant operations.  

%The formal distinction between incoherence-preserving  operations and dephasing-covariant operations can be explained using their matrix representations: For a quantum operation $\mathcal{E}$ consider the matrix representation $\mathcal{E}_{ij}=\Tr(X^\dag_i\mathcal{E}(X_j))$, where $\{X_i\}_i$ is an orthonormal basis for the space of operators $\mathcal{B}$ (with respect to the Hilbert Schmidt decompostion).  Let $\mathcal{B}_\mathcal{I}$ be the subspace of the operator space $\mathcal{B}$ that is spanned by the set of incoherent states $\mathcal{I}$, and let  $\mathcal{B}^\perp_\mathcal{I}$ be the orthogonal subspace. In other words,  $\mathcal{B}^\perp_\mathcal{I}$  is  the subspace spanned by operators which have only off-diagonal elements relative to the standard basis.  Then, a quantum operation is dephasing-covariant iff its matrix representation has the following form in this basis,
%\begin{align}
%\text{DC operation}: \ \ \ \ \ \ \bordermatrix{~ & \mathcal{B}_\mathcal{I} & \mathcal{B}^\perp_\mathcal{I} \cr
%               \mathcal{B}_\mathcal{I}  & A  & 0 \cr
%               \mathcal{B}_\mathcal{I}^\perp & 0 &  B \cr}\ ,
%\end{align}
%where ${A}$ and ${B}$ are matrices. 

%On the other hand, a quantum 
We deduce from Eq.~\eqref{Def2} that an operation $\mathcal{E}$ is incoherence-preserving  if and only if its matrix representation has the following form relative to the decomposition $\mathcal{B} = \mathcal{B}^{\rm diag} \oplus \mathcal{B}^{\rm offd}$, 
\begin{align}
 \bordermatrix{~ & \mathcal{B}^{\rm diag} & \mathcal{B}^{\rm offd} \cr
               \mathcal{B}^{\rm diag}  & A  & C \cr
               \mathcal{B}^{\rm offd} & 0 &  B \cr}\ ,
\end{align}
where ${A}$, ${B}$ and $C$ are matrices.  A comparison with the analogous characterization of dephasing-covariant operations, Eq.~\eqref{twobytwoDC}, shows how incoherence-preserving operations do not preserve the diagonal and off-diagonal modes. 
%\color{cyan} 
%In terms of the diagonal and off-diagonal modes, Eq.~\eqref{Def2} merely implies that if $\mathcal{E}$ is incoherence-preserving, then $\mathcal{E}(\rho^{\rm diag}) \in \mathcal{B}^{\rm diag}$.
% operations are such that whenever $\mathcal{E}(\rho)=\sigma$, we have $\mathcal{E}(\rho^{\rm diag})=\sigma^{\rm diag}$.  The diagonal mode is preserved.  The off-diagonal mode, however, is generally {\em not} preserved by incoherence-preserving operations.
%the off-diagonal component of the output state is only a function of the off-diagonal component of the input state.
%\color{red} [Iman to reincorporate the $2\times2$ matrix picture.] \color{black}

We postpone our characterization of the incoherence-preserving {\em measurements} until Sec.~\ref{Sec:CriticismIP} because it forms the basis of one of our criticisms of this approach.

\begin{comment}
Just as was the case with the dephasing-covariant operations, the incoherence-preserving operations can be characterized in terms of their interaction with the dephasing map.
\begin{proposition}\label{dephasingcharincops}
A quantum operation $\mathcal{E}$ is incoherence-preserving 
if and only if
\beq\label{Def2}
\mathcal{E}\circ\mathcal{D}=\mathcal{D}\circ \mathcal{E}\circ\mathcal{D}\ .
\eeq
%In other words, $\mathcal{E}$ is incoherence-preserving  if and only if the off-diagonal component of the output state is only a function of the off-diagonal component of the input state. Thus, whenever $\mathcal{E}(\rho)=\sigma$, we have $\mathcal{E}(\rho^{\rm offd})=\sigma^{\rm offd}$.
\end{proposition}

We can also characterize the incoherence-preserving operations in terms of their action on the block-diagonal and off-block-diagonal modes as follows.
\begin{proposition}\label{modecharincops}
A quantum operation  $\mathcal{E}$ is incoherence-preserving if and only if the off-block-diagonal component of the output state is only a function of the off-block-diagonal component of the input state. Thus, whenever $\mathcal{E}(\rho)=\sigma$, we have $\mathcal{E}(\rho^{\rm offd})=\sigma^{\rm offd}$.
\end{proposition}
\end{comment}

\subsubsection{Incoherence-preserving unitary operations}\label{Sec:unitunit}
%\subsection{Unitary incoherent operations and unitary incoherence-preserving operations}\label{Sec:unitunit}

For simplicity, we start with the special case where the preferred subspaces are all 1-dimensional, where the unitary incoherence-preserving operations have a particularly simple form. 

Let $\mathcal{V}$  denote a unitary incoherence-preserving operation, and let $\{|l\rangle\langle l|\}_l$ denote the set of projectors onto the elements of the preferred basis. Consider the image of each $|l\rangle\langle l|$ under $\mathcal{V}$. Because a unitary operation preserves the rank of a state, the image must also be a projector onto a 1-dimensional subspace.   But given that $|l\rangle\langle l|$ is an incoherent state and $\mathcal{V}$ is incoherence-preserving, it follows that the image must be an incoherent pure state.  The only incoherent pure states are those in the set $\{|l\rangle\langle l|\}_l$, therefore $\mathcal{V}$ must map this set to itself, that is, for any $l$ it should hold that
\beq
\mathcal{V} (|l\rangle\langle l |)=|\pi(l)\rangle\langle  \pi(l) |
\eeq
where $\pi$ is a permutation over the set $\{l  \}_l$.

If $V$ is the unitary operator that defines the unitary incoherence-preserving operation $\mathcal{V}$  through $\mathcal{V}(\cdot)=V\cdot V^\dag$, the incoherence-preserving property implies that 
\beq\label{unit34}
V=  \sum_l e^{i\theta_l}|\pi(l)\rangle\langle  l |
\eeq
for a set of phases $\{e^{i\theta_l}\}_l$. So, any incoherence-preserving unitary operation can be characterized by a permutation $\pi$ of the preferred basis, and a set of phases $\{e^{i\theta_l}\}_l$.

The case where the preferred subspaces are not all 1-dimensional is slightly more complicated. 

 Let $\{ |l,\alpha_l\rangle \}_{\alpha_l}$  denote an arbitrary basis for the preferred subspace $\mathcal{H}_l$, so that $\{ |l,\alpha_l\rangle \}_{l,\alpha_l}$  is a basis of the entire Hilbert space.  Now consider the image of $\{ |l,\alpha_l\rangle \}_{\alpha_l}$ under the unitary $V$.  Although each vector $ |l,\alpha_l\rangle \in \mathcal{H}_l$ need not be mapped to another vector in $\mathcal{H}_l$, 
 %Now consider the image of $\{ |l,\alpha_l\rangle \langle l, \alpha_l | \}_{\alpha_l}$ under the unitary incoherence-preserving operation $\mathcal{V}$.  By unitarity, each such rank-1 projector must be mapped to another rank-1 projector, and although $\mathcal{V}$ need not preserve the preferred subspace in which it lies, 
it is still the case that for a given $l$, there must be some $l'$ such that for  {\em every} vector in $\mathcal{H}_l$,  its image is in $\mathcal{H}_{l'}$.  The reason is that if this were not the case, it would be possible to identify some vector in $\mathcal{H}_l$ 
%coherent superposition of the $\{ |l,\alpha_l\rangle \}_{\alpha_l}$
 that is mapped to a nontrivial coherent superposition of vectors lying in {\em different preferred subspaces}, and this would imply a violation of the incoherence-preserving property.  Note that the dimension of $\mathcal{H}_{l'}$ must be the same as the dimension of $\mathcal{H}_{l}$.

Therefore, if $\pi$ denotes a dimension-preserving permutation of the preferred subspaces, and $V_l$ denotes a unitary that acts arbitrarily {\em within} the $\mathcal{H}_l$ subspace and as identity on the complementary subspace, then the property of $\mathcal{V}$ being incoherence-preserving implies that
%It follows that we can write the constraint on $V$ as
\beq\label{unit35}
V=  \sum_{l,\alpha_l} V_{\pi(l)} |\pi(l), \alpha_{\pi(l)} \rangle \langle l,\alpha_{l} |.
\eeq

So, any incoherence-preserving unitary operation can be characterized by a dimension-preserving permutation $\pi$ among the preferred subspaces, and a set of unitary operators $\{ V_{l}\}_l$.
\color{black}

\subsection{Relation of incoherence-preserving operations to dephasing-covariant operations}\label{relationIPDC}

We begin by considering unitary operations.

\begin{proposition}\label{coincidenceunitaries}
The set of unitary dephasing-covariant operations relative to the preferred subpaces $\{ \mathcal{H}_l \}_l$ is equivalent to the set of unitary incoherence-preserving operations relative to these same subspaces. 
\end{proposition}

The proof is as follows.  Every dephasing-covariant operation is incoherence-preserving and therefore what must be demonstrated is that for unitary operations, being incoherence-preserving implies being dephasing-covariant. This simply follows from the general form of an incoherence-preserving unitary operation, Eq.~\eqref{unit35}.
%\color{red} [Iman: Provide proof based on demonstrating that my characterization of the most general IP unitary is indeed dephasing-covariant]

In general, however, the dephasing-covariant operations are a strict subset of the incoherence-preserving operations.
\color{black}

%In the next subsection, we shall see that this requirement is nontrivial, which is to say that the dephasing-covariant operations are a strict subset of the incoherence-preserving operations.

%The following example clarifies this difference:
To demonstrate this, we provide a simple example of an operation that is incoherence-preserving but not dephasing-covariant. 
 Consider a qubit and denote the preferred basis thereof by $\{|0\rangle,|1\rangle\}$. Let $|\pm\rangle = 2^{-1/2} ( |0\rangle \pm |1\rangle )$.  Consider the quantum operation defined by
\beq\label{examp}
%\mathcal{E}(\rho)=|0\rangle\langle 0| \langle+|\rho|+\rangle + |1\rangle\langle 1| \langle-|\rho|-\rangle,
\mathcal{E}(\rho)=|0\rangle\langle 0| {\rm Tr} (|+\rangle\langle+|\rho) + |1\rangle\langle 1| {\rm Tr} (|-\rangle\langle-|\rho),
\eeq
 $\mathcal{E}$ is clearly incoherence-preserving, because for all input states (and therefore, in particular, incoherent states), the output is always an incoherent state. On the other hand, one can easily show that it is not dephasing-covariant because $\mathcal{D} \circ \mathcal{E} = \mathcal{E}$, while $\mathcal{E} \circ \mathcal{D} \ne \mathcal{E}$.  This can be seen, for instance, by noting that while $ \mathcal{D}(|+\rangle\langle+|)= \mathcal{D}(|-\rangle\langle-|)$, operation  $\mathcal{E}$ maps  $|+\rangle$ and $|-\rangle$ to two different states.

This example is related to the fact  if one considers operations with trivial output spaces, i.e., destructive measurements, then there are incoherence-preserving operations which are not dephasing-covariant (See Sec.~\ref{Sec:CriticismIP}).

 \begin{comment}
 The latter fact is seen as follows.  First, note that
\begin{eqnarray}
 \mathcal{E} \circ \mathcal{D} (\cdot) &=&|0\rangle\langle 0| {\rm Tr} (|+\rangle\langle+|\mathcal{D}(\cdot)) + |1\rangle\langle 1| {\rm Tr} (|-\rangle\langle-|\mathcal{D}(\cdot)),\nonumber\\
 &=&|0\rangle\langle 0| {\rm Tr} (\mathcal{D}(|+\rangle\langle+|)\cdot) + |1\rangle\langle 1| {\rm Tr} (\mathcal{D}(|-\rangle\langle-|)\cdot),\nonumber\\
 &=& \frac{1}{2}  I {\rm Tr} (\cdot)
 %\nonumber\\
% &\ne& \mathcal{E}(\rho),
 \end{eqnarray}
 where in the second line we have used the fact that $\mathcal{D}$ is self-adjoint relative to the Hilbert-Schmidt inner product. 
 \end{comment}
 
  It then suffices to note that there are $\rho$ such that $\mathcal{E} \circ \mathcal{D}(\rho) \ne \mathcal{E}(\rho)$.  This occurs whenever $\rho$ is such that ${\rm Tr} (|+\rangle\langle+|\rho) \ne  {\rm Tr} (|-\rangle\langle-|\rho)$.

%BCP characterize the incoherent operations in terms of their Kraus decomposition.

%\subsection{Relation of BCP proposal to our proposal}

Suppose, as we have done before, that the operator $L$ has the preferred subspaces as its eigenspaces.  Denote the operations that are incoherence-preserving relative to the eigenspaces of $L$ by $IP_L$.   Then we have the following result.

\begin{proposition}\label{propIncDC}
The set of quantum operations that are dephasing-covariant relative to the eigenspace of $L$ are a strict subset of the set of quantum operations that are incoherence-preserving relative to the eigenspaces of $L$,
\beq
\text{DC}_{L}\  \subset  \ \text{IP}_L.
\eeq
\end{proposition}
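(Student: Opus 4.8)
The plan is to derive the inclusion $\text{DC}_L\subseteq\text{IP}_L$ directly from the two operational characterizations already in hand, and then to obtain strictness by invoking the explicit example exhibited just above the statement, so that the proposition is essentially a corollary of what precedes it.

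For the inclusion, I would start from the defining condition of dephasing-covariance, Eq.~\eqref{DefDC}, namely $\mathcal{E}\circ\mathcal{D}=\mathcal{D}\circ\mathcal{E}$, and compare it with the characterization of incoherence-preservation furnished by Proposition~\ref{charincops}, Eq.~\eqref{Def2}, namely $\mathcal{E}\circ\mathcal{D}=\mathcal{D}\circ\mathcal{E}\circ\mathcal{D}$. Assuming $\mathcal{E}$ is dephasing-covariant, one computes
\begin{equation}
\mathcal{D}\circ\mathcal{E}\circ\mathcal{D}=\mathcal{D}\circ(\mathcal{E}\circ\mathcal{D})=\mathcal{D}\circ(\mathcal{D}\circ\mathcal{E})=\mathcal{D}^2\circ\mathcal{E}=\mathcal{D}\circ\mathcal{E}=\mathcal{E}\circ\mathcal{D},
\end{equation}
where the second equality uses dephasing-covariance, the fourth uses the idempotence $\mathcal{D}^2=\mathcal{D}$, and the last uses dephasing-covariance once more. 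The resulting identity is exactly Eq.~\eqref{Def2}, so every dephasing-covariant operation is incoherence-preserving, giving $\text{DC}_L\subseteq\text{IP}_L$. Equivalently, the inclusion is already visible at the level of matrix representations on $\mathcal{B}=\mathcal{B}^{\rm diag}\oplus\mathcal{B}^{\rm offd}$: the block-diagonal form \eqref{twobytwoDC} characterizing DC operations is simply the special case $C=0$ of the block-triangular form characterizing IP operations.

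For strictness, I would point to the qubit operation $\mathcal{E}$ defined in Eq.~\eqref{examp}, already shown above to be incoherence-preserving (its output is an incoherent state for every input, hence in particular for incoherent inputs) yet not dephasing-covariant (since $\mathcal{D}\circ\mathcal{E}=\mathcal{E}$ while $\mathcal{E}\circ\mathcal{D}\neq\mathcal{E}$, as witnessed by $\mathcal{D}(|+\rangle\langle+|)=\mathcal{D}(|-\rangle\langle-|)$ whereas $\mathcal{E}$ separates $|+\rangle$ and $|-\rangle$). This exhibits an element of $\text{IP}_L\setminus\text{DC}_L$, establishing that the inclusion is proper.

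I do not anticipate any genuine obstacle, since all the content is already contained in the preceding characterizations; the only points requiring mild care are (i) tracking the distinct input and output dephasing maps when the input and output spaces differ—the displayed algebra goes through verbatim with $\mathcal{D}_{\rm in}$ and $\mathcal{D}_{\rm out}$ substituted for $\mathcal{D}$, because $\mathcal{D}_{\rm out}$ is itself idempotent—and (ii) confirming that the example \eqref{examp} genuinely falls outside DC, which the accompanying remark secures. In this sense the proposition follows from Proposition~\ref{charincops} together with the preceding example.
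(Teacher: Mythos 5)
Your proposal is correct and follows essentially the same route as the paper: the inclusion is obtained from the commutation relation $\mathcal{E}\circ\mathcal{D}=\mathcal{D}\circ\mathcal{E}$ together with idempotence of $\mathcal{D}$ (the paper phrases this at the level of states via Eq.~\eqref{IPofDC}, but explicitly notes that the alternative characterizations, including the one you use via Proposition~\ref{charincops}, give the same conclusion), and strictness is established by the very same example, Eq.~\eqref{examp}. No gaps.
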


%\begin{proposition}\label{propIncDC}
%The set of quantum operations that are dephasing-covariant with respect to a preferred basis, denoted $DC$, are a proper subset of the set of incoherent quantum operations relative to that preferred basis, denoted $Inc$,
%\beq
%\text{DC}\  \subset  \ \text{Inc}.
%\eeq
%\end{proposition}

\begin{figure}[htb]
 \includegraphics[scale=0.35]{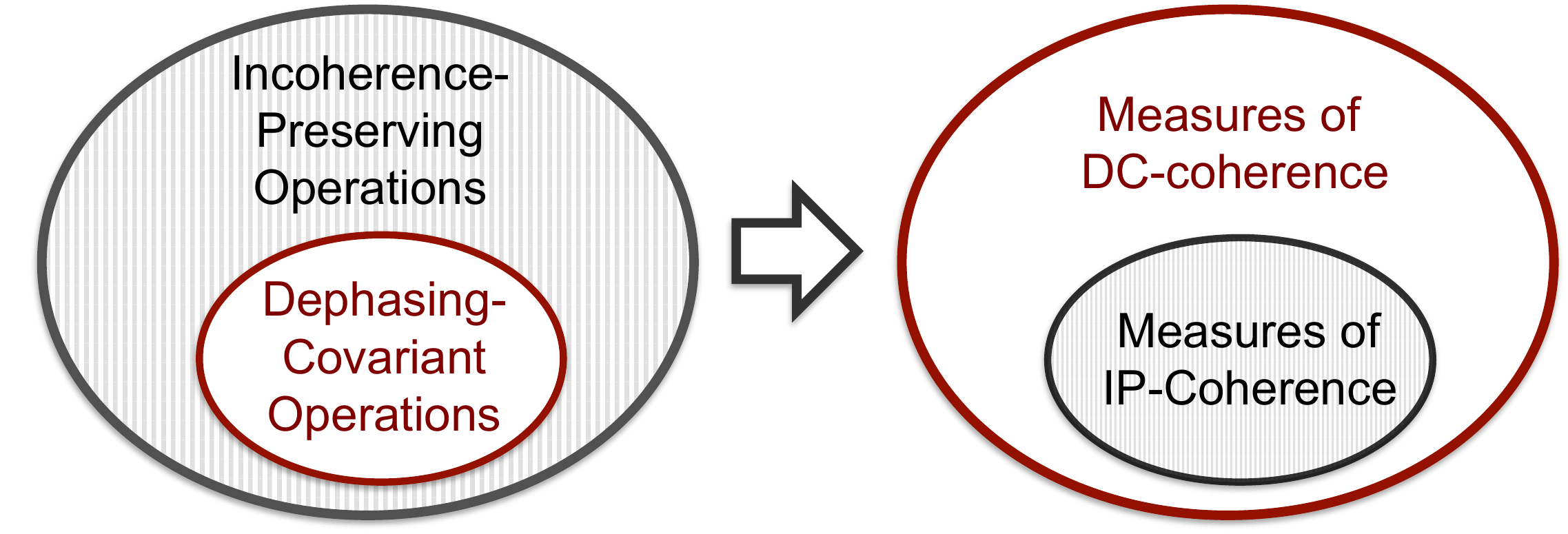}
 \caption{The relation between the incoherence-preserving and the dephasing-covariant operations and the relation between the associated measures of coherence that these sets of operations define.  It is not known whether the inclusion relation among the measures is strict.}
 \label{Venn2}
\end{figure}

\begin{proof}
%\color{red} [Lead with mode-based analysis] \color{black}
The discussion surrounding Eq.~\eqref{IPofDC} showed that any dephasing-covariant operation $\mathcal{E}$ is incoherence-perserving (and similar proofs could be given in terms of the alternative characterizations of the dephasing-covariant operations).

%For any dephasing-covariant operation $\mathcal{E}$, we have, by definition, $\mathcal{E}\circ\mathcal{D}=\mathcal{D}\circ \mathcal{E}$\ . This implies
%\beq 
%\mathcal{D}\circ\mathcal{E}\circ\mathcal{D}=\mathcal{E}\circ\mathcal{D}^2=\mathcal{E}\circ\mathcal{D},
%\eeq
%where we have used the idempotence of $\mathcal{D}$, which, given Eq.~\eqref{Def2}, implies that $\mathcal{E}$ is an incoherence-preserving operation. 

To see that the dephasing-covariant operations are a \emph{proper} subset of the incoherent operations, we provide an example of an operation that is incoherence-preserving but not dephasing-covariant.  
\end{proof}

\subsection{Free operations as incoherent operations}

The approach to coherence described above is certainly in the same spirit as the approach introduced by BCP.
 %\footnote{\label{footBCP} 
 Strictly speaking, however, BCP take the free operations to be the {\em incoherent operations}, defined as follows:
  \begin{definition}\label{footBCP} 
A quantum operation $\mathcal{E}$ is said to be  \emph{incoherent} if it admits of a Kraus decomposition where each term is incoherence-preserving, that is, if there is a decomposition with Kraus operators $\{K_n\}_n$ such that $ K_n \mathcal{I}_{\rm in} K^{\dag}_n \subset \mathcal{I}_{\rm out}$ for all $n$.
\end{definition}

In their article, BCP only considered coherence relative to a decomposition of the Hilbert space into 1-dimensional subspaces, while one may want to consider decompositions into subspaces of arbitrary dimension.  Also, BCP did not explicitly consider the possibility that the input and output spaces of $\mathcal{E}$ are different. The definition of incoherent operations we have just provided incorporates these two generalizations of the notion. 

Most importantly, however, most of our discussion will focus on the set of incoherence-preserving operations as the free set rather than the incoherent operations. 
%  Our definition \eqref{incoherent} again constitutes the natural generalization of their notion to this case. 
 As it turns out, this difference will not be relevant for any of our criticisms of the BCP approach.
%but they seem to have the index of the Kraus operator be the outcome of the measurement, whereas in the general case a single outcome might be associated with many Kraus operators. )

% BCP propose that the set of free operations defining the resource theory of coherence should be those that are incoherence-preserving.
\begin{comment}
We note that these operations would be best described as \emph{incoherence-preserving} rather than \emph{incoherent}, but we shall stick to the former terminology, as this has become standard.
\end{comment}

%Suppose, as we have done before, that the operator $L$ has the preferred subspaces as its eigenspaces.  
%Denote the operations that are incoherence-preserving relative to the eigenspaces of $L$ by $IP_L$.  
Denote the operations that are incoherence-preserving (dephasing-covariant) relative to a particular choice $\mathcal{S}\equiv \{ \mathcal{H}_l\}_l$ of the preferred subspaces by $IP_{\mathcal{S}}$  ($DC_{\mathcal{S}}$).  
%$\{ \mathcal{H}_l\}_l$ by $IP_{\{ \mathcal{H}_l\}_l}$  ($DC_{\{ \mathcal{H}_l\}_l}$).  
 Then we have the following result.
\begin{proposition}\label{propIncDC}
%The set of quantum operations that are dephasing-covariant relative to the eigenspace of $L$ are a strict subset of the set of quantum operations that are incoherence-preserving relative to the eigenspaces of $L$,
\beq
\text{DC}_{\mathcal{S}}  \subset  \ \text{IP}_{\mathcal{S}}.
\eeq
Therefore, any measure of IP$_{\mathcal{S}}$-coherence is also a measure of DC$_{\mathcal{S}}$-coherence (See Fig.\ref{Venn2}).
\end{proposition}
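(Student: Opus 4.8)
The plan is to establish the set inclusion first and then read off the consequence for measures, paralleling the argument already used for Proposition~\ref{prop11}. For the inclusion DC$_{\mathcal{S}} \subseteq$ IP$_{\mathcal{S}}$, I would reuse the computation surrounding Eq.~\eqref{IPofDC}: if $\mathcal{E}$ is dephasing-covariant, then for any incoherent $\rho$ (so that $\mathcal{D}(\rho)=\rho$) one has $\mathcal{E}(\rho)=\mathcal{E}(\mathcal{D}(\rho))=\mathcal{D}(\mathcal{E}(\rho))$, whence $\mathcal{E}(\rho)$ is invariant under dephasing and hence incoherent. This is precisely the incoherence-preserving property, so every dephasing-covariant operation lies in IP$_{\mathcal{S}}$.

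To show the inclusion is strict, I would exhibit a single operation in IP$_{\mathcal{S}} \setminus$ DC$_{\mathcal{S}}$, and the qubit example of Eq.~\eqref{examp} serves directly: its output is always block-diagonal, so it is incoherence-preserving, yet $\mathcal{E}\circ\mathcal{D}\ne\mathcal{E}=\mathcal{D}\circ\mathcal{E}$, so it fails to commute with $\mathcal{D}$ and is therefore not dephasing-covariant. More systematically, comparing the block-matrix form of a dephasing-covariant map in Eq.~\eqref{twobytwoDC} with the analogous form for an incoherence-preserving map shows that the latter admits a nonzero off-diagonal-to-diagonal block $C$ that the former forbids, so any incoherence-preserving operation with $C\ne 0$ witnesses strictness.

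The consequence for measures then follows purely from the inclusion together with the definition of a measure of coherence. Restricting to trace-preserving maps, the channels in DC$_{\mathcal{S}}$ form a subset of the channels in IP$_{\mathcal{S}}$. Hence if $f$ is a measure of IP$_{\mathcal{S}}$-coherence---so that $f(\mathcal{E}(\rho))\le f(\rho)$ for every trace-preserving $\mathcal{E}\in$ IP$_{\mathcal{S}}$ and $f$ vanishes on $\mathcal{I}$---then condition (i) holds \emph{a fortiori} for every trace-preserving $\mathcal{E}\in$ DC$_{\mathcal{S}}$, while condition (ii) is unchanged because the set $\mathcal{I}$ of incoherent states depends only on the choice of preferred subspaces $\mathcal{S}$ and not on the choice of free operations. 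Therefore $f$ is also a measure of DC$_{\mathcal{S}}$-coherence.

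I do not anticipate a genuine obstacle here; the content is entirely in the inclusion of the free sets, and the statement about measures is a formal corollary of the general principle that enlarging the free set can only shrink the class of monotones. The only points requiring care are to phrase condition (i) for \emph{trace-preserving} operations, matching the definition of a measure of coherence, and to note that the strictness of the operation inclusion does not by itself force the inclusion of measures to be strict---which is why Fig.~\ref{Venn2} leaves that latter question open.
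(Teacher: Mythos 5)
Your proposal is correct and follows essentially the same route as the paper: the inclusion via the identity $\mathcal{E}(\rho)=\mathcal{E}(\mathcal{D}(\rho))=\mathcal{D}(\mathcal{E}(\rho))$ surrounding Eq.~\eqref{IPofDC}, strictness via the qubit example of Eq.~\eqref{examp}, and the measure statement as an immediate corollary of the fact that enlarging the free set can only shrink the class of monotones. Your closing remarks on restricting to trace-preserving maps and on why strictness of the set inclusion does not transfer to the measures match the paper's own caveats.
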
\color{black}

\begin{figure}[htb]
 \includegraphics[scale=0.35]{Venn2.pdf}
 \caption{The relation between the incoherence-preserving and the dephasing-covariant operations and the relation between the associated measures of coherence that these sets of operations define.  It is not known whether the inclusion relation among the measures is strict.}
 \label{Venn2}
\end{figure}

Here the second statement follows from the fact that if $\rho \to \sigma$ is a transformation that is possible under dephasing-covariant operations, then it is also possible under incoherence-preserving  operations because the latter set includes the former. 
This, in turn, implies that any function on states that is nonincreasing under the latter set of operations is nonincreasing under the former set of operations.
%, hence any measure of IP$_L$-coherence is also a measure of DC$_L$-coherence.
% under the proposal wherein the free operations are the dephasing-covariant operations, i.e. any measure of dephasing noninvariance. 

\color{black}
%\begin{proposition}\label{prop11}
%For an elementary system, 
%Any measure of IP$_L$-coherence is also a measure of DC$_{L}$-coherence, but not vice-versa. 
%Any measure of IP$_L$-coherence is also a measure of DC$_{L}$-coherence.
%\end{proposition}

At present, it is not clear whether the vice-versa holds, that is, whether or not there exist measures of DC$_{\mathcal{S}}$-coherence that are not measures of IP$_{\mathcal{S}}$ coherence.  This question remains open because although we have examples of operations that are incoherence-preserving but not dephasing-covariant, we do not have examples of state transformations $\rho \to \sigma$ which are possible under incoherence-preserving operations but not dephasing-covariant ones.  That is, we have no proof that the additional operations in the incoherence-preserving set are in fact helpful in any state-conversion problem.  The question is also open if one considers measures of coherence under the incoherent operations rather than the incoherence-preserving operations.
\color{black}

%In Sec.~\ref{Sec:modes}, we demonstrate the strictness of the inclusion by providing an explicit example of  a measure of IP$_L$-coherence that is not a measure of DC$_L$-coherence.  The relation between the two types of measures is illustrated in Fig.~\ref{Venn2}.

%Again, the strictness of the inclusion also holds even if we consider measures of coherence relative to the incoherent operations of BCP rather than the incoherence-preserving operations.

\subsection{Free operations as incoherent operations}

The approach to coherence described above is certainly in the same spirit as the approach introduced by BCP.
 %\footnote{\label{footBCP} 
 Strictly speaking, however, BCP take the free operations to be the {\em incoherent operations}, defined as follows:
  \begin{definition}\label{footBCP} 
A quantum operation $\mathcal{E}$ is said to be  \emph{incoherent} if it admits of a Kraus decomposition where each term is incoherence-preserving, that is, if there is a decomposition with Kraus operators $\{K_n\}_n$ such that $ K_n \mathcal{I}_{\rm in} K^{\dag}_n \subset \mathcal{I}_{\rm out}$ for all $n$.
\end{definition}

In their article, BCP only considered coherence relative to a decomposition of the Hilbert space into 1-dimensional subspaces, while one may want to consider decompositions into subspaces of arbitrary dimension.  Also, BCP did not explicitly consider the possibility that the input and output spaces of $\mathcal{E}$ are different. The definition of incoherent operations we have just provided incorporates these two generalizations of the notion. 

Note that because unitary operations have only a single term in their Kraus decomposition, for unitary operations being incoherence-preserving coincides with being incoherent.  

%Note also that even if one restricts attention to the incoherent operations rather than the set of incoherence-preserving operations, the dephasing-covariant operations are still a strict subset thereof.  The reason is that our example of an operation that was incoherence-preserving but not dephasing-covariant, operation $\mathcal{E}$ in Eq.(\ref{examp}), is itself an incoherent operation, and therefore also establishes that there are incoherent operations that are not dephasing-covariant (See also \cite{yang2015optimal}).

%Consequently, every measure of coherence according to the BCP approach corresponds to a measure of DC-coherence and hence, by proposition \ref{prop11}, to a measure of TC-coherence.  The question arises, therefore, of whether the measures of coherence in the BCP approach that have been recently proposed correspond to measures of coherence that were already well-known in the context of the resource theory of asymmetry.  The answer is that they were, as we demonstrate in Sec.~\ref{Sec:examples}.

\subsection{Criticism of these approaches to defining coherence}\label{Sec:CriticismIP}
%\subsection{The unnaturalness of the set of free measurements according to the BCP proposal}

%The restriction to  incoherent operations
As noted in the introduction, a given choice of the set of free operations is only physically justified if it can be understood as arising from some natural restriction on experimental capabilities.  
%In the case of the translationally-covariant operations, Sec.~\ref{JustificationsTC} described several ways in which this set could be justified by a natural restriction.  In the case of dephasing-covariant operations, it was unclear whether such a restriction can be found.  
The situation where the free operations are the incoherence-preserving or incoherent operations is like that of the dephasing-covariant operations---it is unclear whether there is a natural restriction that picks out these sets. 
%The question is an important one and bears more investigation, as the usefulness of these approaches hinges on its answer.

Nonetheless, we describe two features of the incoherence-preserving or incoherent operations that seem pertinent to the question of whether they can arise from a natural restriction: how they constrain measurements and the nonexistence of a certain kind of Stinespring dilation.

%Our criticism of these approach, relative to the two alternatives we have presented in this article, is based on two facts about the former.  First, the set of measurements they take to be freely implementable is unnatural.  Second, they fail to satisfy a natural consistency requirement related to the existence of a Stinespring dilation. 
% nonunitary incoherent operations cannot be simulated using unitary incoherent operations and incoherent states.

\subsubsection{Incoherence-preserving and incoherent measurements}
%{The unnaturalness of the set of free measurements}

If the output space of a quantum operation is trivial, so that it corresponds to tracing with a measurement effect $E$ on the input space, that is,  $\mathcal{E}(\cdot) = {\rm Tr}(E \cdot)$, then the definition of incoherence-preserving in terms of the dephasing map, Eq.~\eqref{Def2}, reduces to a trivial condition that is satisfied by all effects $E$. 
%\begin{align}
%{\rm Tr}(E \mathcal{D}(\cdot)) &= {\rm Tr}(E \mathcal{D}(\cdot)).
%\end{align}
%But this condition is true for {\em all} effects $E$, 
Consequently, there is no constraint on the effects in this approach.    Similarly, in the case of the incoherent operations proposed by BCP, for any given POVM, the operation associated to a given outcome can always be chosen to prepare the output system in an incoherent state.  
It follows that the set of incoherent measurements includes {\em any} POVM.
\begin{proposition}\label{prop:meas}
The sets of POVMs associated to the set of incoherence-preserving measurements and the set of incoherent measurements are both the full set of POVMs.
%The set of POVMs considered to be free in the BCP approach (i.e. included among the incoherence-preserving operations) is the full set of  POVMs.
\end{proposition}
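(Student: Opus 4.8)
The plan is to prove the slightly stronger statement that a single construction realizes an arbitrary target POVM as a measurement whose every outcome operation is \emph{simultaneously} incoherence-preserving and incoherent, which yields both claims at once. First I would recall the relevant definitions: a measurement with outcomes indexed by $i$ is a collection of quantum operations $\{\mathcal{E}_i\}_i$ whose sum is a channel, it realizes the POVM $\{E_i\}_i$ exactly when $\Tr(\mathcal{E}_i(\rho))=\Tr(E_i\rho)$ for every $\rho$, and it counts as free in the incoherence-preserving (incoherent) approach iff each $\mathcal{E}_i$ is individually incoherence-preserving (incoherent). It therefore suffices to take an arbitrary POVM $\{E_i\}_i$ and build free outcome operations realizing it.

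The construction is to register the outcome and then reset the output to a fixed incoherent pure state. Fixing a normalized vector $|g\rangle$ lying entirely within one preferred subspace, I would set $\mathcal{E}_i(\rho)\equiv\Tr(E_i\rho)\,|g\rangle\langle g|$. Then $\Tr(\mathcal{E}_i(\rho))=\Tr(E_i\rho)$, so $\{\mathcal{E}_i\}_i$ realizes the given POVM, and $\sum_i\Tr(\mathcal{E}_i(\rho))=\Tr\big((\sum_i E_i)\rho\big)=\Tr(\rho)$ using $\sum_i E_i=I$, so the collection is indeed a channel. Since each $\mathcal{E}_i$ sends \emph{every} input state, and in particular every incoherent state, to a subnormalized multiple of the incoherent state $|g\rangle\langle g|$, each $\mathcal{E}_i$ is incoherence-preserving.

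To see that each $\mathcal{E}_i$ is moreover incoherent in the BCP sense, I would exhibit a Kraus decomposition whose terms are individually incoherence-preserving. Writing the spectral decomposition $E_i=\sum_k|\chi_{i,k}\rangle\langle\chi_{i,k}|$ with eigenvalues absorbed into the vectors, take $K_{i,k}\equiv|g\rangle\langle\chi_{i,k}|$. A direct computation gives $\sum_k K_{i,k}\rho K_{i,k}^\dagger=|g\rangle\langle g|\sum_k\langle\chi_{i,k}|\rho|\chi_{i,k}\rangle=\Tr(E_i\rho)\,|g\rangle\langle g|=\mathcal{E}_i(\rho)$, while each term $K_{i,k}\rho K_{i,k}^\dagger=\langle\chi_{i,k}|\rho|\chi_{i,k}\rangle\,|g\rangle\langle g|$ is a nonnegative multiple of $|g\rangle\langle g|$ and hence incoherent, so every $K_{i,k}$ is incoherence-preserving. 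Thus the same family $\{\mathcal{E}_i\}_i$ certifies that the target POVM lies in both sets.

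I do not expect a genuine obstacle: the real content is that allowing the post-measurement output to be a fixed incoherent state decouples the realizable effects from any constraint the free-operation condition might otherwise impose. The only point needing care is conceptual rather than technical---being explicit that freeness of a \emph{measurement} is a constraint on each outcome operation (its state-update data), not on the bare effect, so that resetting the output is precisely what removes the constraint. An even shorter route, which I would mention as motivation, uses destructive measurements: taking $\mathcal{E}_i(\cdot)=\Tr(E_i\,\cdot)$ with one-dimensional output, the output dephasing map is the identity and condition~\eqref{Def2} collapses to a tautology satisfied by every effect.
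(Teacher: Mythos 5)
Your proof is correct and follows essentially the same route as the paper, which argues that destructive measurements trivialize the incoherence-preserving condition and that, for the incoherent case, "the operation associated to a given outcome can always be chosen to prepare the output system in an incoherent state." Your version is slightly more explicit than the paper's prose sketch—in particular you actually exhibit the Kraus decomposition $K_{i,k}=|g\rangle\langle\chi_{i,k}|$ certifying that the reset-to-$|g\rangle\langle g|$ operations are incoherent in the BCP sense, a detail the paper only asserts.
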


%\color{red}[ Should we say something about how this is still true even if one considers the incoherent operations?]\color{black}

%\color{red} [Should we say something about the transformative aspect of measurements and how BCP do put a constraint on this aspect?  I have already written some text about the difference between the predictive and transformative aspect of measurements.] \color{black}
%For a measurement to be in the class of incoherent operations, the instrument must be such that for every outcome, incoherent states are mapped to incoherent states.  This imposes a constraint on the predictive aspect of the measurement and hence on form of the instrument.  
In these approaches, therefore, there is no limitation on the retrodictive capacity of a free measurement. In particular, measurements in the free set are capable of detecting the presence of coherence in a state.  This is a rather counterintuitive feature for free measurements to have.  Furthermore, the fact that the free states are incoherent while the free effects are not implies that the proposal has an awkward asymmetry between prediction and retrodiction.  
%Intuitively, this should not be possible with a free measurement 
%constitutes a drawback of this approach.  
Recall that in the approaches based on translationally-covariant or dephasing-covariant operations, by contrast, the free effects are the incoherent effects. 
%Therefore, in a comparison between the two approaches to speakable coherence---via dephasing-covariant operations and via incoherence-preserving operations--intuitions about what POVMs ought to be free seem to favour the dephasing-covariant approach.
%Intuitions about which measurements ought to be free would therefore seem to favour dephasing-covariant operations over incoherence-preserving operations.
These considerations, in our view, suggest that this approach does not have a natural physical justification.  The criticism is not conclusive however---an explanation of circumstances in which just this sort of restriction arises may yet be forthcoming. 
\color{black}

%\color{red} [Rob, this reminds me a nice result of Lucian Hardy and Jonathan Walgate which shows that any pair of bipartite pure (entangled) states can be perfectly distinguished from each other, using LOCC. So, it's not correct to say that to distinguish different resource states we need to consume resource.]\color{black} 
%\color{blue} [Iman, we didn't say that.  We merely said that it is counterintuitive and not conclusive.]\color{black}  
 % Again, this result also holds for the incoherent operations of the BCP proposal (Definition \ref{footBCP}) because the set of unitary incoherent operations are equivalent to the set of unitary incoherence-preserving operations. 

%Note, furthermore, that the approach based on dephasing-covariant operations has time-symmetry between states and effects---both are restricted to incoherence-preserving operators.  In the BCP approach, on the other hand, only the states are restricted in form.  As such, the BCP approach involves an awkward lack of symmetry in its treatment of states and effects. 
%The BCP approach can also be critized on the grounds of having a certain time-asymmetry, namely, that while the free states are required to be diagonal in the preferred basis, the free observables are not.

%\textbf{The lack of a Stinespring dilation}
\subsubsection{Considerations regarding the existence of a free dilation}\label{dilationnotesIP}

%We begin by considering whether a free dilation is possible when one treats all physical systems equivalently, so that the auxiliary systems in the dilation are subject to the same restrictions as the systems are.
%As argued in the introduction, a strong desideratum on the set of free operations is that every operation can be achieved via a Stinespring dilation using only free states and free unitaries.  In the case of incoherence-preserving operations, the question is whether every such operation on a system can be obtained by coupling that system to an ancilla prepared in an incoherent state via an incoherence-preserving unitary operation on the composite.

It turns out that, if one assumes that the preferred subspaces for the system-ancilla composite are the tensor products of the preferred subspaces for the system and for the ancilla, i.e., if all systems are treated even-handedly, then incoherence-preserving and incoherent operations do not have free dilations.  For instance, the operation in Eq.~(\ref{examp}), which distinguishes states $|+\rangle$ and  $|-\rangle$, is both incoherent and incoherence-preserving and cannot be implemented in this way.

The lack of a free dilation can be understood as a consequence of the fact that when it comes to unitary operations, there is no distinction between the sets of dephasing-covariant, incoherent, or incoherence-preserving operations (This follows from proposition~\ref{coincidenceunitaries}, together with the fact that a unitary operation has a single term in its Kraus decomposition). It follows that one can substitute 
  ``incoherence-preserving (or incoherent) unitary operations'' for ``dephasing-covariant unitary operations'' in proposition \ref{Prop:StinespringForDC}, while preserving its validity.  Hence with incoherent states and incoherence-preserving (or incoherent) unitaries, one can only generate dephasing-covariant operations.  Because these are, by proposition~\ref{prop1}, a proper subset of the incoherent operations, it follows that we cannot implement every incoherence-preserving (incoherent) operation using incoherent states and incoherence-preserving (incoherent) unitaries. Again, we note that a physical justification might still be possible.

\subsubsection{Incoherent operations versus incoherence-preserving operations}

%We have left open the possibilty that there may be a d physical scenaro may lead to a restriction of the set of free operations 
%If a physical justification of a given set of free operations can be provided, then it can abjudicate disputes about what is the 

Which of various different sets of free operations is the appropriate one for defining a resource theory, for instance, whether to use the incoherence-preserving or the incoherent operatons to define coherence, is a question that can be settled by finding a physical scenario or restriction on experimental capabilities that picks out one or the other.   
%If the physical justification comes
If the set of operations are physically justified by an interaction between the system and an uncontrollable environment (on which one cannot implement any measurements), then incoherence-preserving operations are more natural than incoherent operations.
 %because, by assumption, one cannot implement measurements on the environment.  
 If, on the other hand, the physical justification comes from an interaction between the system and an apparatus with a classical read-out, then incoherent operations are more natural than incoherence-preserving operations.

Furthermore, if one seeks a physical justification in terms of a constraint on the dilation of an operation, then this too bears on the question of whether it is physically more reasonable to take the free operations to be the incoherent operations or the incoherence-perserving operations.    

%The question of whether a given set of operations can be understood in terms of a constraint on their dilations also bears on the question of whether it is physically more reasonable to take the free operations to be the incoherent operations or the incoherence-perserving operations.    

%This is because the best argument that can be given in favour of the incoherent operations appeals to the possibility of dilation. 
%The argument can be summarized as follows.  
Consider the following (flawed) argument in favour of using the incoherent operations.
Take the standard Stinespring dilation of an operation.  For any Kraus decomposition of that operation, it is possible to ensure that the effective map on the system is a single term in that Kraus decomposition by implementing the operation through its standard Stinespring dilation and then performing an appropriate measurement on the auxiliary system and post-selecting on a single outcome.  
%We will refer to such a scheme as {\em resolving} the operation into Kraus terms through its dilation.  
Given this possibility of realizing a single term in the Kraus decomposition, so the argument goes,  one should require each such term to be incoherence-preserving, rather than just requiring this of their sum. 
% Finally, the argument states that one should demand that each Kraus term in the resolution of the operation should be incoherence-preserving. 
%it is reasonable to demand that the effective map on the system under such post-selection should still be incoherence-preserving.  

However, this argument has appealed to the standard Stinespring dilation theorem which only guarantees 
that for every operation there is {\em some} unitary on a larger system that realizes it by dilation.  In the context of a resource theory, however, one cannot avail oneself of {\em any} unitary on the larger system because such a unitary might not be free.  Similar comments apply to the states and effects on the auxiliary system that appear in the dilation.  In a resource theory, if a free operation is implemented by dilation, then it must be implemented by a {\em free} dilation, which is a strict subset of all possible dilations.

Therefore, to settle this issue by appeal to dilations one must find a physical justification of either the incoherence-preserving or incoherent operations in terms of a restriction on the dilation resources, which is an unsolved problem, as we noted in the previous section.

\color{black}
Finally, we noted in the introduction that our proposal for the set of free operations in a theory of speakable coherence, the dephasing-covariant operations, is closely related to the proposal found in Ref.~\cite{yadin2015quantum}.
 %the proposal for the set of free operations found in Ref.~\cite{} is closely related to the one based on dephasing-covariant operations. 
  In fact, the set of free operations of Ref.~\cite{yadin2015quantum} stands to the set of dephasing-covariant operations as the set of incoherent operations stands to the set of incoherence-preserving operations.  As such, disputes about the relative merits of the fomer two sets are akin to those about the relative merits of the latter two---they will only be resolved when one or the other proposal is given a physical justification as all and only the operations that can be dilated using a restricted set of states, effects and unitaries.

% can one then turn to the question of what sorts of resolutions of the operation into more fine-grained operations can be achieved by post-selecting on the auxiliary system in the dilation.  
%Only if it is possible to understand the set of incoherence-preserving (or incoherent) operations as all and only those that admit of a dilation that is free in this broader sense can one then turn to the question of what sorts of resolutions of the operation into more fine-grained operations can be achieved by post-selecting on the auxiliary system in the dilation.  
\color{black}

\begin{comment}
%Rather, as we argued in Sec. ??, a given  set of free operations is only justified if each element admits of a dilation in terms of some restricted degree of control on the auxiliary system.  
%post-selecting on a single outcome, to realize a single term in the Kraus decomposition of the operation.  
 %However, one must show that this possibility remains for a {\em free dilation}.   
\color{red}
[The following is not clear]
 Given that no one has yet suggested how to understand incoherence-preserving operations as all and only those arising from some set of free dilations, it is not possible to settle the question of whether these dilations allow one to resolve the operation into Kraus terms. 
 %a restriction on experimental capabilities such that the free  dilations  of the incoherent operations has been provided, one cannot settle the question of whether one can post-select in the relevant fashion.  
  In the absence of any demonstration that such a resolution of the operation is possible,  there is no reason to suppose that  one should demand the incoherence-preserving property for every term in the Kraus decomposition.
\color{black}
\end{comment}

\section{Methods for deriving measures of coherence} \label{Sec:examples}
%\section{Measures of coherence according to the different proposals} \label{Sec:examples}
%\section{Examples of measures of asymmetry and coherence} \label{Sec:examples}

\color{red} 
%We are going to focus on how to find functions that are measures of TC-coherence and how to find functions that are measures of IP-coherence.  The latter class of functions are measures of coherence for any set of free operations that has the incoherence-preserving property.

%REVISE THIS SECTION! FOCUS ON HOW ALL PROPOSED MEASURES DO NOT DISTINGUISH THE SUBTLE CATEGORIES, THEY ARE JUST IP MEASURES.
\color{black}
%\red{[Should we talk about TC-coherence or unspeakable coherence?] }\color{cyan} It should be TC-coherence! Change the following. \color{black}
In this section, we consider measures of coherence for the various different sets of free operations described in the article, in particular, TC-coherence, DC-coherence and IP-coherence.  We have already noted, in Propositions \ref{prop11} and \ref{propIncDC}, that if one considers the same choice of preferred subspaces, then every measure of  IP-coherence is also a measure of DC-coherence which is also a measure of TC-coherence.  \color{black} Because a measure of TC-coherence is a measure of translational asymmetry, one can immediately obtain many interesting measures of coherence by simply appealing to the known measures of asymmetry. Indeed, we show that most of the recent proposed measures of coherence in the BCP proposal correspond to measures of asymmetry that have been previously studied in Refs.~\cite{Marvian_thesis} and \cite{Noether}. Because of the strictness of the inclusions in Proposition \ref{prop11}, however, only a subset of the measures of TC-coherence are measures of DC-coherence or IP-coherence.  We will highlight some examples of the strictness of the inclusion.

In addition, we present certain general techniques for deriving measures of coherence, adapted from ideas introduced in the context of asymmetry theory,
%and show how they can be generalized to derive not only measures of TC-coherence, but measures of DC-coherence as well.  
and we review some of the most important examples of measures of coherence.
 (Note, however, that the list we provide is not complete; there are many known measures of asymmetry that we do not review here.  See e.g. \cite{toloui2011constructing,toloui2012simulating, gour2008resource, Marvian_thesis}.)  

\color{black}

%\subsection{Measures of asymmetry from information monotones}
%\subsection{Measures of unspeakable coherence based on measures of information}
\subsection{Measures of coherence based on measures of information}

%In Sec.~\ref{infodual}
In Sec.~\ref{JustificationsTC}, we showed that the resource for phase estimation is TC-coherence.  We saw that there is a duality between the problem of state transformation in the resource theory of TC-coherence on one hand, and the problem of processing the classical information encoded in the phase shifted versions of the state.   This was formalized by proposition \ref{prop14}.   This duality can be leveraged to derive measures of  unspeakable coherence from measures of information.

We begin by recalling the definition of a measure of the information content of a quantum encoding of a classical message. 
%information encoded in an ensemble of quantum states.
\begin{definition}\label{measureofinformation}
A function $I$ from sets $\{ \rho(x) \}_x$ of quantum states to the reals is a measure of the  information about $x$ contained in the quantum encoding if \\
%\textbf{A}, \textbf{B} or \textbf{C} if it satisfies the followings \\
\noindent (i) For  any trace-preserving quantum operation $\mathcal{E}$ 
%and any ensemble of states $\{ \rho(x) \}_x$,
%$\{ p_k, \rho_k \}_k$,
%in the corresponding resource theory,
 it holds that $I(\{ \mathcal{E}(\rho(x))\}_x )\le I(\{ \rho(x)\}_x )$.\\
 % it holds that $I(\{ p_k, \mathcal{E}(\rho_k)\}_k )\le I(\{ p_k,\rho_k\}_k )$.\\
\noindent (ii) For any trivial encoding, the elements of which are indistinguishable, $I$ takes the value 0.
\end{definition}

If we define a function $f_I$ on states to be such that its value on a state is the measure of information $I$ on the set of states obtained by acting on the state with all elements of the translation group (i.e. the orbit under translations of that state),
\begin{equation}\label{TCcoherenceFromInfo}
f_I(\rho) = I( \{ \mathcal{U}_{L,x}( \rho )\}_{x\in \mathbb{R}} ),
\end{equation}
then by proposition~\ref{prop14} and the definitions of measures of TC-coherence and measures of information, we see that $f_I$ is a measure of TC-coherence if  $I$ is a measure of information.

In particular, one can obtain measures of TC-coherence from measures of the distinguishability of any pair of states in the translational orbit of $\rho$ \cite{Marvian_thesis, Noether}.

A similar sort of consideration allows us to infer measures of DC-coherence from measures of information.  
In particular, for any state $\rho$, one can encode 1 bit of classical information $b$ as $b\rightarrow \rho_b$ where $\rho_0=\rho$ and $\rho_1=\mathcal{D}(\rho)$. Then, it can be easily seen that for any measure of information $I$, 
\begin{equation}\label{DCcoherenceFromInfo}
f(\rho) = I( \{\rho, \mathcal{D}(\rho)\}),
\end{equation}
i.e., the amount of information about bit $b$ that can be transferred using this encoding is  a measure of DC-coherence. This follows from the fact that, by definition, if $\rho$ can be transformed to $\sigma$ by a dephasing-covariant trace-preserving quantum operation, then  the same quantum operation transforms $\mathcal{D}(\rho)$ to $\mathcal{D}(\sigma)$, and hence there is a trace-preserving quantum operation which transforms the $\rho$-based encoding of $b$ to the $\sigma$-based encoding of $b$.
%one encoding of $b$ to the other. Therefore, the latter encoding cannot transfer more information about the bit $b$ 
But this in turn implies that the $\sigma$-based encoding cannot have more information about $b$ than the $\rho$-based encoding.  (This can be thought of as the analogue of the easy direction of the duality in proposition \ref{prop14}). 

\color{black}

We now consider various measures of coherence that can be derived from measures of information. 
%via proposition \ref{prop14} and the recipe described below it.\color{red} [Provide this recipe here?]\color{black}

%\color{blue} [We can just use the dephasing map here]
 
%In some of the  following examples we use the \emph{uniform twriling} quantum channel $\mathcal{N}$, defined as

%[Placeholder: Find a way to make the different numbered items more visually distinct on the page.
%Mention connection to the Fisher information?]

 \color{black}
(i) 
If one uses the duality described by Eq.~\eqref{TCcoherenceFromInfo} with the Holevo quantity  as the measure of information, then, following the argument of Ref.~\cite{Noether} and assuming a uniform probability density over the translations, one can prove that the function 
 \begin{align}
\Gamma(\rho)\equiv S(\mathcal{D}({\rho}))-S(\rho)\ .
\end{align}
where $S$ is the von Neumann entropy, is a measure of TC-coherence.  

Meanwhile, if one uses the duality described by Eq.~\eqref{DCcoherenceFromInfo} with the quantum relative entropy, $S(\rho||\sigma)\equiv \text{tr}(\rho\log \rho)-\text{tr}(\rho\log \sigma)$,  as the measure of information, then one can prove that the function 
 \begin{align}
\Gamma'(\rho)&= S\left(\rho||\mathcal{D}({\rho})\right)\ ,
\end{align}
is a measure of DC-coherence.  

Finally, the function
\begin{align}
\Gamma''(\rho)=\min_{\sigma\in \mathcal{I}}  S\left(\rho||\sigma\right),
%&= S\left(\rho||\mathcal{D}({\rho})\right)\ ,
\end{align}
the minimum relative entropy distance of $\rho$ to the set of incoherent states, is clearly nonincreasing under incoherence-preserving operations and is therefore a measure of IP-coherence.   It is also nonincreasing under incoherent operations~\cite{Coh_Plenio}.
%, as was first noted in Ref.~\cite{Coh_Plenio}, and hence is also a measure of coherence in the BCP approach, where it has been dubbed the \emph{relative entropy of coherence}. 

It turns out that the  three measures are all equivalent, that is, 
 \begin{align}
\min_{\sigma\in \mathcal{I}}  S\left(\rho||\sigma\right) = S\left(\rho||\mathcal{D}({\rho})\right) = S(\mathcal{D}({\rho}))-S(\rho)\ ,
\end{align}
a fact that has been noted by many authors~\cite{aberg, GMS09, Coh_Plenio}.
%A simple proof can be found  in Ref.~\cite{GMS09}  (see proposition 2).  
 So this is an example of a function that is a measure of coherence in all of the approaches we have considered.

%This function was called the {\em Holevo asymmetry measure} in Ref.~\cite{Noether}.  
In the context of asymmetry theory, this function was first introduced by Vaccaro et al., who called it simply {\em the asymmetry}~\cite{vac2008}.   It was  further studied as a measure of asymmetry in Ref.~\cite{GMS09} (see proposition 2) and it was first derived from the Holevo information in Ref.~\cite{Noether}, where it was called the {\em Holevo asymmetry measure}. Ref.~\cite{Noether} also proposed that it and other measures of asymmetry could be used to quantify coherence.
% wher this method of derivation was first described.  
 BCP noted in Ref.~\cite{Coh_Plenio}  that this function was monotonically nonincreasing under incoherent operations and hence a measure of coherence in their approach, 
%, as was first noted in Ref.~\cite{Coh_Plenio}, and hence is also a measure of coherence in the BCP approach, 
where it has been dubbed the \emph{relative entropy of coherence}. 
%BCP dubbed this measure the \emph{relative entropy of coherence}.  
Several years prior both to BCP's work and the work which studied it as a measure of translational asymmetry, 
%  studying this function as a measure of coherence, and the observation that measures of asymmetry yield measures of coherence
%to this work, however, and indeed prior to the 
%Indeed, several years before BCP \cite{Coh_Plenio},
 this function was proposed as a measure of coherence by \.{A}berg in a paper entitled ``Quantifying superposition"~\cite{aberg}.

\color{black}
(ii) 
If we start from Eq.~\eqref{TCcoherenceFromInfo} using the Holevo quantity, but where the probability distribution over translations is allowed to be arbitrary rather than uniform, it is possible to show that the following function is also a measure of TC-coherence:
\begin{equation}
\Gamma_p(\rho)=S\left(\mathcal{D}_p(\rho) \right)-S(\rho)\ ,
\end{equation}
where $p$ is an arbitrary probability density on the real line and $\mathcal{D}_p$ is a ``weighted twirling operation'' defined by
\begin{equation}\label{weight-twr}
\mathcal{D}_p(\cdot)=
\lim_{x_0\rightarrow \infty} \frac{1}{2x_0}\ \int_{-x_0}^{x_0}  dx\ p(x)\\  e^{-i x L}(\cdot)  e^{i x L}\ .
\end{equation}
This translates into the language of coherence a measure of asymmetry identified in Ref~\cite{Noether}.

\begin{comment}
In Ref.~\cite{Noether},   we proposed a generalization of  the function $\Gamma$ as 
\begin{equation}
\Gamma_p(\rho)=S\left(\mathcal{D}_p(\rho) \right)-S(\rho)
\end{equation}
where $p$ is an arbitrary probability distribution over the reals
%interval $(0,2\pi]$
 and $\mathcal{D}_p$ is a ``weighted twirling operation'' defined by
\begin{equation}\label{weight-twr}
\mathcal{D}_p(\cdot)=
%\int_{0}^{2\pi} d\theta\ p(\theta)\  e^{-i \theta N} (\cdot) e^{i \theta N}\ .
\lim_{x_0\rightarrow \infty} \frac{1}{2x_0}\ \int_{-x_0}^{x_0}  dx\ p(x)\\  e^{-i x L}(\cdot)  e^{i x L}\ .
\end{equation}

This is also obtained from Eq.~\eqref{TCcoherenceFromInfo} using the Holevo quantity, but where the probability distribution over translations is allowed to be arbitrary. \color{black}
\end{comment}

Note that for any symmetric state $\rho$ and any arbitrary probability density $p$, $\Gamma_{p}(\rho)=0$.  In Ref.~\cite{Noether}, it is shown that using a nonuniform probability density can be useful in some physical examples.

It turns out that for  a general probability density $p$, the function $\Gamma_p$ is not a measure of DC-coherence and hence not a measure of IP-coherence either.
%To see this, consider the case where $\theta$ is with probability 1/2 equal to zero and with probability 1/2 equal to $\pi$. In this case, the function $\Gamma_p$ is zero for the state $(|0\rangle+|2\rangle)/\sqrt{2}$, while it is non-zero for the state $(|0\rangle+|1\rangle)/\sqrt{2}$. But these two states are equivalent resources from the point of view of dephasing-covariant operations, and hence they should have the same measure of coherence, thereby showing that this $\Gamma_p$ cannot be a measure of coherence in the dephasing-covariance approach.

%Because $\Gamma_p$ for general $p$ need not be a measure of coherence in the dephasing-covariance approach, proposition~\ref{propIncDC} assures us that it need not be a measure of coherence in the BCP approach either. 

(iii)
Using Eq.~\eqref{DCcoherenceFromInfo} while taking as our measure of information the Holevo quantity for a quantum encoding of a classical bit but where the bit values have unequal prior probabilities, we can derive a measure of DC-coherence.  Specifically, for any $q\in [0,1]$, the function
\begin{equation}
\Gamma_q(\rho)=S\left(q \rho + (1-q) \mathcal{D}(\rho) \right)- qS(\rho) - (1-q) S(\mathcal{D}(\rho)),
\end{equation}
is a measure of DC-coherence.

(iv)  
Starting from the monotonicty of the  \emph{relative Renyi entropy}  under information processing~\cite{hayashi2006quantum}, one can use 
%proposition \ref{prop14} and 
Eq.~\eqref{TCcoherenceFromInfo} to show that the function
%derive the monotonicity under translationally-covariant operations of the function \bes
\begin{align}
S_{L,s}(\rho)&\equiv \text{tr}(\rho L^{2})-\text{tr}(\rho^{s}L \rho^{(1-s)}L)\\ &= -\frac{1}{2} \text{tr}\left([\rho^s,L][\rho^{1-s}, L] \right)
\end{align}
%\ees
for $0<s<1$ is a measure of TC-coherence.  The argument is provided, in the context of asymmetry theory, 
%s was first considered as a measure of asymmetry
 in Refs.~\cite{Noether} and \cite{Marvian_thesis}.
%In \cite{Noether} and \cite{Marvian_thesis}, we also used  proposition \ref{prop14} and Eq.~\eqref{TCcoherenceFromInfo} to  show that the function
%\bes
%\begin{align}
%S_{N,s}(\rho)&\equiv \text{tr}(\rho N^{2})-\text{tr}(\rho^{s}N \rho^{(1-s)}N)\\ &= -\frac{1}{2} \text{tr}\left([\rho^s,N][\rho^{1-s}, N] \right)
%\end{align}
%\ees
%is a measure of asymmetry for $0<s<1$. The monotonicty of this function under symmetric operations follows from proposition \ref{prop14} together with the monotonicty of   \emph{relative Renyi entropy} ~\cite{hayashi2006quantum} under information processing.

Interestingly, this quantity has been introduced before by Wigner and Yanase for $s=1/2$ ~\cite{wigner1963information} (and generalized by Dyson to arbitrary $s$ in $(0,1)$) and since then some of its interesting properties have been studied. However, its monotonicity under symmetric dynamics, and hence its interpretation as a measure of asymmetry, was not recognized in the past.
% (See Sec. \ref{Sec:History} in the appendix  for further discussion about the history and the properties of this function).

It has been claimed by Girolami  \cite{Girolami}  that this function is  a measure of coherence according to the definition of BCP
%Baumgratz et. al.
 \cite{Coh_Plenio},  that is, he claimed that it is non-increasing under incoherent operations. However, as is noted in \cite{du2015wigner} and \cite{marvian2015quantum}, this claim is incorrect. This can be  seen, for instance, by recognizing that in the case of pure states this function is equal to the variance of the observable $L$, but variance obviously is  not invariant under permutations of the eigenspaces of $L$. In other words, this function is only a measure of unspeakable coherence, not of speakable coherence.

Note that for any incoherent state $\rho$, it holds that $S_{L,s}(\rho)=0$.  Furthermore, for pure states, the Wigner-Yanase-Dyson skew information reduces to the variance of the observable $L$, that is, 
\begin{equation}
S_{L,s}(|\psi\rangle\langle\psi|)=  \langle\psi| L^{2}|\psi \rangle-\langle\psi| L|\psi \rangle^{2}.
\end{equation}
%Given that an incoherent mixture over the eigenspaces of $N$  has vanishing asymmetry according to this measure, while a superposition over these eigenspaces that is totally coherent has asymmetry $S_{N,s}$ equal to the variance over $N$, this asymmetry measure 
For a general mixed state, a nonzero variance over $L$ does not attest to there being coherence between the $L$ eigenspaces because it can also be explained by an incoherent mixture of the latter. The function $S_{L,x}$, on the other hand, seems to succeed in quantifying the amount of variance over $L$ that is coherent, which one might call the ``coherent spread'' over the eigenspaces of $L$. 
It is also worth mentioning that recently, Ref.~\cite{Girolami} has proposed  a method for measuring this quantity.

Interestingly, it has been noted that the function which is the average over $s$ of the Wigner-Yasane-Dyson skew information for index $s$, $\int_0^1 ds\  S_{L,s}(\rho)$, has a natural interpretation as the \emph{quantum fluctuations} of the observable $L$, i.e., as the difference between the total fluctuations $\langle\delta^2 L\rangle$ and the (classical) thermal fluctuations \cite{Frerot2015}. Furthermore, it has been shown that this quantity can be calculated in several interesting examples of many-body systems \cite{Frerot2015}.

(v)  
If the relative Renyi entropy is used in Eq.~\eqref{DCcoherenceFromInfo}, we can prove that the function
\begin{align}
S_{s}(\rho)&\equiv  \frac{1}{s-1} \log \left[ {\rm tr} \left( \rho^s [\mathcal{D}(\rho)]^{1-s} \right) \right]
\end{align}
for $0<s<1$ is a measure of DC-coherence. 

(vi) Following an argument presented in Ref.~\cite{Noether}, we can use Eq.~\eqref{TCcoherenceFromInfo} to show that the function
%Using proposition \ref{prop14} and the recipe for constructing measures of asymmetry, in Ref.~\cite{Noether}   we showed that the function
\begin{equation}
F_{L}(\rho)\equiv\|[\rho,L]\|_1
\end{equation}
where $\|\cdot\|_1$ is the trace norm (i.e., the sum of the singular values) is a measure of TC-coherence.   This measure formalizes the intuition that the coherence of a state with respect to the eigenspaces of $L$ can be quantified by the extent to which the state fails to commute with $L$.   The state $\rho$ has coherence relative to the eigenspaces of $L$ if and only if $[\rho,L]\ne 0$ so in retrospect one would naturally expect that \emph{some} operator norm of the commutator $[\rho,L]$ should be a measure of TC-coherence.  This intuition does not, however, tell us \emph{which} operator norm to use.  Our result shows that it is the trace norm that does the job~\footnote{Note that for $s=1/2$, we have $S_{L,s=1/2}=\|[\rho^{1/2}, L]\|_2/2$ where $\|\cdot\|_2$ is the Frobenius norm, that is, the sum of the squares of the singular values. So, both $\|[\rho^{1/2}, L]\|_2$ and $\|[\rho, L]\|_1$ are measures of asymmetry.}.

The function $F_{L}$  reduces to a simple expression for pure states: it is proportional to the square root of the variance of the observable $L$, that is,
\begin{equation}
F_{L}(|\psi\rangle\langle\psi|)=2 \left( \langle\psi| L^{2}|\psi \rangle-\langle\psi| L|\psi \rangle^{2} \right)^{1/2}.
%\sqrt{\text{Var}_{L}(\psi)}.
\end{equation}
Again, we see that a mixture over the eigenspaces of $L$ has vanishing $F_{L}$, while a coherent superposition over these eigenspaces has $F_L$ that depends only on the variance over $L$.  Consequently, this coherence measure, like $S_{L,s}$, in some sense quantifies the coherent spread over the eigenspaces of $L$. 

We see that, when restricted to pure states, the function $F_{L}$ is a monotonic function of $S_{L,s}$.  Given that the latter is neither a measure of DC-coherence nor a measure of IP-coherence, as argued above, it follows that the former is not either.
%This function is not a measure of coherence in the dephasing-covariance approach \color{red} because it violates the criterion in proposition \ref{Prop:criterion}.   It then follows from proposition~\ref{prop11} that it is not a measure of coherence in the BCP approach either. \color{black}

(vi) 
%Using  proposition \ref{prop14} and the recipe below it,
Arguments in Refs.~\cite{Marvian_thesis} and \cite{Noether} show that  the function 
\begin{equation}
R_p(\rho)\equiv\|\rho-\mathcal{D}_p(\rho)\|_1 
\label{measurev}
\end{equation}
is a measure of TC-coherence for an arbitrary probability distribution $p$ on the reals. 

In the special case where $p$ is the uniform distribution, the function becomes 
\begin{equation}
R(\rho)\equiv\|\rho-\mathcal{D}(\rho)\|_1.
\label{measurevuniform}
\end{equation}
%i.e. where $R_p(\rho)=\|\rho-\mathcal{D}(\rho)\|_1 $, 
The latter is a measure of DC-coherence, as we show in the next section.  However, for a general distribution $p$, the function $R_p$ can increase under dephasing-covariant operations, and hence it is not a measure of DC-coherence. It then follows from proposition~\ref{prop11} that $R_p$ for general distribution $p$ is not a measure of IP-coherence either.

\subsection{Measures of coherence based on mode decompositions}\label{Sec:modes}

In Sec.~\ref{Modes}, we introduced the concept of the \emph{mode decomposition} of states and operations, first introduced in Ref.~ \cite{Modes} as a useful method in the resource theory of asymmetry.  In the language of mode decompositions, the translationally-covariant operations are those such that a given mode component of the input state is mapped to the corresponding mode component of the output state (proposition \ref{defnmodes}).
This implies, in particular, that one can only generate a given output if the input contains all of the necessary modes.

%The central idea is to study the decomposition of states, quantum operations and  measurements into different modes, which we call \emph{modes of asymmetry}. Under symmetric quantum operations, a given mode of the input is mapped to the corresponding mode of the output, implying that one can only generate a given output if the input contains all of the necessary modes. By defining monotones that quantify the asymmetry in a particular mode, we also derive quantitative constraints on the resources of asymmetry that are required to simulate a given asymmetric operation. This approach  provides a new insight into the structure of the resource  theory of asymmetry and provides general explanations for the special observations that have been made before. 

%This method has been used in \cite{lostaglio2015quantum} and \cite{lostaglio2015description} to study coherence in the context of quantum thermodynamics. 

\begin{comment}
\end{comment}

Based on this observation, Ref.~\cite{Modes} noted that one can define a family of asymmetry measures which quantify the amount of asymmetry in each mode.  By considering translational symmetry, we obtain measures of TC-coherence. In particular, using the monotonicity of the trace-norm under trace-preserving completely positive maps, we find that for each $\omega \neq 0$, the function
\begin{equation}\label{MeasureModek}
\left\|\mathcal{P}^{(\omega)}(\rho)\right\|_1
%=\left\|\sum_n \Pi_{n+k}(\cdot)\Pi_{n}\right\|_1
\end{equation}
is a measure of TC-coherence (the $\omega=0$ case yields a constant function and so is uninteresting). Indeed, we find 
%\color{red} [we should provide the proof of this] \color{black} 
that any linear function of modes can lead to a different measure of TC-coherence. In other words, for any set of complex numbers $\{c^{(\omega)}\}$, the function 
\begin{equation}\label{MeasureLinearModes}
\left\|\sum_{\omega\in\Omega} c^{(\omega)}\mathcal{P}^{(\omega)}(\rho) \right\|_1 ,
\end{equation}
is a measure of TC-coherence, where $\Omega$ is the set of modes corresponding to the generator $L$. 

%An interesting special case is where we choose $c^{(0)}=0$  and $c^{(\omega)}=1$ for all $\omega\neq 0$. In this case,
%\beq
%\sum_{\omega\in\Omega} c^{(\omega)}\mathcal{P}^{(\omega)}(\rho)=\rho-\mathcal{D}(\rho)=\rho^{\rm offd}\ .
%\eeq
%Therefore, in this special case, the measure of TC-coherence in Eq.~(\ref{MeasureLinearModes}) is equal to  the function $R(\rho)$ of Eq.~\eqref{measurevuniform} and consequently is also a measure of DC-coherence and IP-coherence. 

%[It would be better to show that it is a measure of DC-coherence using modes.]
Proposition~\ref{modecharDCops} established that dephasing-covariant operations preserve the diagonal and off-diagonal modes of a state. 
%A quantum operation  $\mathcal{E}$ is dephasing-covariant relative to a preferred set of subspaces if and only if it preserves the diagonal and off-diagonal modes.  Formally, the condition is that  whenever $\mathcal{E}(\rho)=\sigma$, we have $\mathcal{E}(\rho^{\rm diag})=\sigma^{\rm diag}$  and $\mathcal{E}(\rho^{\rm offd})=\sigma^{\rm offd}$ .
 It follows that 
%But then, by the monotonicity of the trace-norm under trace-preserving completely positive maps, it follows that the function
\begin{equation}\label{Measureprojectoffd}
\left\|\rho^{\rm offd} \right\|_1 = \left\|[\mathcal{I}_\text{id} - \mathcal{D}](\rho)\right\|_1
%= \left\|\rho - \mathcal{D}(\rho)\right\|_1
\end{equation}
is nonincreasing under dephasing-covariant operations and hence is a measure of DC-coherence.  Eq.~\eqref{Measureprojectoffd} is just 
 %$\left\|\rho^{\rm offd} \right\|_1 = \left\|\rho - \mathcal{D}(\rho)\right\|_1$, we see that this is simply the measure 
 $R(\rho)$ of Eq.~\eqref{measurevuniform}, and so these considerations have only provided an independent way of seeing that it is a measure of DC-coherence. 
 %hence also a measure of IP-coherence. 

Recall that the space of off-diagonal operators is equal to the direct sum of the spaces of mode-$\omega$ operators for $\omega \ne 0$, $\mathcal{B}^{\rm offd} = \bigoplus_{\omega \ne 0} \mathcal{B}_{\omega}$ (Eq.~\eqref{relateBoffdtoBk}), which implies that the superoperator that projects on the one space also projects onto the other, that is, 
\beq
%\mathcal{P}^{\rm offd} 
\mathcal{I}_\text{id} - \mathcal{D}= \sum_{\omega \ne 0} \mathcal{P}_\omega .
\eeq
It follows that the function \eqref{Measureprojectoffd} is a special case of the function \eqref{MeasureLinearModes} where we choose $c^{(0)}=0$  and $c^{(\omega)}=1$ for all $\omega\neq 0$, thereby confirming that this particular measure of DC-coherence is also measure of TC-coherence, as Proposition \ref{prop11} requires. 

%Using mode decompositions, we can furthermore confirm that not every  measure of TC-coherence is a measure of DC-coherence.The function \eqref{MeasureModek} for some particular $\omega$ is an example.  It can increase under dephasing-covariant operations because these can move weight from other mode components into the mode-$\omega$ component. 

 \color{black}

However,  measures of TC-coherence based on Eq.~(\ref{MeasureLinearModes}) will not, in general, be measures of DC-coherence or IP-coherence. In particular, the function \eqref{MeasureModek} for some particular $\omega$ is an example.  It can increase under dephasing-covariant or incoherence-preserving or incoherent operations because these can move weight from other mode components into the mode-$\omega$ component.
% So, measures of TC-coherence based on the mode decomposition idea, in general, will not be measures of speakable coherence. 
This is yet another proof of the strictness of the inclusion in proposition \ref{prop11}.  

Interestingly, measures of TC-coherence of the form of Eq.~(\ref{MeasureModek}) are closely related to a method which is regularly used in NMR for characterizing the coherence of states. Here, the relevant observable $L$ is the magnetic moment in the direction of the quantization axis. The modes corresponding to this observable, i.e., the differences of its  eigenvalues, are integers.  Then, using NMR techniques, one can experimentally measure functions 
\begin{equation}
\left\|\mathcal{P}^{(k)}(\rho)\right\|_2=\sqrt{\Tr(\mathcal{P}^{(k)}(\rho) {\mathcal{P}^{(-k)}(\rho)})}\ ,
%=\left\|\sum_n \Pi_{n+k}(\cdot)\Pi_{n}\right\|_1
\end{equation}
for integer $k$, where $\|\cdot \|_2$ is the Frobenius norm \cite{cappellaro2007simulations, cho2006decay, cappellaro2014implementation}. Strictly speaking, these functions are not measures of TC-coherence, i.e. they can increase by translationally-covariant operations such as partial trace. However, these functions provide useful lower and upper bound on $\left\|\mathcal{P}^{(k)}(\rho)\right\|_1$, which {\em are} measures of TC-coherence, namely,
\begin{equation} \label{modeNMR}
\left\|\mathcal{P}^{(k)}(\rho)\right\|_2 \le \left\|\mathcal{P}^{(k)}(\rho)\right\|_1\le \sqrt{d}  \left\|\mathcal{P}^{(k)}(\rho)\right\|_2\ ,
\end{equation}
where $d$ is the dimension of the Hilbert space.

\color{black}
\section{Concluding Remarks}\label{Discussion}

We have shown that the translationally-covariant operations define a useful resource theory of unspeakable coherence.  The constraint of translationally covariance is seen to arise naturally in many physical scenarios, each motivated by a different application of unspeakable coherence. In the case of speakable coherence, we have explored two sorts of approaches, one based on dephasing-covariant operations and the other based on operations that are incoherence-preserving (the BCP approach is a variant of the latter where a free operation is one that has a Kraus decomposition each term of which is incoherence-preserving).  It is currently unclear whether there are physical scenarios that pick out one of these sets of operations as the freely-implementable ones.  We have, however, constrained the shape of a putative physical justification.

%We would not be surprised if it turned out, in fact,
A possibility worth considering is that speakable coherence, unlike its unspeakable counterpart, {\em cannot} be usefully understood as a resource.   Perhaps the resource that powers tasks involving speakable information is not, in fact, a resource of {\em coherence}, but rather a different property of quantum states~\footnote{This is analogous to how it is asymmetry rather than entanglement that is the resource powering quantum metrology~\cite{Noether}.}.   Even if this different property {\em implied} having some coherence in the state, it might be that coherence was merely {\em necessary} but not sufficient for acheiving the task. In this case, it would be incorrect to identify coherence as the resource powering the task.

\begin{comment}
%We would not be surprised if it turned out, in fact,
It is possible that a speakable notion of coherence cannot be usefully understood as a resource.   Once one eliminates all of those tasks for which the relevant resource is {\em unspeakable coherence}, it would seem that what remains are information-theoretic tasks for which the nature of the degree of freedom encoding the information is not relevant---so-called ``fungible'' or ``speakable'' information [cite Bartlett, Rudolph, Spekkens RMP].  The question, therefore, is whether the resource that powers various quantum information-processing taks is, in fact, a resource of {\em coherence}.   It might well be that the resource for many such tasks is a different property of quantum states.\footnote{This is analogous to how it is asymmetry rather than entanglement that is the resource powering quantum metrology~\cite{Noether}.
%; for certain composite systems, a state being asymmetric implies that it is entangled, entanglement is sometimes a necessary but not a sufficient condition for quantum metrology ,  entanglement is sometimes a necessary condition for asymmetry
}   Even if this different property {\em implied} having some coherence in the state, it might be that coherence was merely {\em necessary} but not sufficient for acheiving the task. In this case, it would be incorrect to identify coherence as the resource powering the task. 
\end{comment}

This is indeed the case for at least one model of quantum computation, namely, the state injection model \cite{bravyi2005universal}.  Here, the circuit consists entirely of Clifford gates---i.e., those that take the set of Stabilizer states to itself---and one allows injection of nonStabilizer states.  The injection of nonStabilizer states is critical for achieving universal quantum computation because, as the Gottesman-Knill theorem shows, a Clifford circuit can be efficiently simulated classically \cite{nielsen2000quantum}.  Note that a Clifford circuit acting only on Stabilizer states is efficiently simulatable {\em even though} the states throughout the computation have coherence relative to the computational basis.   Clearly, then, coherence is not sufficient for achieving universal quantum computation in the state injection model.  
Furthermore, for the case where the systems have dimension corresponding to an odd prime, it has been shown that a {\em necessary} condition on the injected states for achieving universal quantum computation is that the circuit should fail to admit of a noncontextual hidden variable model \cite{howard2014contextuality}.  A Clifford circuit acting only on Stabilizer states admits of such a model via Gross's discrete Wigner representation \cite{gross2006hudson, spekkens2008negativity}. As such, the failure of noncontextuality is a much more stringent requirement than the presence of coherence and, unlike coherence, it is a viable candidate for the resource that powers quantum computational advantages in the state injection model.

The prospects for speakable coherence as a resource are better for cryptographic tasks.
For instance, the BB84 quantum key distribution protocol \cite{bennett1984proceedings} requires non-orthogonal states and therefore  requires the preparation of states that have coherence relative to the preferred basis (regardless of one's choice of preferred basis).  Furthermore, the BB84 protocol uses only stabilizer states and measurements, so that the latter are sufficient for the protocol, unlike the situation for universal quantum computation in the state injection model. 

%Indeed, it is conceivable that there are many cryptographic and communication tasks for which coherence is the resource.  Therefore, 

 It seems, therefore, that speakable coherence may be a resource for some quantum information-processing tasks and not for others. Greater clarity on the applications of speakable coherence would further the project of finding which sets of free operations that can define speakable coherence and which are physically justified. 
%Clarifying just how to understand speakable coherence 
%Therefore, despite the caveats above, coherence might still be the relevant resource for some quantum information-processing tasks. 
%there is currently no demonstration that coherence is {\em not} the relevant resource for any quantum information-processing task, 
%It seems worthwhile, therefore, to consider the problem of understanding speakable coherence as a resource. 

%\color{red}
%[Where should we cite this work by Vedral:  \cite{yadin2014new}?]\color{black}
%\color{black}

\section{Acknowledgements}
We acknowledge helpful discussions with Gilad Gour, Eric Chitambar, Paola Cappellaro,
Tommaso Roscilde,  Gerardo Adesso, Alex Streltsov, Julio I. de Vicente, and Martin Plenio. 
%a useful correspondence with the authors of Ref.~\cite{Coh_Plenio}.   
Research at Perimeter Institute is supported in part by the Government of Canada through NSERC and by the Province of Ontario through MRI. IM acknowledges support from grants ARO W911NF-12-1-0541 and NSF CCF-1254119. 

\emph{Note Added}: During the preparation of this article, we became aware of independent work by Gour and Chitambar,  which also studies the physical relevance of incoherent operations  and the class of dephasing-covariant operations \cite{chitambar2016critical}.
 \color{black}
\bibliography{Ref_v11}

\end{document}